\documentclass[12pt,letterpaper]{article}
\usepackage{fullpage}

\input{_preamble.sty}
\input{_graphics.sty}

\usepackage[subtle]{savetrees}

\begin{document}
\title{``Post'' Pre-Analysis Plans: \\Valid Inference for Non-Preregistered Specifications\thanks{We are grateful to Alberto Abadie, Isaiah Andrews, and Anna Mikusheva for their guidance and support throughout this project. We thank Sarah Moon, Ashesh Rambachan, and Jonathan Roth for valuable feedback and suggestions. We also thank Abhijit Banerjee, Esther Duflo, Ben Olken, and participants in lunch workshops at MIT for early conversations that helped situate this work within applied research settings. Finally, we are especially grateful to Frank Schilbach for generous and repeated feedback through the life of this project. Vod gratefully acknowledges support from the Jerry A. Hausman Fellowship.}}

\author{Reca Sarfati\thanks{Department of Economics, Massachusetts Institute of Technology, sarfati@mit.edu.} \and Vod Vilfort\thanks{Department of Economics, Massachusetts Institute of Technology, vod@mit.edu.}}

\date{\today}

\maketitle

\begin{abstract}
\noindent Pre-analysis plans (PAPs) have become standard in experimental economics research, but it is nevertheless common to see researchers deviating from their PAPs to supplement preregistered estimates with non-prespecified findings. While such ex-post analysis can yield valuable insights, there is broad uncertainty over how to interpret---or whether to even acknowledge---non-preregistered results. In this paper, we consider the case of a truth-seeking researcher who, after seeing the data, earnestly wishes to report additional estimates alongside those preregistered in their PAP. We show that, even absent ``nefarious'' behavior, conventional confidence intervals and point estimators are invalid due to the fact that non-preregistered estimates are only reported in a subset of potential data realizations. We propose inference procedures that account for this conditional reporting. We apply these procedures to \cite{bessone2021economic}, which studies the economic effects of increased sleep among the urban poor. We demonstrate that, depending on the reason for deviating, the adjustments from our procedures can range from having no difference to an economically significant difference relative to conventional practice. Finally, we consider the robustness of our procedure to certain forms of misspecification, motivating possible heuristic checks and norms for journals to adopt.
\end{abstract}

\newpage

\section*{Introduction}
Pre-analysis plans (PAPs) are time-stamped, publicly accessible documents that preregister hypotheses, experimental design, and planned statistical analyses before data are observed. Originating in clinical trials, PAPs have been championed as a key tool for improving credibility and transparency in experimental economics \citep{clinicalPAPs2004,olken2015promises}. Following the 2013 launch of the American Economic Association’s RCT Registry—a central repository tracking ongoing, completed, and withdrawn trials—the number of preregistrations with PAPs has grown remarkably \citep{ofosu2023pre}. Despite this growth, uncertainty remains about their requisite detail and comprehensiveness, as well as the extent to which researchers should be allowed to deviate from preregistered choices \citep{olken2015promises,banerjee2020praise}. In a 2023 survey of experimental economists, \citet{imai2025} report that while 83\% support deviations provided transparent disclosure, views diverge sharply on acceptable scope—ranging from unrestricted latitude to narrowly circumscribed departures. Overall, 74\% expressed a desire for clearer norms and guidance for how PAPs should be drafted and reviewed.

At root, the core tension underlying arguments for and against PAPs turns on the permitted scope of deviation: On the one hand, strict adherence limits researchers' degrees of freedom for data mining, specification searching, and ex post rationalization \citep{Simmons2011FalsePositive,Wicherts2016dof}. Formally, strict adherence sets the analysis rule ex ante, ensuring nominal frequentist guarantees (e.g., size, coverage) hold under identical replications of the experiment \citep{olken2015promises,kasy2023optimal}.
On the other hand, strict adherence constrains researchers' flexibility to handle unforeseen issues that arise during implementation, as well as ability to explore novel insights that emerge over the course of the study. Considering the high overhead costs to running a new experiment, not pursuing auxiliary analyses when the data suggest them also comes at a serious social cost \citep{Miguel2021TransparencyJEP, CoffmanNiederle2015, olken2015promises, banerjee2020praise}. In practice, writing a PAP that is sufficiently exhaustive to anticipate all possible contingencies is extremely time and labor intensive, and deviations are extremely common \citep{OfosuPosner2020PAPsPAndP, ofosu2023pre, Brodeur2024reduce}. The pervasiveness of such deviations presents an urgent need to understand how the reporting of preregistered estimates (henceforth, \textit{PAP estimates}) formally differs from that of non-preregistered estimates (henceforth, \textit{post estimates}), and by extension, how the latter ought to be presented in papers and interpreted by readers.

In this paper, we develop a framework to formalize the statistical properties of deviations from the PAP, thereby allowing us to provide certain inferential guarantees for the resulting post estimates. In doing so, we offer a means to resolve the core tension above: Researchers can still enjoy the benefits of PAP estimates, while leaving open the option to validly report and interpret post estimates.
Our approach frames deviations as instances of \textit{selective reporting}, which differ from prespecified analyses specifically in that the decision to report can depend on realized data. For example, a non-preregistered regression coefficient might be of interest only when it is found to be significant and large; a new specification pooling treatment arms might be reported only if the preregistered specification fails to reject the null; alternately, a non-preregistered economic model might be tested only if the data violate a core assumption of the preregistered model.
In modeling deviations in this way, we formalize an existing norm where researchers are expected to provide a rationale for deviating. Here, this rationale is expressed through what we term the \textit{deviation set}, which corresponds to the subset of data realizations satisfying the conditions for reporting a given post estimate. 
Using the deviation set, we show how one may obtain confidence intervals and point estimators that provide valid inference \textit{conditional} on the event that the researcher deviates. We argue that researchers should leverage this deviation set to report corrected (i.e., conditional) inferences alongside conventional (i.e., unconditional) inferences. We therefore provide general inference procedures for obtaining confidence intervals that have correct conditional coverage and point estimators that are conditionally unbiased.

The first and primary contribution of this paper is the framing of PAP deviations as selective reporting, which provides a formal language for addressing the concerns faced by researchers when reporting post estimates. In viewing deviations through this lens, our framework is able to leverage results and algorithms from the conditional inference literature to correct for the statistical distortions arising from PAP deviations, ensuring valid inference for post estimates. 

To develop our results, we consider a model where the estimates are normally distributed with a known covariance matrix. The inference procedures that we propose are based on the observation that, since a deviation only occurs in the data realizations from its deviation set, the corresponding post estimates are no longer normally distributed, owing to dependence between the post estimate and its deviation event. In the leading case where the deviation set is based on realizations of PAP estimates, this dependence is governed by the correlation between the PAP and post estimates. After accounting for this dependence, the distribution of post estimates corresponds to a class of truncated normal distributions, with truncation regions that depend on the deviation set. We show that for a broad class of polyhedral deviation sets, which encompass the deviations that occur in practice, this yields computationally tractable test statistics from which conditionally valid confidence intervals and point estimators can be derived---results from \cite{Pfanzagl1994} further imply the optimality of these procedures for our setting.

As a second contribution, we relax the assumptions of normality and known covariance to show that feasible analogues of the proposed procedures---where one plugs in asymptotically normal estimates and consistent covariance matrix estimators---are asymptotically valid in large samples, uniformly over a broad class of data generating processes. Thus, researchers can safely implement these plug-in procedures, which we detail explicitly. While such uniformity results have been established in \citet{markovic2017unifying}; \citet{tian2017asymptotics}; \citet{andrews2024inference}; and \citet{mccloskey2024hybrid} for the case of one polyhedra, our results establish uniformity for the general case of multiple polyhedra.\footnote{\citet{tibshirani2018uniform} develop uniform asymptotics for a model where the events of interest take the form of unions of polyhedra, but (i) impose that underlying mean parameters shrink in accordance with the sample size and (ii) employ procedures that condition on individual polyhedra before aggregating. The latter can lead to efficiency loss in the normal model, relative to the optimal procedure that we employ \citep{fithian2014optimal}.} In proving uniformity for unions of multiple polyhedra, we allow for the complexity and expressiveness of deviation sets that arise in practice—for instance, conditioning on two-sided significance requires conditioning on a union of two polyhedra. This uniformity result is of independent interest, beyond the PAP setting. For example, the selective inference literature is often interested in conditional inference for lasso model coefficients conditional on the lasso selection event, which can be represented as a union of polyhedra \citep{lee2016exact}.

We apply our framework to \citet{bessone2021economic}, which examines the economic effects of increased sleep among the urban poor. We construct three plausible deviation sets, based directly on text from the paper, and show that—depending on the deviation set and the realized data—accounting for conditional reporting can range from no change to economically meaningful departures from conventional practice. Intuitively, when the realized data lie near a reporting boundary implied by the deviation set, conditioning can have a large impact; when the data lie comfortably inside the region, the adjustments to conventional inference are minimal.

As an additional contribution, we examine the robustness of our procedure to certain forms of misspecification of the deviation set, encompassing cases of both strategic misreporting and earnest mistakes. First, we show that inference remains valid—albeit less precise—when the researcher can only describe a subset of their deviation set local to the realized data. Second, we present sensitivity analysis to assess robustness across alternative deviation sets, motivating potential norms that journals and researchers should adopt, such as agreed-upon significance cutoffs for reporting non-prespecified findings (e.g., a default significance level of 5\%).

To ease the adoption of our approach, we give examples of how to specify deviation sets for the most common rationales, demonstrating that the costs to formal articulation are low. For example, if a post estimate is reported because it is significant at the $5\%$ level, then the deviation set is characterized by the significance cutoff. We show that an immediate implication of our framework is that conventional and conditional inferences are equivalent when deviations occur (i) for reasons independent of the data (e.g., arrival of new econometric methods, honest mistakes, or oversights in PAP specification) or (ii) in events that are independent of the post estimates (e.g., deviations based on the results from a pilot study on an independent sample). In these special cases, there is no need to adjust one's conventional inferences.

\paragraph{Related Literature.}

Our paper contributes to an ongoing dialogue among economists on the costs and benefits of PAPs; see, for instance, \citet{McKenzie2012}, \citet{Humphreys2013}, \citet{CoffmanNiederle2015}, \citet{olken2015promises}, \citet{Glennerster2017}, \citet{ChristensenMiguel2018}, \citet{ChristensenFreeseMiguel2019}, \citet{banerjee2020praise}. 
These discussions are accompanied by an active literature meta-analyzing the empirical success of PAPs in achieving their stated goals of curbing publication bias and $p$-hacking, such as \cite{BrodeurCookHeyes2020AERMethodsMatter}, \cite{broduer2023unpack}, \cite{ofosu2023pre}, and \cite{Brodeur2024reduce}. On the whole, existing work has primarily centered around questions of (i) whether to use a PAP in the first place and (ii) its appropriate length, detail, and contents. Less explored, however, are the specific consequences of deviating from a PAP \textit{after} it is already preregistered (i.e., taken as fixed) and data observed. This question is the focus of our paper, isolated from considerations of publication bias and $p$-hacking.

There is also a literature discussing optimal PAP construction \citep{ludwig2019augmenting, banerjee2020theory, anderson2022highly, kasy2023optimal}. Our conditional inference framework instead takes the PAP as given and asks how one can validly interpret deviations from the PAP. That said, while we do not make formal statements in this paper about optimal PAP specification, we do provide suggestions to researchers and journals for potential norms to adopt. We leave formal consideration of the implications of our framework for optimal PAP construction to future work.

Methodologically, this paper draws from the conditional inference literature, particularly from work on optimal conditionally quantile-unbiased estimation for polyhedral conditioning events \citep{pfanzagl1979optimal, fithian2014optimal, lee2016exact, andrews2024inference, mccloskey2024hybrid}. The conditionally quantile-unbiased estimators form the building blocks for our proposed confidence intervals and point estimators. In terms of model setup, the asymptotic framework underlying our uniformity result most closely resembles the approaches of \cite{andrews2024inference} and \cite{mccloskey2024hybrid}, whose results can be used to prove the case of a single polyhedra. Here we prove the general case of multiple polyhedra.

\paragraph{Outline.} In Section \ref{d2:sec:setting}, we begin with a description of the research process surrounding PAP construction and deviations therefrom, using \citet{bessone2021economic} as an illustrative application. In Section \ref{d2:sec:model}, we introduce notation to formalizes this process; moreover, we interpret deviations through the lens of a simple Bayesian decision problem---though our inference results are valid without the Bayesian structure. In Section \ref{d2:sec:inference}, we introduce our conditional inference objectives in a model with normally distributed estimates, and present general procedures for constructing confidence intervals and point estimators that are conditionally valid. In Section \ref{d2:sec:feasible.inference}, we show how to implement these procedures in practice, allowing for non-normal estimates. In Section \ref{d2:sec:application}, we apply our results to \citet{bessone2021economic}. In Section \ref{d2:sec:robustness}, we discuss robustness of our proposed procedures to misspecification of the deviation set. We provide all proofs and supplementary results in the Appendix.

\section{Setting}\label{d2:sec:setting}
We begin with a general description of how PAPs are used in the status quo, accompanied by examples from the experimental economics literature. This descriptive timeline will motivate the formal model presented in Section \ref{d2:sec:model}. Throughout this paper, we focus on PAPs in the context of randomized control trials (RCTs), as (i) this is where PAPs are most used in economics, and (ii) it offers a well-defined division between preregistration and data collection. That said, these discussions in principle also apply to empirical research using preexisting datasets \citep{burlig2018improving,Miguel2021TransparencyJEP}, though such cases generally raise the additional challenge of verifying that researchers are indeed preregistering studies before analyzing the data\footnote{There do exist non-experimental settings where researchers are required to prespecify analyses prior to data access. For instance, restricted-use microdata such as those accessed through the U.S. Census Bureau often require submission and approval of detailed research proposals before any data extracts are made, with each additional data request or modification of scope typically necessitating a new proposal.} \citep{ChristensenMiguel2018}. To avoid distracting from the central focus of this paper, we take the delineation of ``pre'' and ``post'' data collection as given.

\paragraph{Priors.} Researchers embark on projects with priors informed by sources such as economic theory, existing literature, outcomes of a pilot study, discussions with experts, or first-hand observation. For example, \cite{bessone2021economic} benchmark the anticipated magnitude of their results as follows:
\begin{displayquote}
   ``While experimental evidence on the effect of increasing sleep in field settings is scarce, there is a widely held belief among researchers and the public that reducing sleep deprivation would lead to improvements in economic outcomes \citep{walker2017sleep}. To document these priors, we surveyed 119 experts from sleep science and economics who predicted sizable economic benefits, including a 7\% increase in work output, of increasing sleep by half an hour a night from the low levels observed in our setting.''
\end{displayquote}
At the highest level, priors can inform the researcher's choice of which hypotheses to test, estimands to consider, and outcomes to measure given theoretical salience or expected precision and effect size. Because empirical projects are constrained in scale and scope, priors can further discipline downstream design choices: how large of a sample to field; how many waves of follow-up to fund; which populations to emphasize; whether to invest in costly data enhancements; and how to allocate statistical power between estimating overall effects and probing heterogeneity or mechanism analysis.

\paragraph{Preregistration.} In advance of data collection, the researcher preregisters information about their study such as primary outcomes, experimental design, randomization method, randomization unit, clustering, and sample size (including total number of observations, number of clusters, and units per treatment arm). In addition to this information, the researcher may also register a PAP stipulating which specifications will be used to estimate quantities of interest. In practice, depending on the research topic and journal, there is variation in whether a PAP is required, and if so, to what level of specificity. At its simplest, a PAP might consist of a single list of linear regressions. On another extreme, a PAP might take the form of a complete contingent plan which maps every possible data realization to a particular set of specifications, mirroring the notion of a complete strategy in game theory. 

To illustrate, consider a simple example, shown in Figure \ref{fig:pap-tree}. After estimating the average treatment effect (ATE) of a deworming intervention on school attendance, a complete contingent PAP might stipulate that if the attendance effect is statistically insignificant, the researcher will analyze a particular set of secondary outcomes such as health measures (e.g., incidence of anemia, weight-for-age) to assess whether the intervention improved child well-being through channels other than schooling. By contrast, if the effect \textit{is} significant, the PAP specifies a set of persistence checks to perform, such as attendance in the following academic year.

\usetikzlibrary{arrows.meta,positioning,shapes.geometric}
\begin{figure}[htbp]
  \centering
    \begin{tikzpicture}[
      node distance=12mm and 10mm,
      >=Latex,
      every node/.style={font=\small},
      start/.style={rectangle, rounded corners, draw, align=center, inner sep=3pt, text width=3.6cm},
      decision/.style={diamond, draw, aspect=1.4, align=center, inner sep=1pt, text width=2.0cm}, 
      block/.style={rectangle, draw, align=center, inner sep=3pt, text width=5.0cm}, 
      yes/.style={midway, above, sloped, font=\small},
      no/.style={midway, below, sloped, font=\small}
    ]
    \node[start] (stage1) {\textbf{Stage 1: Primary Analysis}\\
    Estimate ATE of deworming on\\
    \emph{school attendance}};
    \node[decision, right=of stage1] (sig) {Attendance\\ effect\\ significant?};
    \node[block, above right=6mm and 20mm of sig] (persist) {\textbf{Stage 2A: \\ Persistence Checks}\\
    Estimate ATE on attendance in following academic year};
    \node[block, below right=6mm and 20mm of sig] (health) {\textbf{Stage 2B: Alt. Channels}\\
    Analyze other health outcomes (e.g., anemia incidence, weight-for-age)};
    \draw[->] (stage1) -- (sig);
    \draw[->] (sig) -- (persist) node[yes] {Yes};
    \draw[->] (sig) -- (health) node[no] {No};
    \end{tikzpicture}
    \caption{Complete contingent PAP for deworming intervention.}
  \label{fig:pap-tree}
\end{figure}
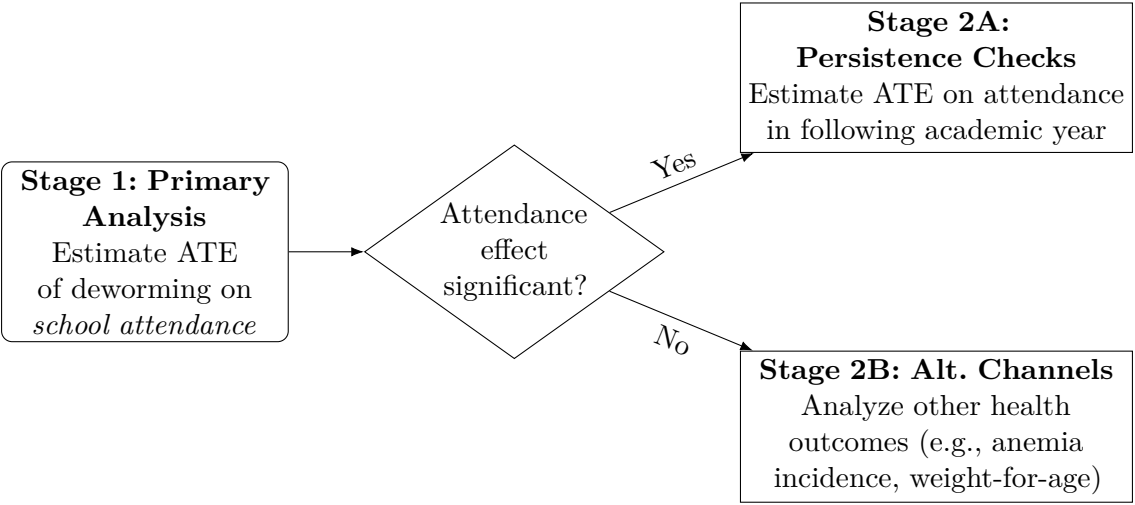
In the sample PAP above, the researcher not only specifies their primary analysis, but also their intended secondary analysis for both possible contingencies based on their initial findings. Even the simplest of empirical exercises, however, can be far more complex than the contingencies shown in Figure \ref{fig:pap-tree}. Operationally, most PAPs in practice lie somewhere in the middle: The researcher has a core set of primary specifications and also stipulates how these specifications will be adjusted for some—but not all—contingencies. 

An example of a partial contingent plan can be found in the (31-page) PAP for \citet{bessone2021economic}. One component of their analysis tests whether sleep affects how attentive individuals are to earnings-related incentives. The authors approached this question by introducing a ``salience variation'' in a particular data-entry task. The plan was first to establish that participants respond more strongly when incentives are presented in a high-salience rather than a low-salience way, and then to assess whether the sleep intervention reduces this gap.\footnote{If individuals are fully attentive, their responses to incentives should be the same whether the incentives are displayed in a high- or low-salience way. A larger gap between the two conditions therefore signals inattention. As sleep is hypothesized to improve alertness, if the treatment reduces this gap, that provides evidence that sleep mitigates inattention to incentives.} However, it was unknown ex ante whether (i) the salience manipulation would succeed and (ii) what form the response would take. As directly stated in their PAP:
\blockquote{
``We do not fully pre-specify this analysis since the appropriate analysis will depend upon first
establishing that the salience variation works ``as intended''—that is, the response to incentives is stronger under high-salience relative to low-salience. Moreover, the precise form of the response to salience—e.g. whether individuals notice and respond to the incentive change in 5 minutes versus 30 minutes—remains unknown at the time of this pre-registration, making it difficult to write down the full contingent analysis plan.''
}

\paragraph{Data Collection and Analysis.} After completing the preregistered analyses, a researcher may wish to go beyond the PAP and compute additional results. Common motivations might include exploratory work prompted by unexpected findings, knowledge gained during data collection, new statistical tools, or unanticipated issues with the original specification. As an instance of the latter, \citet{bessone2021economic} write:
\blockquote{
``We preregistered daily net savings as our main variable of interest. However, this measure suffers from an unanticipated design issue: participants make large one-time withdrawals right before the study ends, which mechanically drives down net savings. We believe deposits more accurately reflect differences in savings behavior, and the accrued interest captures the benefit of savings.''
}
More generally, because PAPs reflect prior beliefs about where the most interesting or strongest effects are likely to lie, the desire for additional analysis is particularly likely when the realized data substantially shift the researcher's prior. The next section formalizes this idea.

\paragraph{Reporting of Results and Deviations from PAP.} The researcher publicly reports the PAP and the results of all preregistered analyses.\footnote{We assume that PAP results are always reported (i.e., the researcher does not \textit{hide} any findings). For simplicity, we also do not model degrees of emphasis in reporting. In our framework, placing PAP results in an appendix is equivalent to including them in the main text.} At this time, there is broad uncertainty in the profession about whether and how to report \emph{non}-preregistered results. Prescriptions range from strict ``no reporting'' to more permissive approaches that allow reporting conditional on clearly flagging non-PAP findings as exploratory or suggestive.

We refer to the reporting of a non-preregistered result as a \emph{deviation} from the PAP. Uncertainty over their interpretation notwithstanding, deviations are extremely common in practice and arise for many reasons: understanding surprising results (exploration or mechanism probes), referee and editor requests (e.g., robustness or placebo checks), or the availability of improved statistical tools since PAP submission. We use \citet{bessone2021economic} as a running example of such deviations.

\begin{example}\label{d2:ex:sleep} 
\citet{bessone2021economic} study the economic effects of different policies targeted at increasing sleep among the urban poor. The authors cross-randomize workers in Chennai, India, to receive (i) one of two treatments designed to improve nighttime sleep and (ii) a treatment that provides nap breaks during workdays. 
\citet{bessone2021economic} preregister a set of specifications regressing indices of work, well-being, and cognition outcomes on interactions of the night sleep and nap treatments. However, as the authors describe in their paper,
\begin{displayquote}
``Those who received a night sleep treatment in addition to naps had very similar effects to those with naps only. This pattern of results suggests that naps have an overall positive effect on outcomes, whereas increases in night sleep do not. To increase statistical power and streamline the discussion of effects on the individual outcomes, we turn to an analysis that pools the two night sleep treatments and does not allow for an interaction effect of night sleep and nap treatments.'' 
\end{displayquote}
\end{example}
This pooling decision is substantively motivated: When two night-sleep arms are practically indistinguishable, combining them yields a more precisely estimated quantity. However, at this stage there is uncertainty in the profession over how to proceed: A blanket ban on deviations would preclude learning from the pooled results, yet reporting conventional (unconditional) point estimators and confidence intervals would ignore that the pooling decision was made selectively after seeing the data, risking an overstatement of precision. While the researcher can flag such results as exploratory and meant to be taken with a grain of salt, it is still ambiguous how to interpret deviations from the PAP. In the coming sections, we express the stages outlined above with a formal model, then characterize the adjustments that should be made to account for the selective reporting of non-preregistered results.

\section{Model}\label{d2:sec:model}
We begin by describing the environment and introducing terms in Section \ref{d2:sec:environment}, formalizing the researcher's timeline discussed in the previous section. To ground motivation, in Section \ref{d2:sec:decision.problem} we contextualize the researcher's choice of specifications in terms of a Bayesian decision problem and show that even a fully Bayesian decision maker may wish to deviate if constrained to choose from a limited set of specifications. We emphasize, however, that our subsequent inferential framework in Section \ref{d2:sec:inference} is valid without this Bayesian structure. 

\subsection{Environment}\label{d2:sec:environment}
The researcher observes data $X \in \X$, for $\X$ a sample space. The researcher considers specifications $S: X \to \R^{d}$ from some set $\cS$; that is, each specification $S \in \cS$ maps data $X$ to estimates $S(X)$. An example of $\cS$ could be the set of all linear regressions.\footnote{This setup accommodates cases where the researcher considers regression specifications with differing numbers of regressors, since vectors in $\R^{\Tilde{d}}$ for $\Tilde{d} < d$ can also be represented as vectors in $ \R^{d}$.} In advance of observing the data, the researcher preregisters a set of specifications $\cSpre \subseteq \cS$, where $\cSpre$ represents the pre-analysis plan (PAP). We denote specifications from the PAP by $\Spre \in \cSpre$. For example, $\cSpre$ might be the specific set of linear regressions the researcher intends to run, and $\Spre$ one such regression.\footnote{Observe that one could alternately define $\cS$ to be the set of all possible PAPs, from which the researcher chooses a single element $\Spre\in\cS$. Our choice of notation, while mathematically equivalent, is more practical for when we discuss inference in Section \ref{d2:sec:inference}, and seems better aligned with how researchers view PAPs.} After observing $X$, the researcher announces a set of non-preregistered specifications $\cSpost \subseteq \cS \setminus \cSpre$. We refer to $\cSpost$ as \textit{deviations from the PAP}, which may or may not depend on the observed data $X$. The researcher then reports (i) PAP estimates $\hbpre = \Spre(X)$ for all PAP specifications $\Spre \in \cSpre$ and (ii) post estimates $\hbpost = \Spost(X)$ for all deviations $\Spost \in \cSpost$. We depict this timeline in Figure \ref{d2:fig:timeline}. To ground this notation, we revisit the example of \cite{bessone2021economic}.

\begin{figure}
    \centering
    \doubleline
    \begin{enumerate}
    \item Register PAP $\cSpre$ consisting of specifications $\Spre$
    \item Observe data $X$
    \item Announce deviations $\cSpost$ consisting of non-preregistered specifications $\Spost$
    \item Report PAP estimates $\hbpre = \Spre(X)$ and post estimates $\hbpost = \Spost(X)$
    \end{enumerate}
    \doubleline
    \caption{Timeline of Researcher's Process}
    \label{d2:fig:timeline}
\end{figure}

\examplecontinues{d2:ex:sleep} \citet{bessone2021economic} cross-randomize workers to receive (i) one of two treatments designed to improve nighttime sleep (``Incentives'' and ``Encouragement'') and (ii) a treatment that provides nap breaks during workdays (``Nap''). They preregister a set of regression specifications $\cSpre$ regressing indices of work, well-being, and cognition outcomes on interactions of the two night sleep and nap treatments.

After conducting the study and observing data $X$, the authors compute the treatment effects of the five fully disaggregated treatment arms on preregistered outcomes, $\hbpre = \Spre(X)$. However, after observing that ``those who received a night sleep treatment in addition to naps had very similar effects to those with naps only,'' the authors reason that ``naps have an overall positive effect on outcomes, whereas increases in night sleep do not.'' To increase statistical power in discussion of effects on outcomes, the authors estimate an additional set of specifications $\Spost\in\cSpost$ which (i) pools the two night sleep treatments and (ii) removes the interaction term for night sleep and nap treatments, yielding $\hbpost = \Spost(X)$. The authors supplement their report of the PAP estimates $\hbpre$ (disaggregated) with post estimates $\hbpost$ (pooled), found in Tables III and IV of their paper, respectively.
\vspace{1em}

\subsection{Decision Problem and Deviations}\label{d2:sec:decision.problem}
There are many reasons that deviations arise. In this section, we show that even a fully Bayesian researcher in this framework will sometimes want to deviate when they are limited to a constrained set of specifications for constructing their PAP.\footnote{This model is for motivation alone: The inference procedures in Section \ref{d2:sec:inference} are valid without this Bayesian structure.} We offer discussion on where such constraints arise in practice, such as through labor and communication costs, tractability, cognitive limitations, and norms in the profession.

The distribution of $X$ is governed by the parameter $\theta \in \Theta$, $X\sim P_{\theta} \in \Delta(\X)$, for $\Theta$ a parameter space and $\Delta(\X)$ the set of probability distributions on $\X$. The researcher has a loss function  $L(S(X), \theta) \geq 0$ that quantifies the consequences of reporting estimates $S(X) \in \R^{d}$ when the true parameter is $\theta \in \Theta$. For example, if the goal is to estimate a treatment effect $\tau(\theta) \in \R$, then a standard loss function is squared error:
\begin{equation}\label{d2:eq:squared.error}
    L(S(X), \theta) = (S(X) - \tau(\theta))^{2}.
\end{equation}
The \textit{risk} of specification $S$ is the expected loss from reporting $S(x)$ across possible realizations $x \in \X$ of the data:
\begin{equation*}
    R(S, \theta) =  \int_{\X} L(S(x), \theta) dP_{\theta}(x).
\end{equation*}
Note that the risk depends on the unknown parameter $\theta \in \Theta$.

\paragraph{PAP Construction.} We consider the researcher's problem of choosing the set of specifications $\cSpre, \cSpost \subseteq \cS$ to estimate, focusing on the case where $|\cSpost| \leq |\cSpre|=1$ for simplicity. We suppose the researcher has prior beliefs about the unknown parameter $\theta$, represented by a prior distribution $\pi \in \Delta(\Theta)$. The researcher determines which specification $\Spre$ to preregister in their PAP by minimizing the average risk of $S$ under their prior $\pi$. Stated formally, given a set of specifications $\cS$, loss function $L$, and prior $\pi$, the researcher solves:
\begin{equation}\label{d2:eq:pre.problem}
    \Spre \in \argmin_{S \in \cS} \Bar{R}(S, \pi), \quad \Bar{R}(S, \pi) = \int_{\Theta} R(S, \theta) d\pi(\theta),
\end{equation}
where $\Bar{R}(S, \pi)$ is the average risk of $S$ under $\pi$. After solving this optimization problem, the researcher preregisters $\Spre$ in their PAP.    

\paragraph{Deviations from PAP.} The researcher observes $X$ and reports PAP estimates $\hbpre = \Spre(X)$, as planned. However, the data also lead the researcher to update their prior beliefs $\pi$ to form posterior beliefs, represented by the posterior distribution $\pi_{X} \in \Delta(\Theta)$ of $\theta$ conditional on $X$. Having gleaned new information about $\theta$, the researcher may now find their preregistered $\Spre$ to be suboptimal, leading the researcher to also report post estimates $\hbpost = \Spost(X)$, where $\Spost \neq \Spre$. We formalize this decision to deviate below.

The analogous problem to minimizing prior average risk (\ref{d2:eq:pre.problem}) is to minimize posterior average loss:
\begin{equation}\label{d2:eq:post.Bayes}
    \Spost \in \argmin_{S \in \cS} \Bar{L}(S(X), \pi_{X}), \quad \Bar{L}(S(X), \pi_{X}) = \int_{\Theta} L(S(X), \theta) d\pi_{X}(\theta),
\end{equation}
where $\Bar{L}(S(X), \pi_{X})$ is the average loss of $S(X)$ under $\pi_{X}$.  The researcher deviates from their PAP when $\Spost \neq \Spre$. This decision to deviate depends on the coarseness of $\cS$.

In a conventional Bayesian decision problem, $\cS$ is taken to be the set of all functions from $\X$ to $\R^{d}$. In this case, a specification that minimizes prior average risk (\ref{d2:eq:pre.problem}) also minimizes posterior average loss (\ref{d2:eq:post.Bayes}) almost surely, and vice versa \citep[Theorem 1.1]{lehmann2006theory}. Intuitively, when $\cS$ is unrestricted, a solution to (\ref{d2:eq:pre.problem}) perfectly prepares for every possible data realization $x \in \X$. In such cases, there is never a need to deviate.\footnote{
In Appendix \ref{d2:app:sec:counterfactual.PAP}, we provide a variant of the decision problem where, instead of considering posterior average loss, the researcher solves \eqref{d2:eq:pre.problem} with $\pi_{X}$ in the place of $\pi$. In this setup, $\Spost$ is the \textit{counterfactual PAP} that the researcher would have constructed had they entered the experiment with their posterior beliefs. If one takes $\cS$ to be the relevant action space, this counterfactual PAP approach aligns with the conditional Bayes principle, which says to choose an action that minimizes average loss under one's current beliefs \citep[Section 1.5.1]{berger2013statistical}. Interestingly, deviations arise in this setup even when $\cS$ is unconstrained.}

We depart from the conventional Bayesian setup by allowing for a restricted $\cS$. That $\cS$ is constrained is not a modeling convenience, but a description of practice: At the preregistration stage, researchers face explicit and implicit constraints that narrow the permissible set of specifications. In general, researchers cannot write PAPs sophisticated and lengthy enough to prepare for every $x\in\X$. For one, researchers face \textit{time, cognitive, and communication} constraints: For most experimental designs, it is not feasible to anticipate, formulate, and communicate optimal responses to every contingency \citep{olken2015promises, banerjee2020praise}. As an example, the PAP for \citet{amy2012oregon}—a 50 page paper—is 159 pages long. And, despite being exceptionally detailed, the initial PAP still fell short of exhausting all relevant analyses: From 2012 to 2020, ten additional PAPs (totaling an additional 368 pages) were pre-registered to explore new outcome measures in follow-up papers after collecting additional data \citep{amy_PAP_docs}. 

Norms in the profession also constrain $\cS$, favoring parsimonious, interpretable specifications tied to the design and estimand. By default, this includes conventional functional forms and links (e.g., linear or low-order polynomials, logit/probit for binary outcomes, etc.), standard control sets motivated by identification, and prespecified clustering and variance estimators. Idiosyncratic specifications—such as cube-root transformations, high-order polynomials, post-hoc cutpoints, data-driven bins or knots, bespoke weighting schemes, stepwise selection, or open-ended machine learning screens—are typically viewed with skepticism unless accompanied by a compelling ex-ante justification grounded in theory, measurement scale, or experimental design (and appropriately powered).

The constraints on $\cS$ lead to a classic dynamic inconsistency problem \citep{Strotz1955,KydlandPrescott1977}, wherein a plan that is optimal ex ante under prior beliefs $\pi$ may no longer be optimal after updating to the posterior $\pi_X$.\footnote{In behavioral terms, present-bias and self-control generate the same logic \citep{Laibson1997,ODonoghueRabin1999,FrederickLoewensteinODonoghue2002}. Related commitment perspectives appear in rules-vs-discretion and menu/temptation models \citep{BarroGordon1983,Schelling1960,Kreps1979,GulPesendorfer2001}.} In a standard Bayesian decision problem with a rich $\cS$, the researcher would be allowed to preregister a contingent rule (e.g., ``after $X$ is observed, take whichever action meets criterion $C$''), so the plan chosen ex ante already pins down the ex-post action. When $\cS$ excludes such contingent rules, however, this flexibility is lost. Concretely, if a researcher is restricted to naming one primary outcome in advance (employment or consumption), a contingent rule like ``highlight whichever outcome best meets our welfare/precision criterion after we see $X$'' is not in $\cS$. Ex ante the researcher may pick employment under $\pi$, but ex post the data may favor consumption. Because $\cS$ was insufficiently rich to accommodate switching, the ex ante choice remains binding even when the posterior $\pi_X$ would prefer the other outcome—hence the dynamic inconsistency.

\section{Inference}\label{d2:sec:inference}
This section proposes conditional inference procedures for post estimates. We begin in Section \ref{d2:sec:exact.inference} by introducing the problem of statistical inference in a model with normally distributed estimates. With the relevant notation and concepts established, we next motivate in Section \ref{d2:subsec:motivatingconditional} the use of conditional inference for post estimates and provide examples in Section \ref{d2:sec:deviaton.types}. We then develop the conditional inference procedures in Sections \ref{d2:sec:inference.procedures}--\ref{d2:sec:polyhedral.conditioning} and provide an explicit example in Section \ref{d2:sec:simple.example}. 
The normality assumption is motivated by standard large sample approximations, such as the central limit theorem and the law of large numbers. In Section \ref{d2:sec:feasible.inference}, we describe how to implement analogues of the proposed conditional inference procedures using sample estimates and establish their uniform asymptotic validity.

\subsection{Inference Setup}\label{d2:sec:exact.inference}
Let $\hball$ be a vector that contains (i) the PAP estimates $\hbpre = \Spre(X)$ for each PAP specification $\Spre \in \cSpre$ and (ii) the post estimates $\hbpost = \Spost(X)$ for each deviation $\Spost \in \cSpost$ made under $X$. We assume that $\hball$ is normally distributed with unknown mean $\ball$ and known positive definite covariance matrix $\Sigmaall$:
\begin{equation}\label{d2:eq:exact.normality}
\begin{split}
    \hball \sim
    N(\ball, \Sigmaall).
\end{split}
\end{equation}
Let $\P[\ball]{\cdot}$ and $\EP[\ball]{\cdot}$ denote probabilities and expectations under this distribution.\footnote{We implicitly assume that estimators and confidence intervals of interest depend on $X$ through $\hball$.} Each $\hbpre$ corresponds to a PAP estimand $\bpre = \EP[\ball]{\hbpre}$, while each $\hbpost$ corresponds to a post estimand $\bpost = \EP[\ball]{\hbpost}$. By construction, $\ball$ contains all the PAP and post estimands. We consider the problem of conducting inference on parameters that can be expressed as linear combinations of these estimands: $v'\bpre$ and $l'\bpost$, where $v, l \neq 0$.\footnote{This setup is quite general, since differentiable nonlinear functions of asymptotically normal estimates will also be asymptotically normal by the delta method. These nonlinear functions can be elements of $\hball$.} For instance, in Example \ref{d2:ex:sleep}, the effects of the various treatments correspond to different contrasts of the regression coefficients.

\paragraph{Inference for PAP Estimates.} Given PAP estimates $\hbpre$, let $\Sigmapre$ denote the covariance matrix induced by $\Sigmaall$. By equation (\ref{d2:eq:exact.normality}), any linear combination of PAP estimates is normally distributed:
\begin{equation*}
     v'\hbpre \sim N(v'\bpre,  \sigma_{pre}^{2}), \quad \sigma_{pre}^{2} = v'\Sigmapre v.
\end{equation*}
Given a desired significance level $\alpha \in (0,1)$, a confidence interval $CI_{\alpha}(X)$ has correct coverage for $v'\bpre$ across data realizations $X \in \X$ if it satisfies
\begin{equation}\label{d2:eq:PAP.coverage}
    \P[\ball]{v'\bpre \in CI_{\alpha}(X)} = 1 - \alpha, \quad \forall \ball.
\end{equation}
In words, a confidence interval $CI_{\alpha}(X)$ that satisfies criteria \eqref{d2:eq:PAP.coverage} contains the parameter of interest $v'\bpre$ with probability $1-\alpha$ for any possible value of the unknown $\ball$. This is a standard criteria for valid inference \citep{lehmann2024testing}. One example is the conventional two-sided interval $[v'\hbpre \pm z_{1-\alpha/2}\sigma_{pre}]$, where $z_{\alpha}$ denotes the $\alpha$-quantile of the $N(0,1)$ distribution. This interval satisfies
\begin{equation*}
    \P[\ball]{v'\bpre \in [v'\hbpre \pm z_{1-\alpha/2}\sigma_{pre}]} = 1 - \alpha, \quad \forall \ball,
\end{equation*}
and therefore yields valid inference for PAP estimates.

\paragraph{Inference for Post Estimates.}
Unlike the preregistered specifications $\cSpre$, the researcher's deviations $\cSpost$ may depend on the observed data $X$. We say that a data realization $X$ \textit{induces} the deviation $\Spost$ if $\Spost \in \cSpost$ in that data realization. For a given $\Spost$, we may then collect the set of all data realizations that induce the deviation:
\begin{align*}
    \Xpost = \{X \in \X: \Spost \in \cSpost\}.
\end{align*}
We refer to $\Xpost$ as the \textit{deviation set} for $\Spost$. By definition, the researcher is only interested in post estimates $\hbpost$ when $X\in\Xpost$. In view of this, we propose confidence intervals that satisfy the following criteria: Given significance level $\alpha \in (0,1)$, a confidence interval $CI_{\alpha}(X)$ has correct \textit{conditional coverage} for $l'\bpost$ given $X \in \Xpost$ if it satisfies
\begin{equation}\label{d2:eq:cond.coverage}
    \P[\ball]{l'\bpost \in CI_{\alpha}(X) \mid X \in \Xpost} = 1 - \alpha, \quad \forall \ball.
\end{equation}
Intuitively, such intervals ensure correct coverage for $l'\bpost$ across the set of data realizations $\Xpost \subseteq \X$ where the researcher is interested in $\hbpost$. This conditional coverage criteria generalizes the \textit{unconditional} criteria in \eqref{d2:eq:PAP.coverage}. Indeed, since PAP estimates are always reported, the analogous ``deviation set'' for a PAP estimate yields $\{X \in \X: \Spre \in \cSpre \} = \X$, in which case conditional and unconditional coverage coincide.

\paragraph{Known Deviation Set.} To obtain $CI_{\alpha}(X)$ with correct conditional coverage, we assume the researcher can (and does) articulate the deviation sets $\Xpost$ for each $\Spost \in \cSpost$ at hand. This is a natural first step for deriving conditional inference procedures. We relax this assumption in Section \ref{d2:sec:robustness}, where we discuss robustness of conditional inference procedures to various forms of misspecification.
To make the above concepts concrete, we return to the example deviation in \cite{bessone2021economic}. In the following example, let $\text{se}(\cdot)$ denote the standard deviation of an estimator under normal distribution \eqref{d2:eq:exact.normality}.

\examplecontinues{d2:ex:sleep} 
\cite{bessone2021economic} initially preregistered the fully interacted specification $\hbpre = (\htau_{N}, \htau_{NE}, \htau_{NI}, \htau_{E}, \htau_{I})'$ , where \begin{itemize}
\itemsep=0.04em
    \item $\htau_{N}$: effect of naps only
    \item $\htau_{NE}$: effect of  naps $\times$ encouragement 
    \item $\htau_{NI}$: effect of naps $\times$ incentives
    \item $\htau_{E}$: effect of encouragement only
    \item $\htau_{I}$: effect of incentives only
\end{itemize}
After seeing the data and estimating $\hbpre$, the authors report three additional post estimates, $\hbpost = (\htau_{E+I}, \htau_{NE+NI}, \hat{\tau}_{N+NE+NI})'$, corresponding to the estimates which (i) pool incentives and encouragement into a single ``night sleep'' treatment ($\htau_{E+I}$ and $\htau_{NE+NI}$) and (ii) collapse interaction effects ($\hat{\tau}_{N+NE+NI}$).\footnote{The estimate $\htau_{NE+NI}$ is not reported in the main paper, but rather in Online Appendix Table A.VIII (where the authors pool the two night sleep treatments but include a separate indicator for individuals who received a combination of either night sleep treatment along with the nap treatment).} The authors find that, ``these results ($\hbpre$) provide evidence that naps have an overall positive effect on outcomes, while increases in night sleep do not. However... this analysis has limited statistical power.'' The authors justify turning to a ``simplified but higher-powered version of this analysis'' with the following three reasons:
\begin{enumerate}\itemsep=0em
    \item[(A)] ``Each of the night sleep treatments alone had no significant effect on this overall index,'' i.e.,
    $$\frac{\abs{\hat{\tau}_{E}}}{\text{se}(\hat{\tau}_{E})} \leq z_{1-\eta/2} \,\text{ and }\, \frac{\abs{\hat{\tau}_{I}}}{\text{se}(\hat{\tau}_{I})} \leq z_{1-\eta/2}.$$
    \item[(B)] ``In contrast, naps alone had a positive, marginally significant effect,'' i.e., 
    $$\frac{\abs{\hat{\tau}_{N}}}{\text{se}(\hat{\tau}_{N})} \geq z_{1-\eta/2}.$$
    \item[(C)] ``Those who received a night sleep treatment in addition to naps had very similar effects to those with naps only,'' i.e.,
    $$\frac{\abs{\hat{\tau}_{N}-\hat{\tau}_{NE}}}{\text{se}(\hat{\tau}_{N}-\hat{\tau}_{NE})} \leq z_{1-\eta/2} \,\text{ and }\,  \frac{\abs{\hat{\tau}_{N}-\hat{\tau}_{NI}}}{\text{se}(\hat{\tau}_{N}-\hat{\tau}_{NI})} \leq z_{1-\eta/2}.$$
\end{enumerate}
Taking these points at face value, a plausible characterization of the authors' deviation set for $\hbpost=(\htau_{E+I}, \htau_{NE+NI}, \hat{\tau}_{N+NE+NI})'$ is the set of data realizations satisfying the intersection of these events, i.e., 
$$\mathcal{X}_{post} = \left\{X \in\mathcal{X}\,:\,\, \text{(A) and (B) and (C)} \right\}.$$

\subsection{Motivating Conditional Inference}\label{d2:subsec:motivatingconditional}

We argue that \textit{conditional} coverage of $l'\bpost$ should be the preferred inferential objective---not unconditional coverage, as is convention---since post estimates are only reported or acted upon in select states of the world. We see two main reasons for this. First, selective reporting can lead to distortions in the density of observed reports of $\hbpost$, as we will demonstrate. Second, conditional validity is, in a certain sense, necessary and sufficient for confidence sets to be valid on average across studies, which is the very property underlying existing frequentist coverage objectives.

\paragraph{Conditional Coverage for Given Study.}
Selective reporting can distort the density of observed reports $\hbpost$ relative to the unconditional density assumed by conventional inference procedures. This is demonstrated in the following toy example.

\begin{example}\label{d2:ex:toy} Let 
\begin{align*}
    \begin{pmatrix}
    \hbpre \\
    \hbpost
    \end{pmatrix}
    \sim
    N(0, \Sigmaall), \quad
    \Sigmaall = 
    \begin{pmatrix} 
    1 & \rho \\ 
    \rho & 1 
    \end{pmatrix},
\end{align*}
where $\rho=0.5$ is the correlation between $\hbpre$ and $\hbpost$. The leftmost panel in Figure~\ref{d2:inference:distortion:fig} shows the joint unconditional density of $(\hbpre, \hbpost)$, i.e., when $\Xpost = \R^{2}.$ The center panel shows the joint density of $(\hbpre, \hbpost)$ conditional on $\hbpre \geq 0$, i.e., when $\hbpost$ is selectively reported based on $\Xpost = \{X \in \R^{2}: \hbpre \geq 0\}$. The rightmost panel superimposes the associated marginal densities of $\hbpost$ under selective (blue) and unconditional (red) reporting regimes. To give a reference of magnitude, the red dashed lines show the conventional 5\% significance test cutoffs. As can be seen, the missing mass in the post-selection joint density causes the density of \textit{observed reports} of $\hbpost$ to look quite different from the unconditional density of $\hbpost$ assumed by conventional inference procedures. By ignoring the correlation between the conditioning event and estimate $\hbpost$, conventional procedures yield inflated type I error, undercoverage, and miscalibrated conclusions.
\begin{figure}
    \begin{minipage}[t]{1.3\textwidth}
    \includegraphics[width=0.30\textwidth]{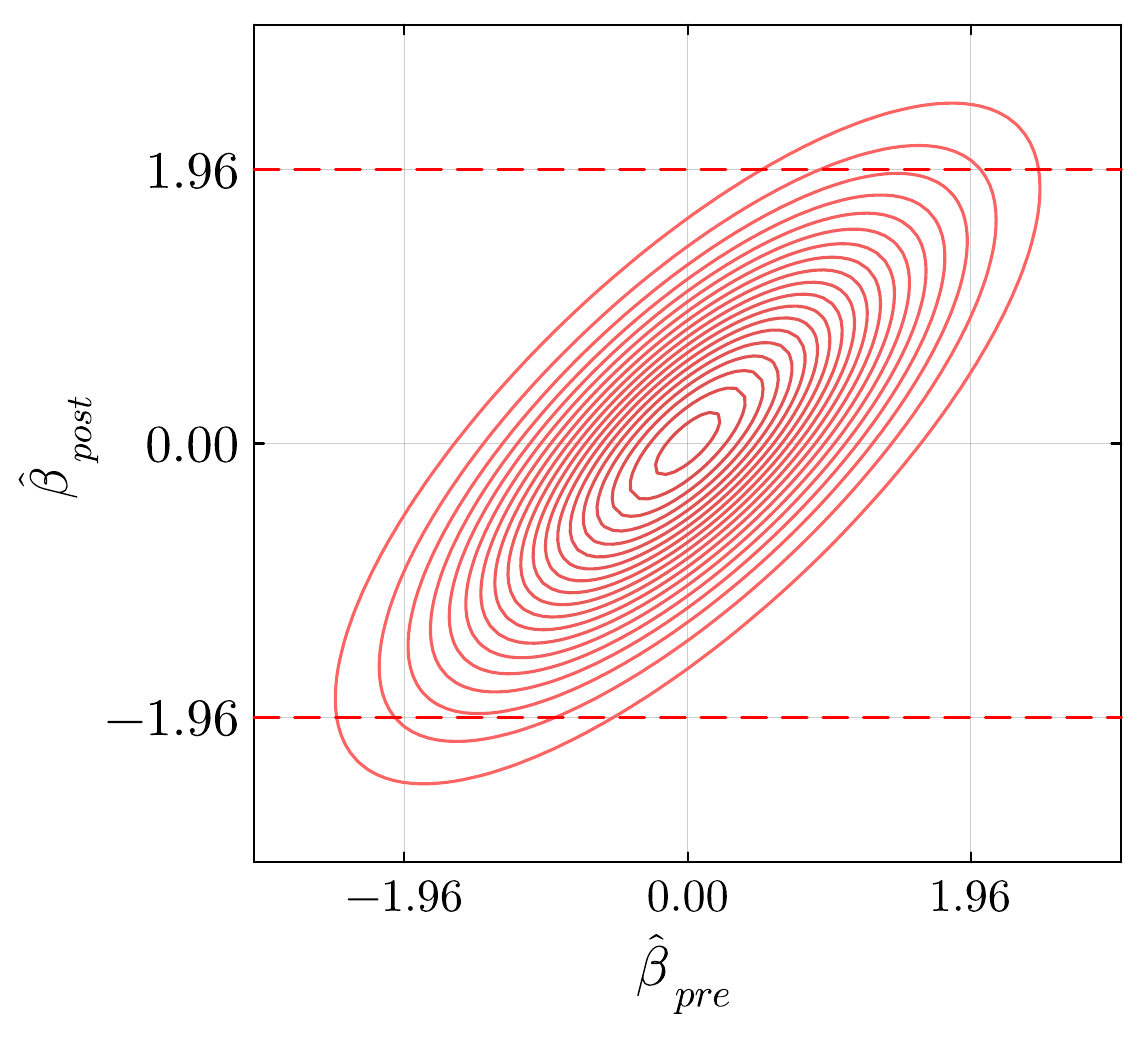}
    \includegraphics[width=0.275\textwidth]{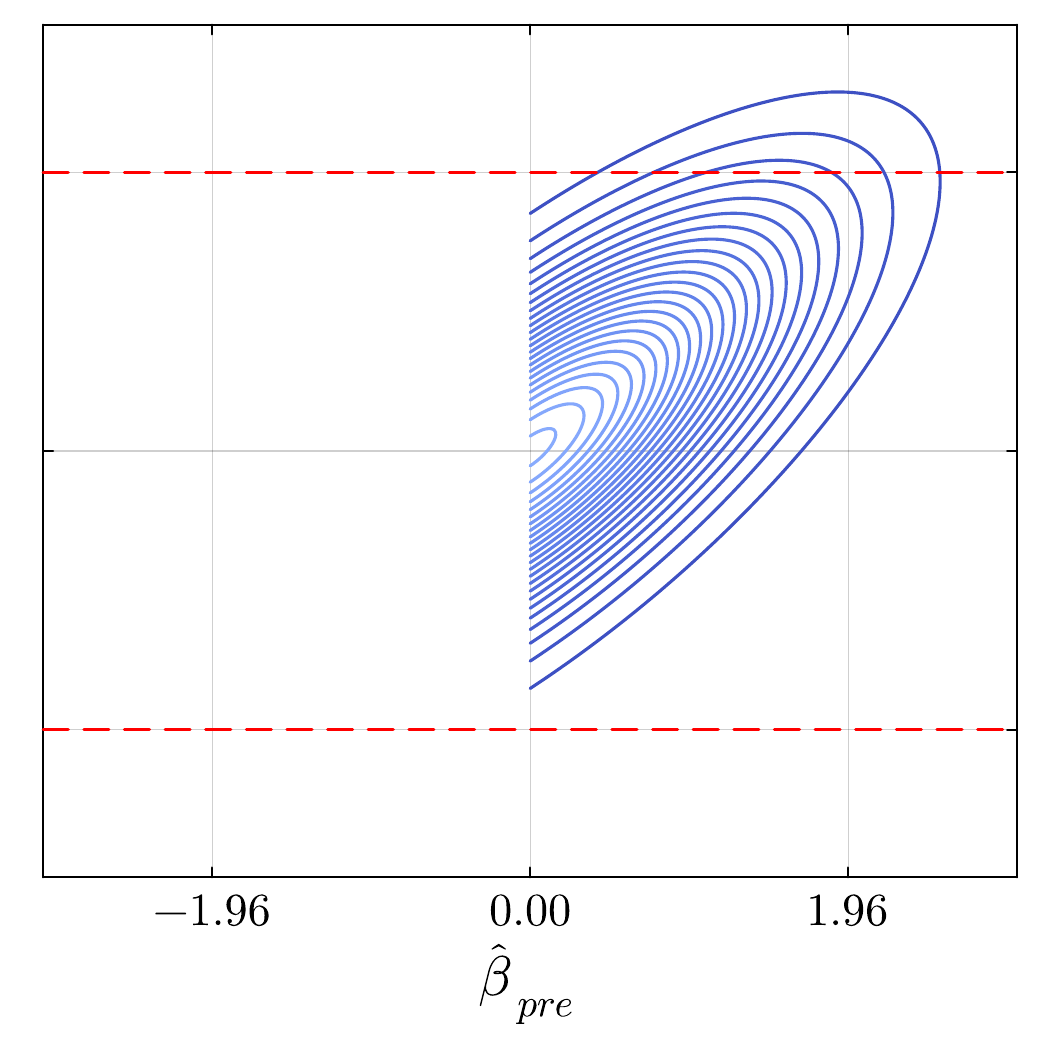}
    \reflectbox{\includegraphics[angle=90, height=0.275\textwidth]{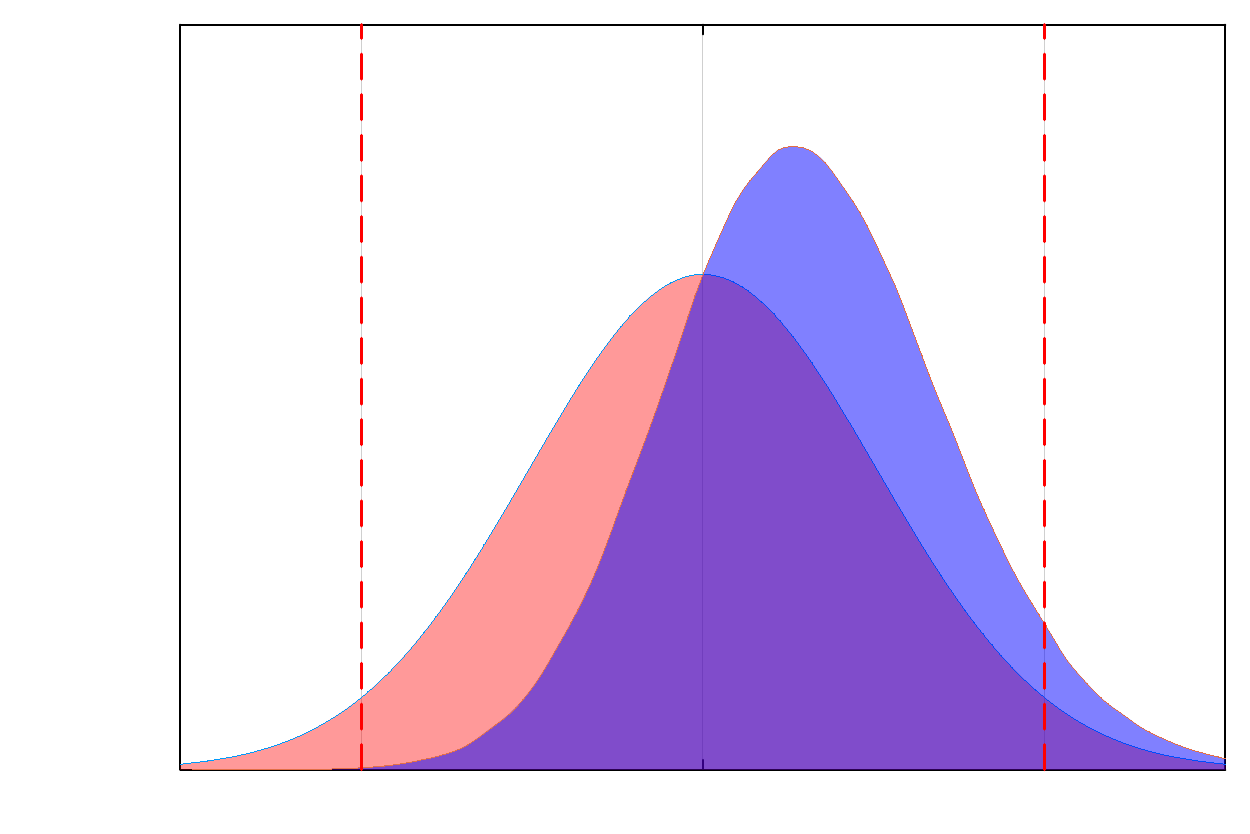}}
    \end{minipage}
    \caption{Unconditional Density of $(\hbpre,\hbpost)$ vs. Conditional $(\hbpre,\hbpost) | \hbpre \geq 0$.}
    \caption*{\small 
    $X = (\hbpre, \hbpost) \sim N(0, \Sigmaall)$, $\spre^{2} = \spost^{2} =1$, $\rho = 0.5$. \textit{Left:} Unconditional joint density of $(\hbpre, \hbpost)$.
    \textit{Center:} Joint density of $(\hbpre, \hbpost)$ conditional on $\hbpre \geq 0$. 
    \textit{Right:} Marginal density of $\hbpost$ under selective reporting (blue) compared to unconditional density (red). Conventional 5\% significance cutoffs shown with red dashed lines.}
\label{d2:inference:distortion:fig}
\end{figure}
\end{example}

\paragraph{Average Coverage Across Studies.} The conditional coverage criterion requires that $CI_{\alpha}(X)$ contain $l'\bpost$ with high probability across data realizations $X \in \Xpost$, where each data realization can be viewed as a particular draw of the same experiment in a given study. Thus, conditional coverage provides a natural inference criterion for a \textit{given} study. However, one might ponder the relevance of conditional coverage when there are \textit{multiple} studies, each with their own experiments and potential PAP deviations. In Appendix \ref{d2:app:sec:average.coverage}, we establish a general sense in which conditional coverage for \textit{each} study is necessary and sufficient for controlling average coverage across \textit{all} studies. 

Formally, we consider a sampling model where deviation sets are drawn from a rich class of distributions reflecting the unanticipated nature of PAP deviations. Under this structure, we show that to ensure average coverage across studies, it is both \textit{necessary and sufficient} to ensure conditional coverage in each study. This equivalence result provides further motivation for the conditional inference approach to PAP deviations.

\subsection{Types of Deviation Sets}\label{d2:sec:deviaton.types}
We now discuss leading examples of deviation sets. Section \ref{d2:sec:deviations.significance} outlines deviations based on the significance of PAP estimates, which we discuss extensively in future sections. Section \ref{d2:sec:deviations.independent} highlights classes of deviation sets that yield no inference distortions in our framework.

\subsubsection{Deviations Based on Significance}\label{d2:sec:deviations.significance}
Researchers often consider post estimates $\hbpost$ for which reasoning about $\Xpost$ is predicated on values of PAP estimates $\hbpre$. In particular, $\hbpost$ is often of interest when linear combinations of PAP estimates $v'\hbpre$ cross significance cutoffs $\kappa \geq 0$. Here we focus on $\kappa = z_{1-\eta/2}\spre$, where $\spre = \sqrt{v'\Sigmapre v},$ which corresponds to a conventional two-sided \textit{statistical} significance test with significance level $\eta \in (0,1)$. However, $\kappa$ can also be based on economic significance.

\paragraph{Statistical Significance.} The researcher deviates when $v'\hbpre$ is significantly large: 
\begin{align*}
    \Xpost = \curly{X \in \X: |v'\hbpre| \geq z_{1-\eta/2}\spre}.
\end{align*}
For example, consider \citet{dube2025cognitive}, who study the effects of a cognitive-skills training program (Sit-D) for police officers on discretionary arrests. In their Section IV.B, they say
\begin{displayquote}
``In this section we check the robustness of our main results to alternate specifications and variable definitions. Our discretionary arrests category constitutes a small subset of arrests \ldots that we thought were most likely to fall in response to the training. To gauge if Sit-D also reduces other types of arrests over which officers might have discretion, we use a different measure of discretionary arrests \ldots which consists of arrests for ``nonindex'' crimes under the FBI's classification. We did not specify analyzing this outcome in our PAP but examine it for robustness.'' 
\end{displayquote}
One can interpret the above quote as suggesting a deviation set based on statistical significance: $\hbpre$ is the vector of coefficients on the regression with the preregistered measure, $v$ is the unit vector that picks the coefficient on SitD, and $\hbpost$ is the vector of coefficients for the regression with the non-preregistered measure, with $v'\hbpre$ significant at level $\eta = 0.05$ \citep[Table IV]{dube2025cognitive}.

\paragraph{Statistical Insignificance.}
The researcher deviates when $v'\hbpre$ is significantly small: 
\begin{align*}
    \Xpost = \curly{X \in \X: |v'\hbpre| \leq z_{1-\eta/2}\spre}.
\end{align*}
For example, in the quote from \citet{bessone2021economic} presented in Example \ref{d2:ex:sleep}, two of the reasons the authors give for pooling, (A) and (B), are based on statistical insignificance. We consider various representations of $\Xpost$ for \citet{bessone2021economic} in the empirical application in Section \ref{d2:sec:application}.

\subsubsection{Non-Distortionary Deviations}\label{d2:sec:deviations.independent}
An important and immediate implication of our framework is that when the deviation event $\{X \in \Xpost\}$ is \textit{independent} of the post estimates $\hbpost$, one does \textit{not} have to correct conventional inferences. In particular, since the conventional interval $[l'\hbpost \pm z_{1-\alpha/2}\sigma_{post}]$ is a function of $\hbpost$, independence yields
\begin{align*}
    \P[\ball]{l'\bpost \in [l'\hbpost \pm z_{1-\alpha/2}\sigma_{post}]|X \in \Xpost} = \P[\ball]{l'\bpost \in [l'\hbpost \pm z_{1-\alpha/2}\sigma_{post}]} = 1-\alpha, \quad \forall \ball.
\end{align*}
Such \textit{non-distortionary} deviations can broadly be grouped into two cases, depicted in Figure~\ref{d2:inference:nodistortion:fig} and discussed below.

\begin{figure}
    \begin{minipage}[t]{1.3\textwidth}
    \includegraphics[width=0.30\textwidth]{img/toy_contour_unconstrained.pdf}
    \includegraphics[width=0.275\textwidth]{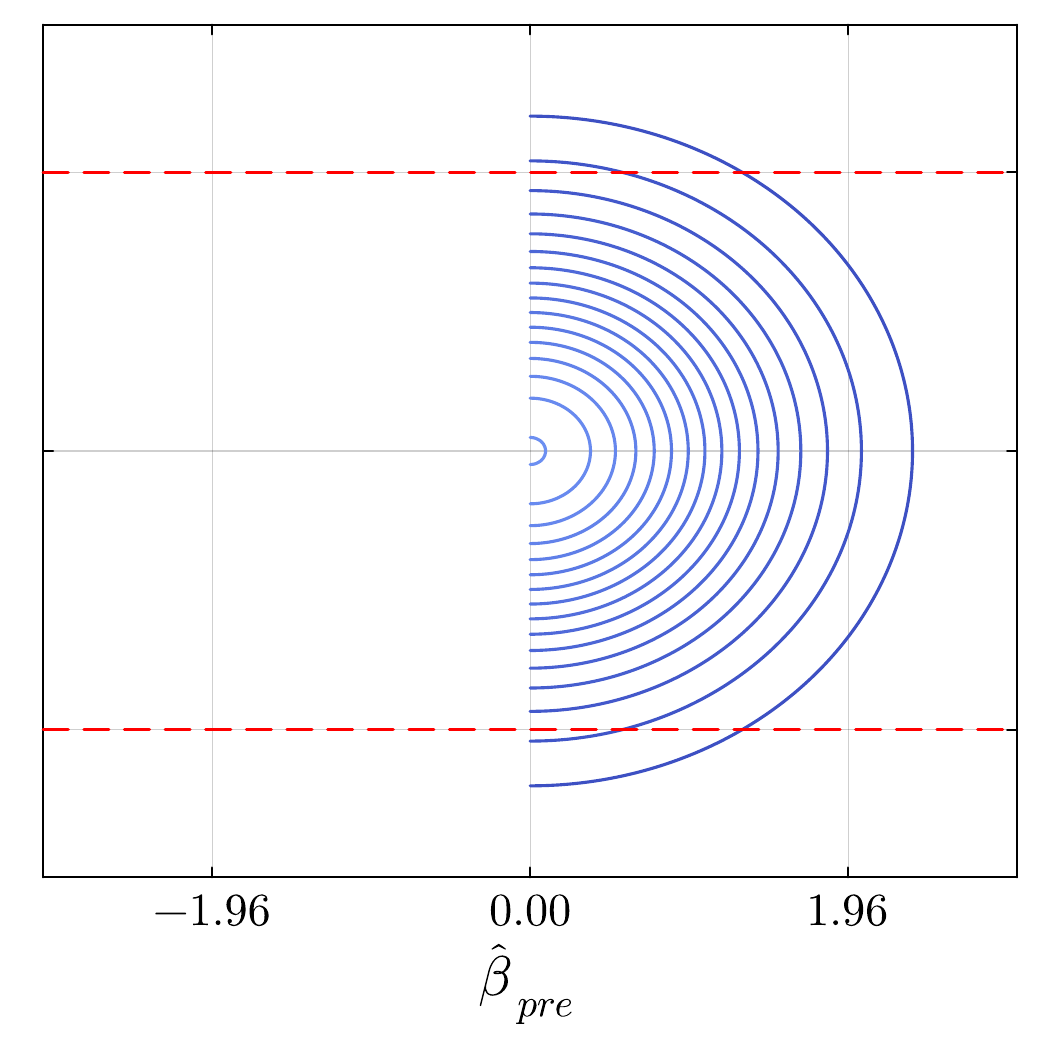}
    \reflectbox{\includegraphics[angle=90, height=0.275\textwidth]{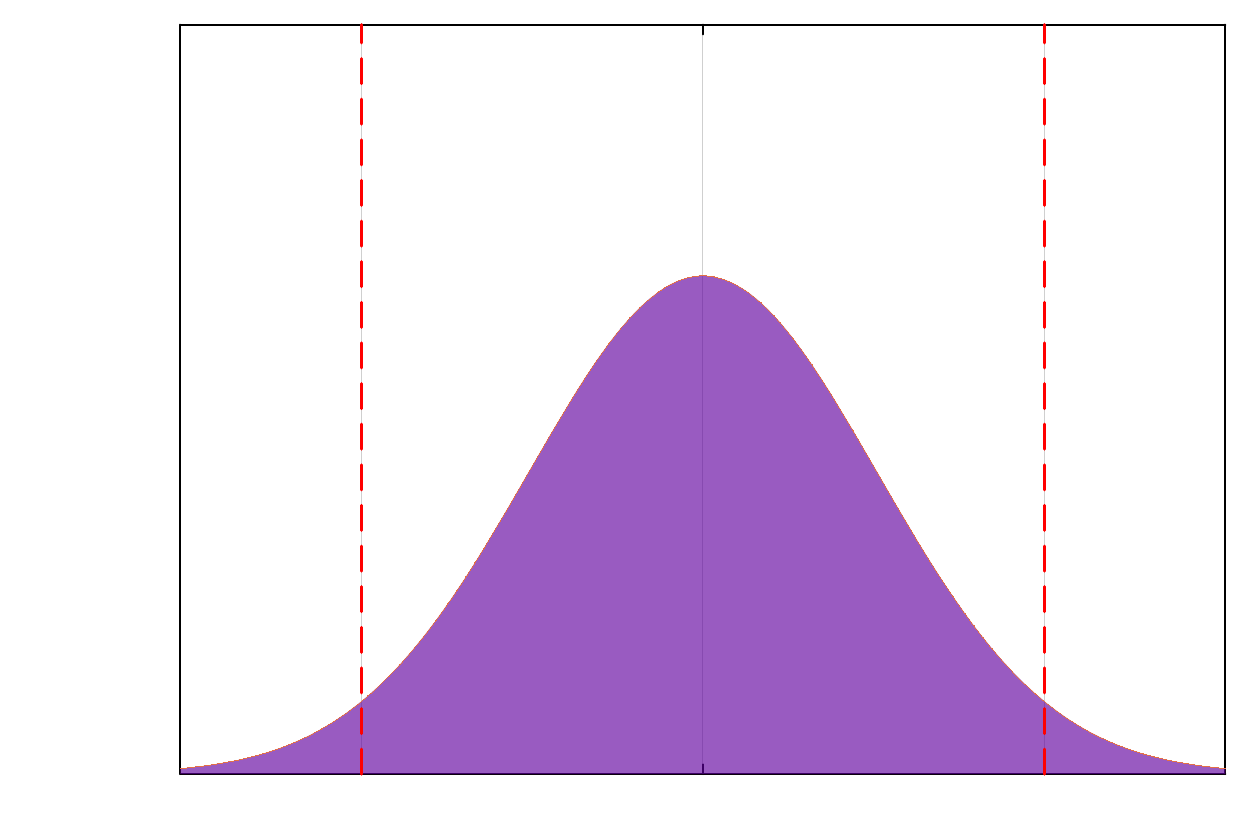}}
    \end{minipage}
    \caption{Non-Distortionary Deviations}
    \caption*{\small 
    $X = (\hbpre, \hbpost) \sim N(0, \Sigmaall)$, $\spre^{2} = \spost^{2} =1$, $\rho \in \{0.5, 0\}$. \textit{Left:} $\hbpre$ and $\hbpost$ are correlated ($\rho = 0.5$), but $\hbpost$ is reported unconditionally ($\Xpost=\R^{2}$).
    \textit{Center:} $\hbpost$ is reported conditional on $\hbpre \geq 0$ ($\Xpost=\{X \in \R^{2}: \hbpre \geq 0\}$), but $\hbpost$ and $\hbpre$ are uncorrelated ($\rho = 0$). 
    \textit{Right:} Overlapped marginal density of $\hbpost$ in either case. Conventional 5\% significance cutoffs shown with red dashed lines.}
\label{d2:inference:nodistortion:fig}
\end{figure}

\paragraph{Unconditional Reporting.} The first case (left panel of Figure~\ref{d2:inference:nodistortion:fig}) is \textit{unconditional reporting}, where even though $\hbpost$ was not registered in the PAP, it would have been reported \textit{no matter the data observed}. In other words, $\Xpost=\X$. While seemingly a trivial result, there are nonetheless many examples of such deviations that make researchers uneasy in the status quo. Instances of $\Xpost=\X$ might include: (i) honest mistakes, oversights in PAP specification, or adjustments for feasibility; (ii) unanticipated events (e.g., change in policy context, global pandemic); or (iii) newly acquired knowledge about the setting/context, econometric methods, or theoretical insights in the literature.\footnote{Examples of (i) include \citet[Appendix D.5]{rafkin2021guidance}; \citet[pages 9-10]{bhat2022long}; \citet[footnote 24]{alsan2024representation}; \citet[footnotes 18 and 24]{jacobson2024price}; \citet[Table A5]{agte2024investing}; \citet[footnote 9]{finkelstein2019take}; \citet[footnote 9]{kaur2024_financialconcerns}; \citet[page 3140]{kelley2024monitoring}. Examples of (ii) include \citet[Section III.B]{kremer2009incentives}; \citet[footnote 7]{evsyukova2025linkedout}. Examples of (iii) include \citet[footnote 26]{bandiera2021allocation}; \citet[page 2345]{field2021her}; \citet[pages 1914-15]{bessone2021economic}; \citet[footnote 18]{giacobinoschoolgirls2024}.} For concreteness, we now consider an explicit example of (iii).
\begin{example}
\citet{bandiera2021allocation} study how the allocation of authority between procurement offices (POs) and their supervising accountant generals (AGs) affects performance, as measured by the price spent on procured goods. The primary result---based on PAP estimates---is that greater PO autonomy reduces prices. The secondary result---based on post estimates---is that the benefits of autonomy are higher when AG incentives are misaligned, as measured by the extent to which the AG delays approvals for PO purchases. The exact language surrounding the deviation is as follows. 
\begin{displayquote}
``In our preanalysis plan we did not prespecify that we would study heterogeneity by the AG’s type. As the experiment rolled out and we discussed its effects with our study participants we came to realize the importance of the type of the AG in determining how the treatments, particularly the autonomy treatment, affected the way POs were able to change the way they carried out procurement.'' 
\end{displayquote}
This above language suggests a deviation set equal to the entire sample space: $\Xpost = \X$.
\end{example}

\paragraph{Independent Estimates.} The second case (center panel of Figure~\ref{d2:inference:nodistortion:fig}) is conditional reporting of $\hbpost$ based on estimates $\hbpre$ that are \textit{independent} of $\hbpost$. There are mechanical ways the researcher can ensure this: A key example is pilot studies where researchers (i) register a PAP and conduct a pilot of their study on one population and (ii) make adjustments based on results from the pilot before conducting the main study on a larger population.\footnote{An example that broadly falls into this class is \citet[Appendix D]{dean2024noise}.} When the population of the pilot is sampled independently from that of the main study, any sort of deviation from the PAP based on findings in the pilot yields no inference distortions. Another example in this vein is sample splitting, where one computes estimates on one half of the data, uses those results to determine their deviations, and computes the corresponding $\hbpost$ using the other half of the data---however, such procedures come at the cost of statistical precision.

Of course, independence need not be a mechanical feature of the experimental design. Consider an intervention disrupted by a lightning strike which changes the underlying sample, as in \citet{kremer2009incentives}.\footnote{See also \citet[footnote 2]{olken2015promises} for a discussion.} This changes the estimand’s interpretation, as the conditioning event may index a different state/population $s$. However, conventional inference will be valid for this estimand. That is, we now have $\hball|\{S=s\} \overset{d}{=} \hball(s)|\{S=s\} \overset{d}{=} \hball(s) \sim N(\ball(s), \Sigmaall(s))$, where the second equality follows from independence of selection variable $S$ (e.g., random lightning strike) and estimates $\hball(s)$.

\subsection{Quantile Conditionally Unbiased Estimation}\label{d2:sec:inference.procedures}
We now present general conditional inference procedures. In Section \ref{d2:sec:QUE.objective}, we state the formal inference objectives for confidence intervals and point estimators. In Sections \ref{d2:sec:QUE.derivation}-\ref{d2:sec:QUE.proposed.procedures}, we detail the construction of optimal procedures that meet these objectives.

\subsubsection{Objective}\label{d2:sec:QUE.objective}
To perform valid conditional inference, it suffices to derive estimators $\quant_{\alpha}$ that are $\alpha$-quantile conditionally unbiased in the sense that their overestimation probability for $l'\bpost$ conditional on $X \in \Xpost$ is equal to $\alpha$:
\begin{equation}\label{d2:eq:QCU}
    \P[\ball]{ l'\bpost \leq \quant_{\alpha} \mid X \in \Xpost} = \alpha, \quad \forall \ball.
\end{equation}
Given such $\quant_{\alpha}$, the confidence interval $CI_{\alpha}(X) = [\quant_{\alpha/2}, \quant_{1-\alpha/2}]$ provides correct conditional coverage:
\begin{equation}\label{d2:eq:QUE.CI}
    \P[\ball]{\quant_{\alpha/2} \leq l'\bpost \leq \quant_{1-\alpha/2} \mid X \in \Xpost} = 1-\alpha, \quad \forall \ball.
\end{equation}
Alternatively, if we want a point estimator of $l'\bpost$, we can use $\quant_{1/2}$ as a median conditionally unbiased estimator:
\begin{equation}\label{d2:eq:QUE.median}
    \P[\ball]{ l'\bpost \leq \quant_{1/2} \mid X \in \Xpost} = \frac{1}{2}, \quad \forall \ball.
\end{equation}
In words, the conditional median of $\quant_{1/2}$ is equal to $l'\bpost$. 

\subsubsection{Derivation}\label{d2:sec:QUE.derivation}
To derive $\quant_{\alpha}$, we follow arguments from \citet{fithian2014optimal}. Assume there exists a set $\Bpost$ of positive measure such that
\begin{equation}\label{d2:eq:X.to.B}
    \Xpost = \curly{X \in \X: \hball \in \Bpost},
\end{equation}
where the set $\Bpost$ is implicitly allowed to depend on $\Sigmaall$. In other words, we focus on deviations that arise due to values of $\hball$. This broadly accommodates the deviations that occur in practice, such as those based solely on PAP estimates $\hbpre$. For deviation sets satisfying 
\eqref{d2:eq:X.to.B}, we have
\begin{align*}
    \hbpost \mid X \in \Xpost \overset{d}{=} \hbpost \mid \hball \in \Bpost.
\end{align*}
Thus, our goal is to obtain valid inference for $l'\bpost$ conditional on $\hball \in \Bpost$. 

\paragraph{Notation.} Let $l_{post}$ denote the vector induced by (i) matrix multiplication to select $\hbpost$ from $\hball$ and (ii) vector multiplication of $\hbpost$ by $l$, so that $l'\hbpost = l_{post}'\hball$. Let $\Sigmapost$ denote the covariance matrix for $\hbpost$, so that the variance of $l'\hbpost$ is $\sigma_{post}^{2} = l_{post}'\Sigmaall l_{post} = l'\Sigmapost l$.

The main challenge is that $l'\hbpost$ is no longer normally distributed after conditioning on $\hball \in \Bpost$, owing to correlation between $l'\hbpost$ and $\hball$: 
\begin{equation}\label{d2:lpost.distribution}
    \begin{pmatrix}
        l'\hbpost \\
        \hball
    \end{pmatrix} \sim
    N\lp 
    \begin{pmatrix}
        l'\bpost \\
        \ball
    \end{pmatrix},
    \begin{pmatrix}
        \sigma_{post}^{2} & l_{post}'\Sigmaall \\
        \Sigmaall l_{post} & \Sigmaall
    \end{pmatrix}
    \rp.
\end{equation}
To account for this correlation, consider the residual $\hat{r}$ from the regression of $\hball$ on $l'\hbpost$ under their joint unconditional distribution (\ref{d2:lpost.distribution}):
\begin{equation*}
    \hat{r} = \hball - \coef l'\hbpost, \quad \coef = \frac{\Cov_{\ball}(\hball, l'\hbpost)}{\Var_{\ball}(l'\hbpost)} = \frac{\Sigmaall l_{post}}{\sigma_{post}^{2}}.
\end{equation*}
The conditional distribution of $l'\hbpost$ given $\{\hball \in \Bpost, \hat{r} = r\}$ is a $N(l'\bpost, \sigma_{post}^{2})$ distribution truncated to the set $\trunc_{post}(r) = \{ z \in \R: r + \gamma z \in \Bpost \}$:
\begin{align}\label{d2:eq:conditioning.residual}
\begin{split}
    l'\hbpost | \{\hball \in \Bpost, \hat{r} = r\} \overset{d}{=} l'\hbpost | \{l'\hbpost \in \trunc_{post}(r), \hat{r} = r\} \overset{d}{=} l'\hbpost | \{l'\hbpost \in \trunc_{post}(r)\},
\end{split}
\end{align}
where the second equality follows from independence of $\hat{r}$ and $l'\hbpost$. To proceed, let
\begin{align}\label{d2:eq:trunc.CDF.formula}
    F_{TN}(z;\mu,\sigma^{2},\trunc) = \frac{\displaystyle \int_{-\infty}^{z} \phi\paren{\frac{t-\mu}{\sigma}}\1\curly{t \in \trunc} dt}{\displaystyle \int_{-\infty}^{\infty}\phi\paren{\frac{t-\mu}{\sigma}}\1\curly{t \in \trunc} dt}
\end{align}
denote the cumulative distribution function (CDF) of the $N(\mu, \sigma^{2})$ distribution truncated to a set $\trunc$, where $\phi(z)$ denotes the probability density function (PDF) of the $N(0, 1)$ distribution. Letting 
$\widehat{\trunc}_{post} = \trunc_{post}(\hat{r})$, equation \eqref{d2:eq:conditioning.residual} and the probability integral transform yields
\begin{equation*}
    F_{TN}(l'\hbpost; l'\bpost, \sigma_{post}^{2}, \widehat{\trunc}_{post})|\{\hball \in \Bpost, \hat{r} = r\} \sim U(0,1), \quad \forall r, \quad \forall \ball.
\end{equation*}
The above holds for all $r$, so we can integrate out the residual $\hat{r}$ to obtain 
\begin{equation}\label{d2:eq:pivotal.quantity}
    F_{TN}(l'\hbpost; l'\bpost, \sigma_{post}^{2}, \widehat{\trunc}_{post})|\hball \in \Bpost \sim U(0,1), \quad \forall \ball.
\end{equation}
Thus, conditional on $\hball \in \Bpost$, the probability of observing $F_{TN}(l'\hbpost; l'\bpost, \sigma_{post}^{2}, \widehat{\trunc}_{post}) \geq 1-\alpha$ is equal to $\alpha$. The truncated normal distribution has strict monotone likelihood ratio in $\mu$, and hence its CDF is strictly decreasing in the potential values $\mu \in \R$ of $l'\bpost$. Thus, given $\hball \in \Bpost$, there exists unique $\quant_{\alpha}^{*}$ for which
\begin{align}\label{d2:eq:defining.QUE}
    F_{TN}(l'\hbpost; \quant_{\alpha}^{*} , \sigma_{post}^{2}, \widehat{\trunc}_{post}) = 1-\alpha.
\end{align}
The estimator $\quant_{\alpha}^{*}$ is quantile conditionally unbiased in the sense of criteria (\ref{d2:eq:QCU}):
\begin{align*}
    \P[\ball]{l'\bpost \leq \quant_{\alpha}^{*} \mid \hball \in \Bpost} = \P[\ball]{F_{TN}(l'\hbpost; l'\bpost, \sigma_{post}^{2}, \widehat{\trunc}_{post}) \geq 1-\alpha \mid \hball \in \Bpost} = \alpha, \quad \forall \beta.
\end{align*}

\subsubsection{Proposed Procedures}\label{d2:sec:QUE.proposed.procedures}
Based on estimator $\quant_{\alpha}^{*}$, the proposed conditional inference procedures are as follows.

\paragraph{Confidence Intervals.} We propose $CI_{\alpha}^{*}(X) =[\quant_{\alpha/2}^{*}, \quant_{1-\alpha/2}^{*}]$ as confidence intervals for $l'\bpost$, which yields
\begin{align*}
    \P[\ball]{l'\bpost \in CI_{\alpha}^{*}(X)|\hball \in \Bpost} = 1-\alpha, \quad \forall \beta.
\end{align*}
That is, $CI_{\alpha}^{*}(X)$ has correct conditional coverage for $l'\bpost$.

\paragraph{Point Estimators.} We propose $\quant_{1/2}^{*}$ as point estimators for $l'\bpost$, which yields
\begin{align*}
    \P[\ball]{ l'\bpost \leq \quant_{1/2}^{*}|\hball \in \Bpost} = \frac{1}{2}, \quad \forall \beta.
\end{align*}
That is, the conditional median of $\quant_{1/2}^{*}$ is equal to $l'\bpost$.

\subsection{Statistical Properties}
We now highlight two statistical properties of $\quant_{\alpha}^{*}$. First, the estimator $\quant_{\alpha}^{*}$ is optimal in the class of quantile conditionally unbiased estimators, in a broad sense formalized by \citet{Pfanzagl1994}. A statement of this result in our setup and notation is as follows. 
\begin{proposition}[\citet{Pfanzagl1994}]\label{d2:prop:pfanzagl}
Given any $\quant_{\alpha}$ that satisfies criteria \eqref{d2:eq:QCU}, and any quasiconvex loss function $d \mapsto L(d, l'\bpost)$ that attains its minimum at $d = l'\bpost$, we have
\begin{equation*}
    \E_{\ball}[L(\quant_{\alpha}^{*}, l'\bpost)|\hball \in \Bpost] \leq \E_{\ball}[L(\quant_{\alpha}, l'\bpost)|\hball \in \Bpost], \quad \forall \ball.
\end{equation*}
\end{proposition}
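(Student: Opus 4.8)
The plan is to recognize the statement as an instance of the classical optimality theory for quantile-unbiased estimation in a one-parameter exponential family with monotone likelihood ratio, applied to the conditioned (truncated-normal) model constructed in Section~\ref{d2:sec:QUE.derivation}. Conditioning on $\{\hball \in \Bpost\}$ leaves an exponential family in $\ball$ (the base measure of \eqref{d2:eq:exact.normality} restricted to $\Bpost$). Writing $\psi := l'\bpost = l_{post}'\ball$ for the parameter of interest and completing $l_{post}$ to a basis, the residual $\hat r = \hball - \coef\, l'\hbpost$ plays the role of a complete sufficient statistic for the remaining (nuisance) coordinates once $\psi$ is held fixed. The structural fact that makes this work is recorded in \eqref{d2:eq:conditioning.residual}: the selection event factors through $\hat r$ as $\{l'\hbpost \in \truncpost(\hat r)\}$, so conditioning additionally on $\hat r = r$ yields the univariate model in which $T := l'\hbpost$ given $\{\hat r = r,\ \hball\in\Bpost\}$ is $N(\psi, \sigma_{post}^{2})$ truncated to $\truncpost(r)$ --- a one-parameter exponential family in $\psi$ with strict monotone likelihood ratio in $T$, whose truncation set does not depend on $\ball$ and whose law depends on $\ball$ only through $\psi$.

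\noindent\textbf{Step 1 (reduce the hypothesis to each residual slice).} Because $\hat r$ is complete and sufficient for the nuisance coordinates at fixed $\psi$ (one checks that restricting the base measure to $\Bpost$ and integrating out $T$ reweights the nuisance location family of $\hat r$ only by a fixed, nuisance-free function of $r$, hence preserves completeness), a Lehmann--Scheff\'e-type argument upgrades \eqref{d2:eq:QCU} --- which states $\EP[\ball]{\1\{\psi \le \quant_\alpha\} \mid \hball \in \Bpost} = \alpha$ for all $\ball$ --- to the slicewise statement $\EP[\ball]{\1\{\psi \le \quant_\alpha\}\mid \hat r = r,\ \hball\in\Bpost} = \alpha$ for a.e.\ $r$, for every $\psi$: this conditional expectation is a function of $r$ alone that integrates to $\alpha$ under a complete family, hence equals $\alpha$ a.s. The same holds trivially for $\quant_\alpha^{*}$, which within the slice $\{\hat r = r\}$ is, by its defining equation \eqref{d2:eq:defining.QUE}, exactly the inverse of the truncated-normal CDF $F_{TN}(\cdot\,;\mu,\sigma_{post}^{2},\truncpost(r))$; since $F_{TN}$ is a strictly increasing CDF in its argument and strictly decreasing in $\mu$, $\quant_\alpha^{*}$ is a strictly increasing function of $T$ in each slice.

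\noindent\textbf{Step 2 (the univariate core lemma).} Within a fixed slice, I would show that among all $\alpha$-quantile-unbiased estimators of $\psi$ the monotone one $\quant_\alpha^{*}$ minimizes $\EP[\ball]{L(\cdot,\psi)}$ for every quasiconvex $L(\cdot,\psi)$ minimized at $\psi$. Using the layer-cake identity $\EP[\ball]{L(\delta,\psi)} - L(\psi,\psi) = \int_{L(\psi,\psi)}^{\infty}\big[\P[\ball]{\delta < a_t} + \P[\ball]{\delta > b_t}\big]\,dt$, where $[a_t,b_t]\ni\psi$ is the (convex) sublevel set $\{d: L(d,\psi)\le t\}$, it suffices to prove $\P[\ball]{\quant_\alpha^{*} < a} \le \P[\ball]{\delta < a}$ for every $a \le \psi$ and $\P[\ball]{\quant_\alpha^{*} > b} \le \P[\ball]{\delta > b}$ for every $b \ge \psi$, where $\delta = \quant_\alpha$. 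For the first: $\{\quant_\alpha^{*} < a\} = \{T < t_a\}$ is a lower interval having the same probability ($=1-\alpha$) as $\{\delta < a\}$ under the parameter value $a$, by quantile-unbiasedness evaluated at $a$; the monotone-likelihood-ratio comparison (equivalently, optimality of the one-sided Neyman--Pearson test) then gives $\P[\ball]{T < t_a}\le\P[\ball]{\delta < a}$ for all $\ball$ with $\psi \ge a$. The upper-tail inequality is symmetric. Integrating the slicewise inequality back over $\hat r$ (the comparison is preserved under integration) gives the claimed inequality conditional on $\{\hball\in\Bpost\}$, for every $\ball$.

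\noindent\textbf{Main obstacle.} I expect the delicate part to be making Step 1 fully rigorous and dovetailing it with Step 2: the Lehmann--Scheff\'e step delivers the slicewise unbiasedness only for a.e.\ $r$ and for each fixed $\psi$, whereas the layer-cake argument in Step 2 invokes quantile-unbiasedness at a continuum of levels $a$. Upgrading to a single null set of $r$'s on which the estimator is quantile-unbiased simultaneously at all levels requires the standard device of a countable dense set of parameter values together with continuity/monotonicity in $a$ (and mild regularity excluding pathological atoms of $\quant_\alpha$ in the tail-probability comparisons), plus the verification --- sketched above --- that the truncation does not destroy completeness of $\hat r$. These are precisely the measure-theoretic points carried out in \citet{Pfanzagl1994}, whose framework the conditioned model of Section~\ref{d2:sec:QUE.derivation} is built to fit; the remaining inputs (the truncated normal is a strict-MLR one-parameter exponential family, and $\quant_\alpha^{*}$ is its monotone quantile-unbiased estimator via \eqref{d2:eq:defining.QUE}) are already in hand, so the proof amounts to invoking \citet{Pfanzagl1994} after this verification, much as in \citet{fithian2014optimal}.
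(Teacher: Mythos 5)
Your proposal is correct and follows essentially the same route as the paper: both reduce to the factorization $\hball \leftrightarrow (l'\hbpost, \hat r)$ with the reparametrization $\ball \mapsto (l'\bpost,\, \ball - \coef\, l'\bpost)$, observe that truncation to $\Bpost$ yields a (truncated) exponential family in which $\hat r$ is the complete sufficient statistic for the nuisance and the conditional law of $l'\hbpost$ is a strict-MLR truncated normal, and then invoke \citet{Pfanzagl1994}. The only difference is that you additionally sketch the internal mechanics of Pfanzagl's Theorem 5.5.13 (Lehmann--Scheff\'e slicing, layer-cake decomposition, Neyman--Pearson comparison), which the paper leaves to the citation.
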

\begin{proof}
See Appendix \ref{d2:app:proof:pfanzagl}.
\end{proof}

Thus, for a broad class of loss functions, the estimator $\quant_{\alpha}^{*}$ yields lower expected loss than any other quantile conditionally unbiased estimator $\quant_{\alpha}$. This means there is limited scope for improving upon $\quant_{\alpha}^{*}$ for conditional inference: $\quant_{1/2}^{*}$ is an optimal point estimator for $l'\bpost$ and $CI_{\alpha}^{*}(X) =[\quant_{\alpha/2}^{*}, \quant_{1-\alpha/2}^{*}]$ is an optimal (equal-tailed) confidence interval for $l'\bpost$.

The second property is that conditional inferences with $\quant_{\alpha}^{*}$ will agree with conventional \textit{unconditional} inferences when the conditioning event $\{\hball \in \Bpost\}$ occurs with high probability. That is, for deviations where the use of conventional point estimators $l'\hbpost$ and confidence intervals $[l'\hbpost \pm z_{1-\alpha/2}\sigma_{post}]$ would yield minimal distortions, the researcher does not pay a price when using the conditional analogues $(\quant_{1/2}^{*}$, $CI_{\alpha}^{*}(X))$ instead.

To gain intuition for this property, note $\{\hball \in \Bpost\} = \{l'\hbpost \in \widehat{\trunc}_{post}\}$ and consider the case where $\Bpost = \R^{\dim(\hball)}$, so that $\widehat{\trunc}_{post} = \R$. In this case, formula \eqref{d2:eq:trunc.CDF.formula} yields
\begin{align}\label{d2:eq:unconditional.QUE}
    1-\alpha = F_{TN}(l'\hbpost;\quant_{\alpha}^{*},\sigma^{2}_{post},\widehat{\trunc}_{post}) = \Phi\paren{\frac{l'\hbpost - \quant_{\alpha}^{*}}{\sigma_{post}}},
\end{align}
where $\Phi(z)$ is the CDF of the $N(0,1)$ distribution. This is solved by $\quant_{\alpha}^{*} = l'\hbpost + z_{\alpha}\sigma_{post}$. Thus, when $\P[\ball]{\hball \in \Bpost} \approx 1$ so that $\Bpost \approx \R^{\dim(\hball)}$, we expect $\quant_{\alpha}^{*} \approx l'\hbpost + z_{\alpha}\sigma_{post}$. We formalize this intuition in the following result, based on \citet[Appendix C.3]{andrews2024inference}.

\begin{proposition}[\citet{andrews2024inference}]\label{d2:prop:highprob.deviations}
Hold $\Sigmaall$ fixed and consider a sequence of sets $\Bpost^{m}$ and estimands $\ball_{m}$. If $\P[\ball_{m}]{\hball \in \Bpost^{m}} \to 1$, then $\quant_{\alpha,m}^{*}$ converges in probability to $l'\hbpost + z_{\alpha}\sigma_{post}$ conditional on $\hball \in \Bpost^{m}$. That is, for each $\e > 0$, we have
\begin{align*}
    \P[\ball_{m}]{|\quant_{\alpha,m}^{*} - (l'\hbpost + z_{\alpha}\sigma_{post})| > \e \mid \hball \in \Bpost^{m}} \to 0.
\end{align*}
Thus, conditional on $\hball \in \Bpost^{m}$, $\quant_{1/2,m}^{*}$ converges in probability to $l'\hbpost$, and the endpoints of $CI_{\alpha,m}^{*}(X) = [\quant_{\alpha/2,m}^{*}, \quant_{1-\alpha/2,m}^{*}]$ converge in probability to the endpoints of $[l'\hbpost \pm z_{1-\alpha/2}\sigma_{post}]$. 
\end{proposition}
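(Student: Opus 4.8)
The plan is to adapt the single-set argument of \citet[Appendix C.3]{andrews2024inference} to the sequence $\{\Bpost^{m}\}$, isolating the work into one deterministic lemma about the defining equation \eqref{d2:eq:defining.QUE} (maintaining for each $m$ the regularity under which $\quant_{\alpha,m}^{*}$ is constructed in Section~\ref{d2:sec:QUE.derivation}). First I would observe that, because $\P[\ball_{m}]{\hball \in \Bpost^{m}} \to 1$, it is enough to prove the \emph{unconditional} statement
\[
  \P[\ball_{m}]{\,|\quant_{\alpha,m}^{*} - (l'\hbpost + z_{\alpha}\sigma_{post})| > \e,\ \hball \in \Bpost^{m}} \longrightarrow 0
\]
for every $\e > 0$: dividing by $\P[\ball_{m}]{\hball \in \Bpost^{m}}$ then yields the conditional conclusion, and the statements for $\quant_{1/2,m}^{*}$ and the endpoints of $CI_{\alpha,m}^{*}(X)$ follow by replacing $\alpha$ with $1/2$ and with $\alpha/2,\ 1-\alpha/2$. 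Next I would reparametrize $\hball$ by the independent pair $(\hat{r},\, l'\hbpost)$ with $\hat{r} = \hball - \coef\, l'\hbpost$, so that under $\P[\ball_{m}]$ we have $l'\hbpost \sim N(\mu_{m}, \sigma_{post}^{2})$ with $\mu_{m} := l_{post}'\ball_{m}$ (and $\sigma_{post}$ fixed since $\Sigmaall$ is), the set entering \eqref{d2:eq:defining.QUE} is $\widehat{\trunc}_{post}^{m} = \trunc_{post}^{m}(\hat{r})$, and the conditioning event is $\{\hball \in \Bpost^{m}\} = \{l'\hbpost \in \trunc_{post}^{m}(\hat{r})\}$.

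The core is the following deterministic lemma (with $\sigma_{post}$, $\alpha$ fixed): for every $\eta > 0$ and $M < \infty$ there is $\delta = \delta(\eta, M) > 0$ such that for any measurable $T \subseteq \R$ and $c \in \R$ with $N(c, \sigma_{post}^{2})(T) \ge 1-\delta$, and any $y \in T$ with $|y - c| \le M\sigma_{post}$, the unique solution $q^{*}$ of $F_{TN}(y; q^{*}, \sigma_{post}^{2}, T) = 1-\alpha$ satisfies $|q^{*} - (y + z_{\alpha}\sigma_{post})| \le \eta$. I would prove this via the strict monotonicity of $\mu \mapsto F_{TN}(y;\mu,\sigma_{post}^{2},T)$ established in Section~\ref{d2:sec:QUE.derivation}: it suffices to show $F_{TN}(y; y + z_{\alpha}\sigma_{post} + \eta, \sigma_{post}^{2}, T) \le 1-\alpha \le F_{TN}(y; y + z_{\alpha}\sigma_{post} - \eta, \sigma_{post}^{2}, T)$, which in turn reduces to bounding the missing mass $N(\mu, \sigma_{post}^{2})(\R \setminus T)$ uniformly over $\mu$ with $|\mu - c| \le (M + |z_{\alpha}| + \eta/\sigma_{post})\sigma_{post}$. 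For that I would split $\R \setminus T$ into a bounded part — on which the likelihood ratio between the $N(\mu,\sigma_{post}^{2})$ and $N(c,\sigma_{post}^{2})$ densities is bounded by a constant depending only on $M$ and the radius, so the $1-\delta$ bound transfers up to that constant — and a tail part, whose $N(\mu,\sigma_{post}^{2})$ mass is controlled by a Gaussian tail probability small once the radius is large, uniformly in such $\mu$. The main obstacle is organizing this change-of-measure/tail split to sidestep the apparent circularity (the bound on missing mass needs $\mu$ near $c$, while $\mu = q^{*}$ being near $c$ is the conclusion): this is resolved by the monotone bracketing, since the bracket endpoints $y + z_{\alpha}\sigma_{post} \pm \eta$ lie within a fixed radius of $c$ by construction.

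With the lemma in hand, the remaining steps are routine. Let $p_{m}(\hat{r}) := N(\mu_{m}, \sigma_{post}^{2})(\trunc_{post}^{m}(\hat{r}))$, so $\EP[\ball_{m}]{p_{m}(\hat{r})} = \P[\ball_{m}]{\hball \in \Bpost^{m}} \to 1$; Markov's inequality gives $\P[\ball_{m}]{1 - p_{m}(\hat{r}) \ge \delta} \le (1 - \P[\ball_{m}]{\hball \in \Bpost^{m}})/\delta$, and $|l'\hbpost - \mu_{m}| \le M\sigma_{post}$ with probability at least $1 - 2(1 - \Phi(M))$. Fix $\e > 0$ and take $\eta = \e$; I would choose a sequence $M = M_{m} \to \infty$ diverging slowly enough that $(1 - \P[\ball_{m}]{\hball \in \Bpost^{m}})/\delta(\e, M_{m}) \to 0$, which exists because $\delta(\e,\cdot)$ is positive while $1 - \P[\ball_{m}]{\hball \in \Bpost^{m}} \to 0$. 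Then both events above hold with probability $\to 1$, and on their intersection with $\{\hball \in \Bpost^{m}\}$ the lemma (applied with $T = \trunc_{post}^{m}(\hat{r})$, $c = \mu_{m}$, $y = l'\hbpost$) gives $|\quant_{\alpha,m}^{*} - (l'\hbpost + z_{\alpha}\sigma_{post})| \le \e$. Hence the displayed unconditional probability tends to $0$, and the proposition follows.
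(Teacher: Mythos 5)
Your proposal is correct, and it reaches the conclusion by a genuinely different organization than the paper's proof. Both arguments share the same core mechanism—strict monotonicity of $\mu \mapsto F_{TN}(y;\mu,\sigma_{post}^{2},T)$ plus bracketing at $y + z_{\alpha}\sigma_{post} \pm \eta$, with the bracket inequalities following once the missing mass $N(\mu,\sigma_{post}^{2})(\R\setminus T)$ is small—but they manage the randomness differently. The paper first reduces, via a $\mathcal{G}(\kappa)\subseteq\mathcal{H}(\e,\delta)$ set-inclusion argument, to deterministic sequences $(\Bpost^{m},\ball_{m},r_{m})$ with conditional deviation probability tending to one, then normalizes and invokes \citet[Lemma 5.10]{van2000asymptotic} for monotone criterion functions, verifying pointwise convergence $\Psi_{m}(z)\to\Psi(z)=(1-\alpha)-\Phi(-z)$ by an almost-sure/dominated-convergence argument that exploits mutual absolute continuity of the shifted normals. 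You instead prove a fully quantitative deterministic lemma uniform over all measurable $T$ with $N(c,\sigma_{post}^{2})(T)\geq 1-\delta(\eta,M)$ and all $y\in T$ with $|y-c|\leq M\sigma_{post}$, and then control the two hypotheses stochastically: Markov's inequality applied to the conditional missing mass $1-p_{m}(\hat{r})$ handles the residual (replacing the paper's subsequence reduction), and a slowly diverging truncation level $M_{m}$ handles the unboundedness of $l'\hbpost-\mu_{m}$ (replacing the paper's almost-sure convergence step). The change-of-measure-plus-Gaussian-tail split in your lemma is sound—the likelihood ratio between $N(\mu,\sigma_{post}^{2})$ and $N(c,\sigma_{post}^{2})$ is bounded on $|t-c|\leq L\sigma_{post}$ by a constant depending only on $L$ and $R=M+|z_{\alpha}|+\eta/\sigma_{post}$, and the tail contribution is $2(1-\Phi(L-R))$—so $\delta(\eta,M)$ exists as claimed. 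What your route buys is a self-contained, in-principle quantitative bound (explicit rates in terms of $1-\P[\ball_{m}]{\hball\in\Bpost^{m}}$) without appeal to the Z-estimator lemma; what the paper's route buys is brevity by outsourcing the bracketing to a standard result and avoiding the explicit uniformity bookkeeping over $T$ and $M$.
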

\begin{proof}
See Appendix \ref{d2:app:proof:highprob.deviations}.
\end{proof}

\subsection{Polyhedral Conditioning}\label{d2:sec:polyhedral.conditioning}
To compute $\quant_{\alpha}^{*}$ we focus on conditioning events $\hball \in \Bpost$ such that, for a finite set of matrices $A_{post, k}$ and cutoff vectors $c_{post, k}$ yielding polyhedra $\{\hball: A_{post, k} \hball \leq c_{post, k}\}$, $k=1,\ldots, K$, we have
\begin{equation}\label{d2:eq:polyhedral.conditioning}
    \Bpost = \bigcup_{k=1}^{K} \{\hball: A_{post, k} \hball \leq c_{post, k}\}. 
\end{equation}
That is, we restrict attention to deviations that arise from the estimates $\hball$ falling into a union of polyhedra, where the cutoff vectors $c_{post, k}$ are implicitly allowed to depend on $\Sigmaall$. This setup yields computationally tractable $\quant_{\alpha}^{*}$, and broadly accommodates the types of deviations that occur in practice, such as those based on statistical significance cutoffs.

\paragraph{Statistical Significance Cutoffs.} Suppose $\hbpost$ is of interest when the linear combination $v'\hbpre$ of PAP estimates passes some significance threshold governed by $\eta$. Let $v'\hbpre = v_{pre}'\hball$, where $v_{pre}$ denotes the vector induced by (i) matrix multiplication to select $\hbpre$ from $\hball$ and (ii) vector multiplication of $\hbpre$ by $v$. A two-sided statistical significance cutoff yields
\begin{align}\label{d2:eq:Bpost.significance}
    \Bpost &= \{\hball: |v'\hbpre| \geq z_{1-\eta/2}\spre\} = \{\hball: A_{post,1}\hball \leq c_{post,1}\} \bigcup \{\hball: A_{post,2}\hball \leq c_{post,2}\},
\end{align}
where $A_{post,1} = v_{pre}'$, $A_{post,2} = -v_{pre}'$, and $c_{post,1} = c_{post,2} = -z_{1-\eta/2}\spre$. On the other hand, a two-sided statistical insignificance cutoff yields
\begin{align*}
    \Bpost &= \{\hball: |v'\hbpre| \leq z_{1-\eta/2}\spre\} = \{\hball: A_{post}\hball \leq c_{post}\}, 
\end{align*}
where $A_{post} = (-v_{pre}, v_{pre})'$ and $c_{post} = (z_{1-\eta/2}\spre, z_{1-\eta/2}\spre)'$.

Under polyhedral conditioning, we obtain a convenient representation for the truncation set $\widehat{\trunc}_{post}$, based on \citet[Lemma 5.1]{lee2016exact}. In what follows, we define the maximum over the empty set as $-\infty$ and the minimum over the empty set as $+\infty$. 
\begin{proposition}[\citet{lee2016exact}]\label{d2:prop:lee.representation}
Define
\begin{align*}
    \hZminuspostk &= \max_{j \in J_{k}^{-}} \frac{(\cpostk)_{j} - (\Apostk \hat{r})_{j}}{(\Apostk\coef)_{j}}, & J_{k}^{-} &= \curly{j: (\Apostk\coef)_{j} < 0}, \\
    \hZpluspostk &= \min_{j \in J_{k}^{+}} \frac{(\cpostk)_{j} - (\Apostk \hat{r})_{j}}{(\Apostk \coef)_{j}}, & J_{k}^{+} &= \curly{j: (\Apostk\coef)_{j} > 0}, \\ 
    \hZzeropostk &= \min_{j \in J_{k}^{0}} (\cpostk)_{j} - (\Apostk \hat{r})_{j}, & J_{k}^{0} &= \curly{j: (\Apostk\coef)_{j} = 0}.
\end{align*}
If the set $\Bpost$ takes the form \eqref{d2:eq:polyhedral.conditioning}, then
\begin{equation*}
    \widehat{\trunc}_{post} = \bigcup_{k=1}^{K} \widehat{\trunc}_{post,k}, \quad  
    \widehat{\trunc}_{post,k} =
    \begin{cases}
        [\hZminuspostk, \hZpluspostk], & \hZzeropostk \geq 0, \\
        \hfil \varnothing, & \hZzeropostk < 0.
    \end{cases}
\end{equation*}
Moreover, for each $k$, the event $\{A_{post, k} \hball \leq c_{post, k}\}$ is equivalent to $\{l'\hbpost \in \widehat{\trunc}_{post,k}\}$.
\end{proposition}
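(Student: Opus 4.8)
The plan is to reduce the $K$ polyhedral constraints on the vector $\hball$ to interval constraints on the scalar $l'\hbpost$, using the affine identity $\hball = \hat{r} + \coef\, l'\hbpost$ that underlies \eqref{d2:eq:conditioning.residual} (recall $\hat{r} = \hball - \coef\, l'\hbpost$ is a deterministic function of $\hball$). Since $\widehat{\trunc}_{post} = \{z \in \R : \hat{r} + \coef z \in \Bpost\}$ and $\Bpost$ is the union \eqref{d2:eq:polyhedral.conditioning}, it suffices to analyze a single polyhedron and then take the union. So fix $k$ and consider $\trunc_{post,k}(\hat{r}) = \{z \in \R : \Apostk(\hat{r} + \coef z) \leq \cpostk\}$; because $\hball = \hat{r} + \coef\, l'\hbpost$ holds identically, the event $\{\Apostk\hball \leq \cpostk\}$ is literally the event $\{l'\hbpost \in \trunc_{post,k}(\hat{r})\}$, so the ``moreover'' claim will follow once we show $\trunc_{post,k}(\hat{r}) = \widehat{\trunc}_{post,k}$.

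First I would write the $k$-th polyhedral constraint row by row: row $j$ reads $(\Apostk\hat{r})_j + z\,(\Apostk\coef)_j \leq (\cpostk)_j$, and I would split the rows by the sign of $(\Apostk\coef)_j$. For $j \in J_k^{+}$ the row rearranges to the upper bound $z \leq \frac{(\cpostk)_j - (\Apostk\hat{r})_j}{(\Apostk\coef)_j}$, and intersecting these gives $z \leq \hZpluspostk$; for $j \in J_k^{-}$, dividing by the negative quantity $(\Apostk\coef)_j$ flips the inequality to a lower bound and intersecting gives $z \geq \hZminuspostk$; for $j \in J_k^{0}$ the row does not involve $z$ at all and holds precisely when $(\cpostk)_j - (\Apostk\hat{r})_j \geq 0$, so all such rows hold simultaneously iff $\hZzeropostk \geq 0$. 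With the stated conventions $\max\varnothing = -\infty$ and $\min\varnothing = +\infty$, the degenerate index sets are absorbed automatically: empty $J_k^{-}$ gives lower bound $-\infty$, empty $J_k^{+}$ gives upper bound $+\infty$, and empty $J_k^{0}$ gives $\hZzeropostk = +\infty \geq 0$, making the zero-rows vacuous. Combining the three families, $\trunc_{post,k}(\hat{r})$ equals $[\hZminuspostk, \hZpluspostk]$ when $\hZzeropostk \geq 0$ and is empty otherwise, which is exactly $\widehat{\trunc}_{post,k}$.

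Finally, taking the union over $k$: from \eqref{d2:eq:polyhedral.conditioning}, $\hat{r} + \coef z \in \Bpost$ iff $\hat{r} + \coef z$ lies in at least one polyhedron, hence $\widehat{\trunc}_{post} = \bigcup_{k=1}^{K} \trunc_{post,k}(\hat{r}) = \bigcup_{k=1}^{K} \widehat{\trunc}_{post,k}$, completing the proof. I do not anticipate a genuine obstacle here: the argument is the elementary fact that a finite intersection of half-lines in $\R$ is an interval, together with the affine decomposition recorded before \eqref{d2:eq:conditioning.residual}. The only places that require care are the inequality reversal when dividing a row by a negative $(\Apostk\coef)_j$ and the bookkeeping of the empty-index-set conventions, which I would spell out explicitly rather than leave implicit.
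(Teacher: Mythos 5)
Your proposal is correct and follows essentially the same route as the paper's proof: decompose $\hball = \hat{r} + \coef\, l'\hbpost$, split each polyhedron's rows by the sign of $(\Apostk\coef)_j$ to get the lower/upper/zero constraints, and then take the union over $k$. The only (harmless) difference is that you spell out the empty-index-set conventions explicitly, which the paper leaves implicit.
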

\begin{proof}
See Appendix \ref{d2:app:proof:lee.representation}.
\end{proof}

In addition to yielding computationally tractable estimators $\quant_{\alpha}^{*}$, the polyhedral structure in \eqref{d2:eq:polyhedral.conditioning} allow us to better interpret the truncation set $\widehat{\trunc}_{post}$, and hence the behavior of $\quant_{\alpha}^{*}$ along relevant dimensions of the inference problem. The following section illustrates this behavior when the deviation event is characterized by a one-sided significance test. 

\subsection{Example of Explicit Derivation}\label{d2:sec:simple.example}
Consider inference on post estimand $\bpost \in \R$ conditional on PAP estimate $\hbpre \in \R$ crossing a one-sided $\eta = 0.05$ significance cutoff:
\begin{equation*}
    \hbpost \mid \hbpre \geq 1.96, \quad 
    \Sigmaall = 
    \begin{pmatrix}
    1 & \rho  \\
    \rho  & 1
    \end{pmatrix}, \quad \rho \geq 0.
\end{equation*}
Here $\rho$ is the correlation between $\hbpre$ and $\hbpost$, and we have assumed $\spre = \spost = 1$. Let
\begin{align*}
    CI_{0.95}^{*}\left(\,\hbpost\mid\{\hbpre \geq 1.96\}\,\right) = [\quant_{0.025}^{*}, \quant_{0.975}^{*}]
\end{align*}
denote the corrected (i.e., conditional) 95\%  interval. Figure~\ref{fig:cutoff_distortions} depicts how this interval (in green) varies as the realized $\hbpre$ moves further from the reporting threshold of $1.96$. To give a sense of magnitudes, we (i) mark significance cutoffs $z_{1-\eta}$ on the horizontal axis for different values of $\eta$ and (ii) depict the conventional 95\% interval (in orange). Each panel depicts this behavior over different values of the correlation $\rho$ between $\hbpre$ and $\hbpost$.

\begin{figure}
    \centering
    \includegraphics[width=0.49\linewidth]{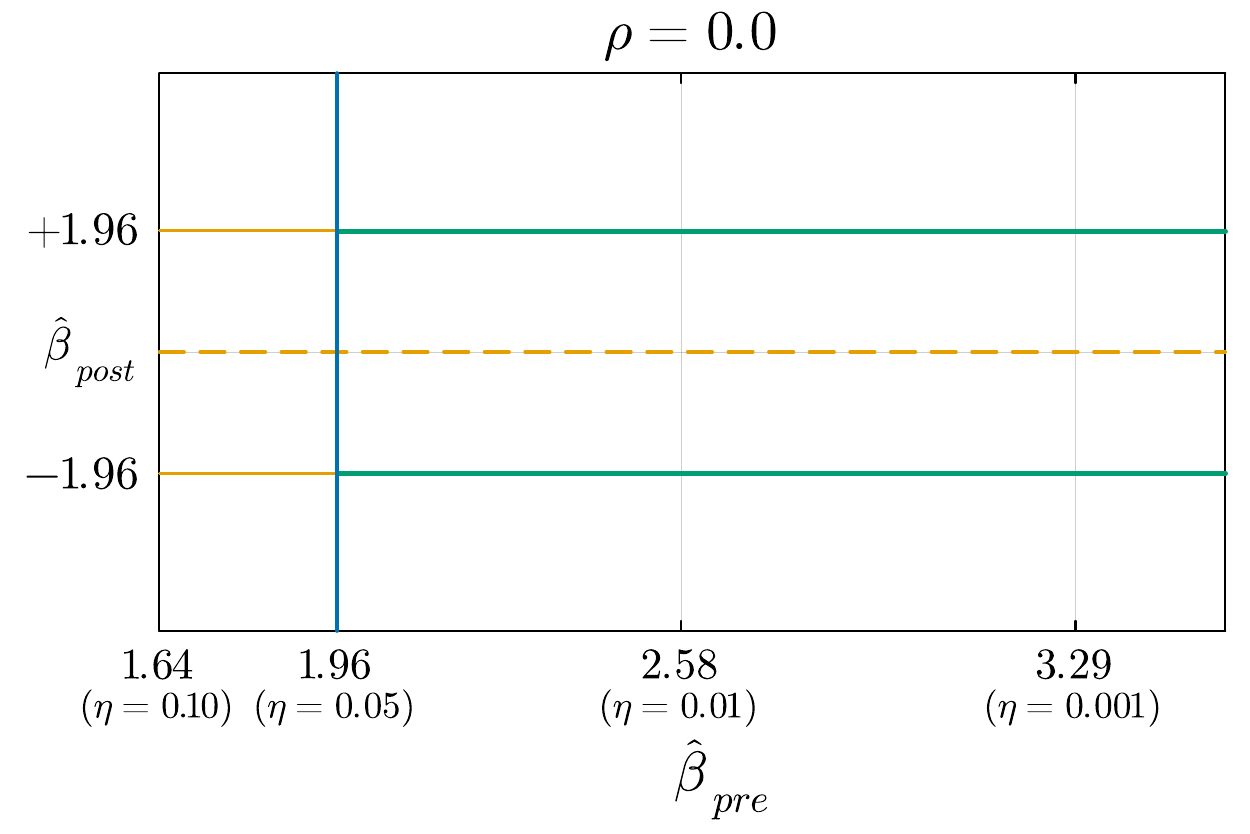}
    \includegraphics[width=0.49\linewidth]{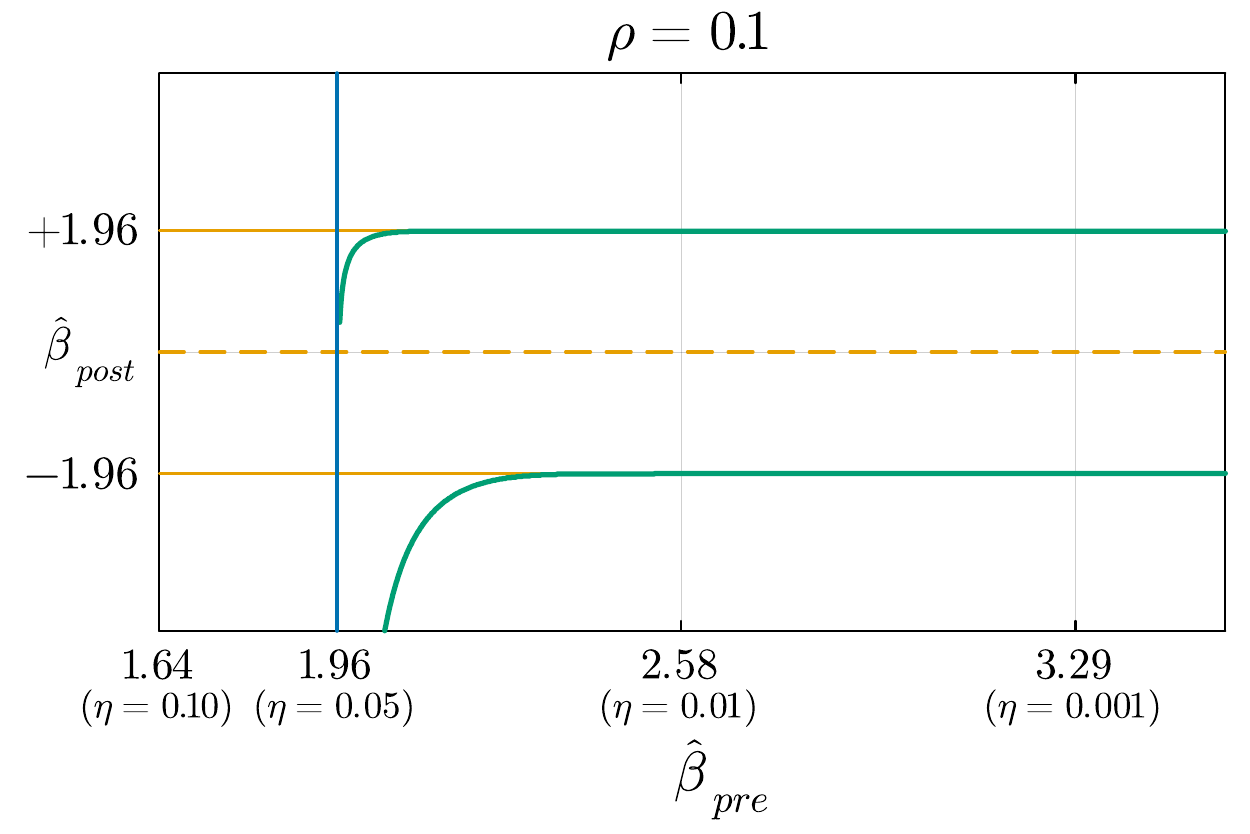}\\[1em]
    \includegraphics[width=0.49\linewidth]{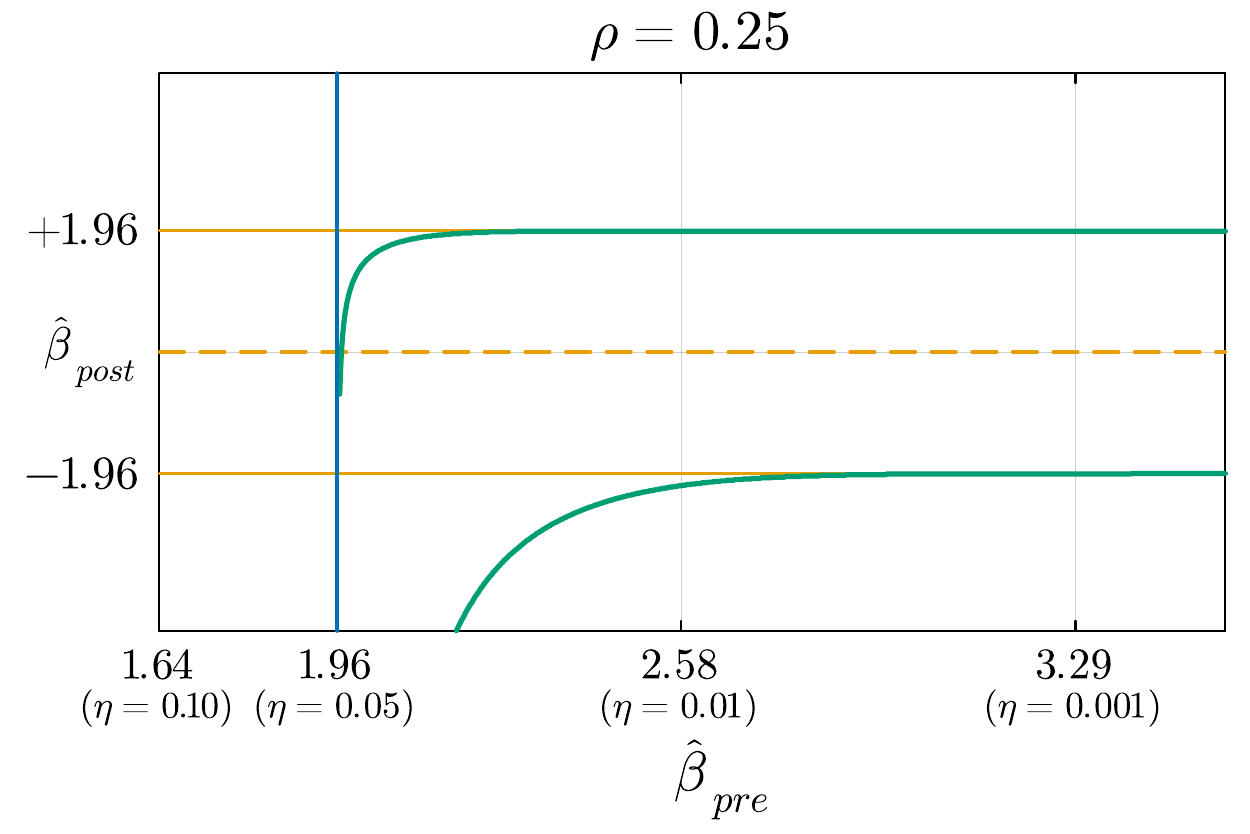}
    \includegraphics[width=0.49\linewidth]{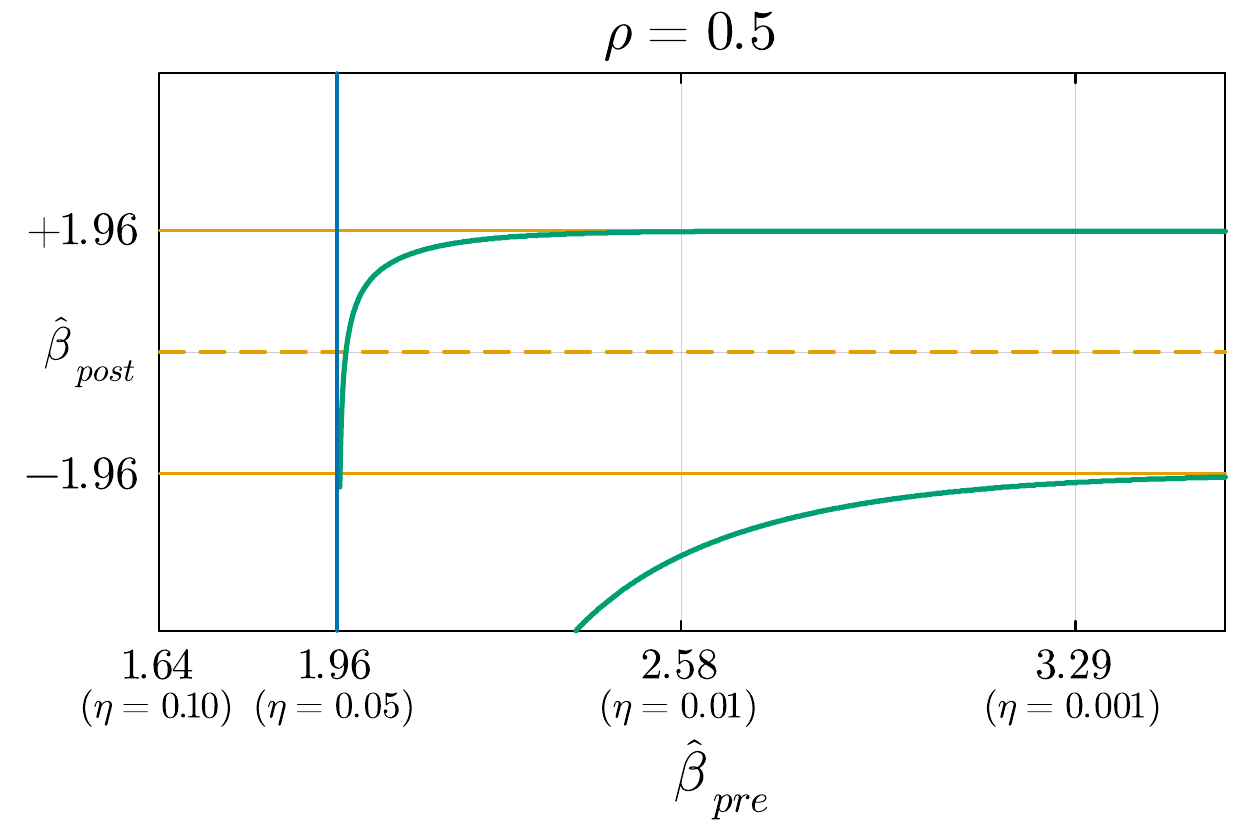} 
    \caption{Corrected $CI_{0.95}^{*}\left(\,\hbpost\mid\{\hbpre \geq 1.96\}\,\right)$ vs. Conventional $CI_{0.95}\left(\,\hbpost\right)$}
    \label{fig:cutoff_distortions}
\end{figure}

There are two major takeaways from these plots of what drives distortions. (i) \textit{Distance:} The closer $\hbpre$ is to the cutoff of the conditioning event (e.g., the distance between $\hbpre$ and 1.96), the greater the distortion from the conventional intervals. By contrast, when the realized $\hbpre$ is far above the cutoff, accounting for conditional reporting has a negligible effect. (ii) \textit{Strength of Correlation:} When the dependence (governed by $\rho$) between $\hbpost$ and the conditioning event $\{\hbpre \geq 1.96\}$ grows, the distortion becomes more pronounced. As seen in the first panel, the corrected and conventional intervals are equal when $\hbpost$ and $\hbpre$ are independent ($\rho=0$).

\begin{remark}
The above example highlights that the corrections produced by $\quant_{\alpha}^{*}$ depend on the covariance between the estimator  $\hbpost$ and the conditioning event $\{\hbpre \geq 1.96\}$. In Appendix \ref{d2:app:sec:covariance.structure}, we show that this holds more broadly for the general case of estimator $l'\hbpost$ and conditioning event $\bigcup_{k} \{A_{post, k} \hball \leq c_{post, k}\}$.
\end{remark}

\section{Feasible Inference}\label{d2:sec:feasible.inference}
We have thus far assumed that the vector $\hball$ of PAP and post estimates is normally distributed with mean $\ball$ and a known covariance matrix $\Sigmaall$. This assumption is motivated by large-sample asymptotic results that yield asymptotic normality of estimates (e.g., under the central limit theorem) and consistency of covariance matrix estimators (e.g., under the law of large numbers). These asymptotic results underlie conventional procedures used for unconditional inference, which replace $(\hball, \Sigmaall)$ with analogues $(\hballn, \hSigmaalln)$ constructed from a sample of size $n$. Following this same logic, Section \ref{d2:sec:implementation} shows how to implement plug-in versions of the conditional inference procedures proposed in Section \ref{d2:sec:inference}. In Section \ref{d2:sec:asymptotic.validity}, we establish the uniform asymptotic validity of these plug-in procedures over a broad class of probability distributions as $n \to \infty$. We prove this asymptotic validity in Appendix \ref{d2:app:sec:uniformity}.

\subsection{Implementation}\label{d2:sec:implementation}

Given a sample of size $n$ from some unknown distribution $P_{n}$, the researcher constructs PAP and post estimates, which we collect into a vector $\hballn$. The researcher also constructs a corresponding covariance matrix estimator $\hSigmaalln$. The researcher is interested in a linear combination $l'\hat{\beta}_{post,n}$ of some vector of post estimates $\hat{\beta}_{post,n}$. As before, we let $l_{post}$ denote the vector induced by (i) matrix multiplication to select $\hat{\beta}_{post,n}$ from $\hballn$ and (ii) vector multiplication of $\hat{\beta}_{post,n}$ by $l$, so that $l'\hat{\beta}_{post,n} = l_{post}'\hballn$. The corresponding variance estimator is $\hsigmapostn^{2} = l_{post}'\hSigmaalln l_{post} = l'\hat{\Sigma}_{post,n} l$. As we formalize below in Section \ref{d2:sec:asymptotic.validity}, we can think of $\hballn$ as corresponding to some underlying estimand $\ball(P_{n})$. Thus, the goal is to conduct inference on parameter $l'\ball_{post}(P_{n}) = l_{post}'\ball(P_{n})$. 

The feasible conditional inference procedures are plug-in versions of the procedures from Section \ref{d2:sec:inference}. That is, the feasible plug-in procedures replace all expressions that depend on $(\hball, \Sigmaall)$ in Section \ref{d2:sec:inference} with analogous expressions that depend on $(\hballn, \hSigmaalln)$. We explicitly describe these plug-in procedures in Section \ref{d2:sec:implementation.procedure}, and provide a concrete example in Section \ref{d2:sec:implementation.example}.

\subsubsection{Explicit Procedures}\label{d2:sec:implementation.procedure}

\noindent \fbox{\textbf{Step 1.} Derive matrices and vectors $(\Apostk, \hcpostkn)$, $k = 1, \ldots, K$.} 

\vspace{0.2cm}

Let $X_{1:n} \in \mathcal{X}_{n}$ represent the sample data. The researcher deviates to post estimates $\hat{\beta}_{post,n}$ when $X_{1:n} \in \mathcal{X}_{post,n}$, where
\begin{align*}
    \mathcal{X}_{post,n} = \{X_{1:n} \in \mathcal{X}_{n}: \hballn \in \hBpostn\}, \quad \hBpostn = \bigcup_{k=1}^{K} \{\hballn: \Apostk \hballn \leq \hcpostkn\}.
\end{align*}
The polyhedra $\{\hballn: \Apostk \hballn \leq \hcpostkn\}$, $k = 1, \ldots, K$, are based on nonrandom matrices $\Apostk$ and random cutoff vectors $\hcpostkn$ that may depend on $\hSigmaalln$. 

\vspace{0.2cm}

\noindent \fbox{\textbf{Step 2.} Compute truncation set $\htruncpostn$.} 

\vspace{0.2cm}

To account for conditioning, compute the residual
\begin{align*}
    \hat{r}_{n} = \hballn - \hcoefn l'\hat{\beta}_{post,n}, \quad \hcoefn = \frac{\hSigmaalln l_{post}}{\hsigmapostn^{2}},
\end{align*}
and compute the truncation quantities 
\begin{align*}
    \hZminuspostkn &= \max_{j \in \hat{J}_{k,n}^{-}} \frac{(\hcpostkn)_{j} - (\Apostk \hat{r}_{n})_{j}}{(\Apostk\hcoefn)_{j}}, & \hat{J}_{k,n}^{-} &= \curly{j: (\Apostk\hcoefn)_{j} < 0}, \\
    \hZpluspostkn &= \min_{j \in \hat{J}_{k,n}^{+}} \frac{(\hcpostkn)_{j} - (\Apostk \hat{r}_{n})_{j}}{(\Apostk \hcoefn)_{j}}, & \hat{J}_{k,n}^{+} &= \curly{j: (\Apostk\hcoefn)_{j} > 0}, \\ 
    \hZzeropostkn &= \min_{j \in \hat{J}_{k,n}^{0}} (\hcpostkn)_{j} - (\Apostk \hat{r}_{n})_{j}, & \hat{J}_{k,n}^{0} &= \curly{j: (\Apostk\hcoefn)_{j} = 0}.
\end{align*}
As in Proposition \ref{d2:prop:lee.representation}, the event $\{\hballn \in \hBpostn\}$ is equivalent to $\{l'\hat{\beta}_{post,n} \in \htruncpostn\}$, where
\begin{equation*}
    \htruncpostn = \bigcup_{k=1}^{K} \htruncpostkn, \quad  
    \htruncpostkn =
    \begin{cases}
        [\hZminuspostkn, \hZpluspostkn], & \hZzeropostkn \geq 0, \\
        \hfil \varnothing, & \hZzeropostkn < 0.
    \end{cases}
\end{equation*}

\vspace{0.2cm}

\noindent \fbox{\textbf{Step 3.} Solve for the plug-in quantile unbiased estimator $\hquantn$.} 

\vspace{0.2cm}

Given $\hballn \in \hBpostn$, the plug-in quantile unbiased estimator is the unique $\hquantn$ that solves
\begin{equation*}
     F_{TN}(l'\hat{\beta}_{post,n};\hquantn,\hsigmapostn^{2},\htruncpostn ) = 1-\alpha.
\end{equation*}
Computation is fast, since this amounts to finding the root of strictly monotone function. The corresponding conditional inference procedures are
\begin{enumerate}[label=(\roman*)]
    \item confidence intervals $CI_{\alpha,n}^{*}(X_{1:n}) =[\quant_{\alpha/2, n}^{*}, \quant_{1-\alpha/2, n}^{*}]$; and
    \item point estimators $\quant_{1/2,n}^{*}$.
\end{enumerate}
Section \ref{d2:sec:asymptotic.validity} establishes the asymptotic validity of these feasible plug-in procedures for inference on $l'\ball_{post}(P_{n})$ as $n \to \infty$.

\subsubsection{Concrete Example}\label{d2:sec:implementation.example}
Consider $\hballn = (\hbpren, \hbpostn)' \in \R^{2}$ and $\ball(P_{n}) = (\bpre(P_{n}), \bpost(P_{n}))' \in \R^{2}$. We want inference on post estimand $\bpost(P_{n})$ conditional on the PAP estimate $\hbpren$ crossing a two-sided statistical significance cutoff governed by $\eta \in (0,1)$:
\begin{equation*}
    \hBpostn = \curly{
    \begin{pmatrix}
    \hbpren \\
    \hbpostn
    \end{pmatrix}: |\hbpren| \geq z_{1-\eta/2}\hspren}, \quad \hSigmaalln = 
    \begin{pmatrix}
    \hspren^{2} & \hcovn \\
    \hcovn  & \hspostn^{2}
    \end{pmatrix}, \quad \hcovn > 0,
\end{equation*}
where $\hcovn$ is the covariance estimator for $\hbpren$ and $\hbpostn$. In this case, $l = l_{post} = (0,1)'$.

\vspace{0.2cm}

\noindent \fbox{\textbf{Step 1.} Derive matrices and vectors $(\Apostk, \hcpostkn)$, $k = 1, \ldots, K$.} 

\vspace{0.2cm}

In this case, the set $\hBpostn$ takes the form of \eqref{d2:eq:Bpost.significance}:
\begin{align*}
    \hBpostn = \curly{
    \begin{pmatrix}
    \hbpren \\
    \hbpostn
    \end{pmatrix}
    : \hbpren
    \leq -z_{1-\eta/2}\hspren} \bigcup \curly{
    \begin{pmatrix}
    \hbpren \\
    \hbpostn
    \end{pmatrix}: 
    -\hbpren \leq -z_{1-\eta/2}\hspren}.
\end{align*}
Thus, $K=2$ and
\begin{align*}
    A_{post,1} = 
    (1, 0), \quad 
    A_{post,2} = 
    (-1, 0), \quad
    \hat{c}_{post,1,n} = \hat{c}_{post,2,n} = -z_{1-\eta/2}\hspren.
\end{align*}

\vspace{0.2cm}

\noindent \fbox{\textbf{Step 2.} Compute truncation set $\htruncpostn$.} 

\vspace{0.2cm}

The regression step yields
\begin{align*} 
    \hcoefn = \frac{\hSigmaalln l_{post}}{\hsigmapostn^{2}}
    =
    \begin{pmatrix}
    \displaystyle\frac{\hcovn}{\hspostn^{2}} \\
    1
    \end{pmatrix}, \quad
    \hat{r}_{n} = \hballn - \hcoefn \hbpostn = 
    \begin{pmatrix}
    \hbpren - \displaystyle \frac{\hcovn}{\hspostn^{2}}\hbpostn  \\
    0
    \end{pmatrix}.
\end{align*}
For $k=1$, we obtain $\hZzero_{post,1,n} = 0$ and
\begin{align*}
    \widehat{\trunc}_{post,1,n} = [-\infty, \hZplus_{post,1,n}], \quad 
    \hZplus_{post,1,n} = \frac{-z_{1-\eta/2}\hspren - \paren{\hbpren - \displaystyle\frac{\hcovn}{\hspostn^{2}}\hbpostn}}{\displaystyle\frac{\hcovn}{\hspostn^{2}}}.
\end{align*}
For $k=2$ we obtain $\hZzero_{post,2,n} = 0$ and
\begin{align*}
    \widehat{\trunc}_{post,2,n} = [\hZminus_{post,2,n}, +\infty], \quad 
    \hZminus_{post,2,n} = \frac{z_{1-\eta/2}\hspren - \paren{\hbpren - \displaystyle \frac{\hcovn}{\hspostn^{2}}\hbpostn}}{ \displaystyle\frac{\hcovn}{\hspostn^{2}}}. 
\end{align*}
Thus, $\{|\hbpren| \geq z_{1-\eta/2}\hspren\} = \{\hbpostn \in \widehat{\trunc}_{post,n}\}$, where 
\begin{align*}
    \widehat{\trunc}_{post,n} = [-\infty, \hZplus_{post,1,n}] \cup [\hZminus_{post,2,n}, +\infty].
\end{align*}

\vspace{0.2cm}

\noindent \fbox{\textbf{Step 3.} Solve for the plug-in quantile unbiased estimator $\hquantn$.} 

\vspace{0.2cm}

Given $|\hbpren| \geq z_{1-\eta/2}\hspren$, formula \eqref{d2:eq:trunc.CDF.formula} yields
\begin{equation*}
    F_{TN}(z; \mu, \hspostn^{2}, \widehat{\trunc}_{post,n}) =
    \begin{cases}
        \hfil \dfrac{\displaystyle \Phi \lp \frac{z - \mu}{\hspostn} \rp}{\displaystyle \Phi \paren{ \frac{\hZplus_{post,1,n} - \mu}{\hspostn}} + 1 - \Phi \lp \frac{\hZminus_{post,2,n} - \mu}{\hspostn} \rp }, & z < \hZplus_{post,1,n}, \\[10pt]
        \hfil \dfrac{\displaystyle \Phi \lp \frac{z - \mu}{\hspostn} \rp - \Phi \lp \frac{\hZminus_{post,2,n} - \mu}{\hspostn} \rp + \Phi \paren{ \frac{\hZplus_{post,1,n} - \mu}{\hspostn}}}{\displaystyle \Phi \paren{ \frac{\hZplus_{post,1,n} - \mu}{\hspostn}} + 1 - \Phi \lp \frac{\hZminus_{post,2,n} - \mu}{\hspostn} \rp}, & z \geq \hZminus_{post,2,n}.
    \end{cases}
\end{equation*}
The estimator $\quant_{\alpha,n}^{*}$ is the unique solution to $F_{TN}(\hbpostn; \quant_{\alpha,n}^{*}, \hspostn^{2}, \widehat{\trunc}_{post,n}) = 1-\alpha$.

\subsection{Uniform Asymptotic Validity}\label{d2:sec:asymptotic.validity}
We now show that inference based on $\hquantn$ is uniformly asymptotically valid, formally stated in equation \eqref{d2:eq:QUE.asymptotic.validity} below. Existing results by \cite{andrews2024inference} and \cite{mccloskey2024hybrid} can be used to prove uniformity the case of $K=1$. Here, we prove the general case for $K \geq 1$. In the proof we assume the rows of $\Apostk$ are nonzero, i.e., $(\Apostk)_{j} \neq 0$ for all $(j,k)$, which rules out, for example, a deviation event of the form $\{0 \leq \hat{\sigma}_{pre,n} - \hat{\sigma}_{post,n}\}$.

\paragraph{Environment.} We suppose that a sample of size $n$ is drawn from some unknown distribution $P_{n} \in \cPn$, where $\cPn$ is a class of probability distributions corresponding to a sample of size $n$. For example, given a class of distributions $\mathcal{P}_{0}$ with bounded moments, if we have i.i.d. draws of size $n$ from some fixed $P_{0} \in \mathcal{P}_{0}$, then $P_{n} = (P_{0})^{n}$ is the product distribution and $\mathcal{P}_{n}$ is the set of all products of a fixed distribution with bounded moments. As we take $n \to \infty$, there is a corresponding sequence of unknown distributions $\{P_{n}\} \in \times_{n=1}^{\infty}\cPn$, where $\times_{n=1}^{\infty}\cPn$ is the sequence of distribution classes, and the notation $\{P_{n}\} \in \times_{n=1}^{\infty}\cPn$ means that $P_{n} \in \cPn$ for all $n$. Below we let $P \in \cup_{n=1}^{\infty}\cPn$ index elements belonging to $\mathcal{P}_{n}$ for some $n$.
 
\paragraph{Asymptotic Normality.} We first assume that the scaled estimates $\tballn = \sqrt{n}\hballn$ are uniformly asymptotically normal. Let $BL_{1}$ denote the set of real-valued functions that are bounded above in absolute value by one and have Lipschitz constant bounded above by one \citep[Section 1.12]{van1996weak}. Furthermore, given a candidate covariance matrix $\Sigmaall$, let $\lambda_{min}(\Sigmaall)$ and $\lambda_{max}(\Sigmaall)$ denote its minimum and maximum eigenvalues.

\begin{assumption}\label{d2:ass:BL}
There exist $(\ballP, \SigmaallP)$ such that for $\tballPn = \sqrt{n}\ballP$ and $\xi_{P} \sim N(0, \SigmaallP)$,
\begin{align*}
    \lim_{n \to \infty} \sup_{P \in \cPn} \sup_{f \in BL_{1}} \abs{\E_{P}[
    f(\tballn - \tballPn)] - \E[f(\xi_{P})]}.
\end{align*}
Moreover, there exists finite $\blam > 0$ such that $1/\blam \leq \lambda_{min}(\SigmaallP) \leq \lambda_{max}(\SigmaallP) \leq \blam$ for all $P\in\cPn$.
\end{assumption}

Assumption \ref{d2:ass:BL} requires that $\sqrt{n}(\hballn - \ballP)$ converges to $N(0, \SigmaallP)$ in bounded Lipschitz metric, uniformly in $P$. This is a standard way to define uniform convergence in distribution.\footnote{Examples include \citet[Assumption 6]{andrews2024inference} and \citet[Assumption 2]{mccloskey2024hybrid}.} For example, if the components of $\hballn$ are sample averages of unit-level observations, then uniform convergence follows from bounds on the moments of the observations and bounds on dependence across observations. Assumption \ref{d2:ass:BL} also requires the eigenvalues of $\SigmaallP$ to be uniformly bounded above and away from zero, which ensures that $\sqrt{n}(\hballn - \ballP)$ is stochastically bounded with nonzero asymptotic variance. 

\paragraph{Consistent Covariance Matrix Estimation.} We next assume that the scaled covariance matrix estimator $\tSigmaalln = n\hSigmaalln$ is consistent for $\SigmaallP$, uniformly over $P$.
\begin{assumption}\label{d2:ass:consistent.variance}
For each $\e > 0$,
\begin{align*}
    \lim_{n \to \infty} \sup_{P \in \cPn} \P[P]{\norm{\tSigmaalln - \SigmaallP} > \e} = 0.
\end{align*}
\end{assumption}
If $\hSigmaalln$ is appropriate for the setting at hand (e.g., sample covariance for iid data, long-run covariance for time series data), then Assumption \ref{d2:ass:consistent.variance} follows from the same kind of sufficient conditions that justify Assumption \ref{d2:ass:BL}.

\paragraph{Consistent Cutoff Vector Estimation.} Finally, we assume that the scaled cutoff vectors $\tcpostkn = \sqrt{n}\hcpostkn$ are uniformly consistent, in similar fashion to $\tSigmaalln$.
\begin{assumption}\label{d2:ass:consistent.cutoff}
For each $k=1,\ldots, K$, there exist $\cpostkP$ such that for each $\e > 0$,
\begin{align*}
    \lim_{n \to \infty} \sup_{P \in \cPn} \P[P]{\norm{\tcpostkn - \cpostkP} > \e} = 0.
\end{align*}
Moreover, there exists finite $\blam_{c} > 0$ such that $\norm{\cpostkP} \leq \blam_{c}$ for all $P\in\cPn$ and $k$.\footnote{This assumption is analogous to \citet[Assumption 1]{mccloskey2024hybrid}.}
\end{assumption}

For example, often the cutoff vectors take the form
$\hcpostkn = q_{post,k}\sqrt{C_{post,k}'\hSigmaalln C_{post,k}}$, where $C_{post,k} \neq 0$ and $q_{post,k} \neq 0$ are nonrandom vectors. Assumption \ref{d2:ass:consistent.variance} then implies 
\begin{align*}
    &\sup_{P \in \cPn} \P[P]{\abs{\sqrt{C_{post,k}'\tSigmaalln C_{post,k}} - \sqrt{C_{post,k}'\SigmaallP C_{post,k}}} > \e} \\
    &\leq \sup_{P \in \cPn} \P[P]{\sqrt{\abs{C_{post,k}'\tSigmaalln C_{post,k} - C_{post,k}'\SigmaallP C_{post,k}}} > \e} \\
    &\leq \sup_{P \in \cPn} \P[P]{\abs{C_{post,k}'(\tSigmaalln - \SigmaallP)C_{post,k}} > \e^{2}} \\
    &\leq \sup_{P \in \cPn} \P[P]{\bignorm{\tSigmaalln - \SigmaallP} > \frac{\e^{2}}{\norm{C_{post,k}}^{2}}} \to 0,
\end{align*}
where the first inequality follows from $|\sqrt{x} - \sqrt{y}| \leq \sqrt{|x - y|}$ for $x, y \geq 0$, and the last equality follows from the matrix operator norm bound. Thus, under Assumption \ref{d2:ass:consistent.variance}, the above $\tcpostkn = q_{post,k}\sqrt{C_{post,k}'\tSigmaalln C_{post,k}}$ satisfies Assumption \ref{d2:ass:consistent.cutoff} with $\cpostkP = q_{post,k}\sqrt{C_{post,k}'\SigmaallP C_{post,k}}$.

\paragraph{Uniform Asymptotic Validity.} Let $\ball_{post}(P)$ denote the components of the centering vector $\ball(P)$ corresponding to $\hat{\beta}_{post,n}$. Under Assumptions \ref{d2:ass:BL}-\ref{d2:ass:consistent.cutoff}, we show in Appendix \ref{d2:app:sec:uniformity} that inference based on $\hquantn$ is uniformly asymptotically valid in the sense that
\begin{align}\label{d2:eq:QUE.asymptotic.validity}
    \lim_{n \to \infty} \sup_{P \in \cPn} \bigabs{\P[P]{l'\bpost(P) \leq \hquantn \mid \hballn \in \hBpostn} - \alpha}\P[P]{\hballn \in \hBpostn} = 0.
\end{align}
That is, $\hquantn$ satisfies an asymptotic analogue of the quantile conditional unbiasedness criteria \eqref{d2:eq:QCU} for the normal model $\hball \sim N(\ball, \Sigmaall)$.\footnote{The multiplication by $\P[P]{\hballn \in \hBpostn}$ accounts for sequences where the probability of the conditioning event converges to zero. This a standard way to account for such cases \citep{andrews2024inference, mccloskey2024hybrid}.} The required uniformity in the class of distributions $\cPn$ is the asymptotic analogue of requiring that procedures in the normal model be valid regardless of the unknown mean $\ball$.\footnote{The impossibility results of \citet{leeb2006can} do not apply here, since the above approach does not attempt to consistently estimate the conditional distribution of $l'\hbpostn$ given deviation. Rather, it accounts for conditioning by using the plug-in analogue of pivotal quantity \eqref{d2:eq:pivotal.quantity} for $l'\bpost$ from the normal model.}

\paragraph{Confidence Intervals.} The above convergence implies uniformly valid conditional coverage for our proposed confidence intervals:
\begin{align*}
    \lim_{n \to \infty} \sup_{P \in \cPn} \bigabs{\P[P]{l'\bpost(P) \in CI_{\alpha,n}^{*}(X_{1:n}) \mid \hballn \in \hBpostn} - (1-\alpha)}\P[P]{\hballn \in \hBpostn} = 0,
\end{align*}
which is the asymptotic analogue of criteria \eqref{d2:eq:QUE.CI} from the normal model. 

\paragraph{Point Estimators.} The above convergence also implies uniformly valid conditional median-unbiasedness for our proposed point estimators:
\begin{align*}
    \lim_{n \to \infty} \sup_{P \in \cPn} \bigabs{\P[P]{l'\bpost(P) \leq \hat{\mu}_{1/2,n}^{*} \mid \hballn \in \hBpostn} - \frac{1}{2}}\P[P]{\hballn \in \hBpostn} = 0,
\end{align*}
which is the asymptotic analogue of criteria \eqref{d2:eq:QUE.median} from the normal model. 

\section{Application}\label{d2:sec:application}
We now return to Example \ref{d2:ex:sleep} and apply our approach to the empirical results of \cite{bessone2021economic}, using the authors' stated reasons for deviating to consider different possible deviation sets. Recall that the authors cite three reasons for reporting the non-preregistered pooled coefficient $\htau_{N+NE+NI}$:
\begin{enumerate}\itemsep=0em
    \item[(A)] \textbf{Statistical Power:} ``Those who received a night sleep treatment in addition to naps had very similar effects to those with naps only,'' i.e., 

    $$\Xpost^A = \lc X\in\X \,:\, \frac{\abs{\htau_{N}-\htau_{NE}}}{\text{se}(\htau_{N}-\htau_{NE})} \leq z_{1-\eta/2} \,\text{ and }\,  \frac{\abs{\htau_{N}-\htau_{NI}}}{\text{se}(\htau_{N}-\htau_{NI})} \leq z_{1-\eta/2}\rc .$$
    
    \item[(B)] \textbf{Interpretability:} ``Each of the night sleep treatments alone had no significant effect on this overall index,'' i.e.,
    $$\Xpost^B = \lc X\in\X\,:\, \frac{\abs{\htau_{E}}}{\text{se}(\htau_{E})} \leq z_{1-\eta/2} \,\text{ and }\, \frac{\abs{\htau_{I}}}{\text{se}(\htau_{I})} \leq z_{1-\eta/2}\rc.$$
    
     \item[(C)] \textbf{Economic Significance:} ``In contrast, naps alone had a positive, marginally significant effect,'' i.e., 
    $$\Xpost^C = \lc X\in\X \,:\, \frac{\abs{\htau_{N}}}{\text{se}(\htau_{N})} \geq z_{1-\eta/2}\rc .$$
\end{enumerate}

With these three plausible deviation sets in mind, we can consider the empirical consequences of correcting for the conditional reporting of $\htau_{N+NE+NI}$ by conditioning on sequential intersections of these events. Table \ref{tab:corrected_CIs} shows the corrected 95\% confidence intervals for the three discussed deviation sets alongside the original CI reported in the paper. The same significance cutoff of $\eta=0.05$ is used for all constraints in (A), (B), and (C). We show in subsequent tables how the choice of significance cutoff $z_{1-\eta/2}$ used for (A), (B), and (C) impacts these CIs. 

\begin{table}[ht]
\centering
\caption{Corrected Confidence Intervals Under Selective Reporting}
\label{tab:corrected_CIs}
\begin{tabular}{lll}
\toprule
\textbf{Conditioning Event(s) in $\Xpost$} & $CI^*_{0.95}$\\
\midrule
No conditioning (original) & [0.0313, 0.1820] \\[0.4em]
(A) & [0.0313, 0.1820] \\[0.4em]
(A) and (B) & [-0.0001, 0.1837] \\[0.4em]
(A) and (B) and (C)  & [-0.0001, 0.1837] \\[0.4em]
\bottomrule
\end{tabular}
\end{table}

\paragraph{Conditioning on (A) vs. No Conditioning.}
We first observe that conditioning on (A) alone yields a $CI^*_{0.95}$ almost exactly equal to the conventional $CI_{0.95}$. 
Observe that (A) is made up of four inequalities: 
\begin{align*}
    -z_{1-\eta/2} &\leq \frac{\htau_{N}-\htau_{NE}}{\text{se}(\htau_{N}-\htau_{NE})} \leq z_{1-\eta/2}, \\
    -z_{1-\eta/2} &\leq \frac{\htau_{N}-\htau_{NI}}{\text{se}(\htau_{N}-\htau_{NI})} \leq z_{1-\eta/2}.
\end{align*}
Recall from the toy example of a single constraint from Figure~\ref{fig:cutoff_distortions} that the corrected confidence intervals $CI^*_{0.95}$ converge to conventional confidence intervals at a sufficient distance from the cutoff for reporting. To contextualize this distance, consider the constraint closest to binding in the data,
$$\htau_{N} - \htau_{NI} \leq 1.96\cdot \text{se}(\htau_{N}-\htau_{NI}).$$ 
Loosely speaking, the unchanged confidence intervals in this case reflect that the realized estimates in \cite{bessone2021economic} are sufficiently ``far'' from the boundary of $\Xpost^A$.\footnote{The corrected intervals $CI^*_{0.95}$ also do not change for different choices of significance cutoffs, $z_{1-\eta}$, for $\alpha \in\{0.01,0.05,0.1\}$ (not shown given redundancy).} To provide a sense of the \textit{scale} of this distance, we can consider an exercise of holding estimates $\htau_{N}, \htau_{NI}$ fixed and varying $\text{se}(\htau_{N}-\htau_{NI})$ through $\hat{\sigma}^2_{NI}$ alone. To see any appreciable difference between $CI_{0.95}^*$ and the conventional $CI_{0.95}$, one would need $\hat{\sigma}^2_{NI}$ to be around 5 times smaller than its observed value. 

\paragraph{Conditioning on (A) and (B).}
Adding the condition that both $\htau_E$ and $\htau_I$ are near-zero imposes a constraint which is much closer to binding, and has an appreciable impact on the result of conditioning. For further intuition, Table~\ref{d2:tab:corrected_varied_CIs_B} shows how the confidence intervals change when one varies the the threshold for significance in (B), i.e., $z_{1-\eta/2}$.

\begin{table}[ht]
\centering
\caption{Conditional Confidence Intervals for $\Xpost^{AB}$}
\label{d2:tab:corrected_varied_CIs_B}
\begin{tabular}{lll}
\toprule
\textbf{Threshold for significance $z_{1-\eta/2}$ in (B)} & $CI^*_{0.95}$\\
\midrule
No conditioning (original) & [0.0313, 0.1820] \\[0.4em]
$\eta=0.01$ & [0.02492, 0.18216] \\[0.4em]
$\eta=0.05$ & [-0.00012, 0.18365] \\[0.4em]
$\eta=0.1$  & [-0.04088, 0.18578] \\[0.4em]
\bottomrule
\end{tabular}
\end{table}
In the data, this change is driven by the insignificance constraint on $\htau_E$, which grows closer to binding as $\eta=0.1$ increases, i.e., the standard for significance becomes more lax.

\paragraph{Conditioning on (A), (B), and (C).}
As with the addition of constraint (A) relative to no conditioning, the addition of constraint (C) to $\{$(A) and (B)$\}$ does not have a meaningful impact. This is because $\htau_N$ is sufficiently far from the boundary of significance for conditioning to have a meaningful impact---the same holds true for conditioning either on $\{$(A) and (C)$\}$ or on (C) alone. Overall, the above empirical results demonstrate that, depending on the reason for deviating, the adjustments from our procedures can range from having no difference to an economically significant difference relative to conventional practice.

\section{Robustness to Misspecification of Deviation Set}\label{d2:sec:robustness}
We have so far assumed that for any $\hbpost = \Spost(X)$, the reported deviation set $\Xpost = \{X \in \X: \Spost \in \cSpost\}$ is correct. This assumption is justified when the researcher is honest and capable of articulating $\Xpost$. In some settings, however, either of these assumptions may seem implausible. For example, an earnest researcher may face cognitive or communication costs when reporting $\Xpost$ (much resembling the costs faced during initial PAP specification). Alternately, a nefarious researcher may know their true $\Xpost$, but strategically report a different deviation set (e.g., to obtain shorter confidence intervals).

In this section, we consider the possibility that the researcher reports some deviation set different from the truth, $\tXpost\neq\Xpost$, leading to potentially biased point estimators $\quant_{1/2}^{*}(X,\tXpost)$ and confidence intervals $CI_{\alpha}^{*}(X,\tXpost)$ with incorrect coverage. We refer to $\tXpost$ as the \textit{reported deviation set}. In cases where the reported deviation set varies with the realized data $x\in\X$, $\tXpost(x) \subseteq \X$ is defined more generally as a correspondence $\tXpost : \X \rightrightarrows \X$.\footnote{Note that defining results in terms of this correspondence abstracts from underlying assumptions about the researcher (e.g., incentives, utility, accuracy) generating the reporting behavior. Therefore, while we will point out possible assumptions about the researcher that \textit{generate} certain reporting behavior (i.e., properties of correspondence $\tXpost(x)$), our results are agnostic to these assumptions.}

\subsection{Impossibility Result}\label{d2:sec:impossibility}
To frame upcoming discussion, we open with an impossibility result: Proposition~\ref{d2:prop:impossibility} states that no non-trivial conditional inference procedure can guarantee valid conditional coverage when the researcher is fully unrestricted in reporting $\tXpost$. For instance, no non-trivial procedure can insure against a nefarious researcher strategically choosing a particular $\tXpost$ to exclude some $\beta_{0}$.

\begin{proposition}\label{d2:prop:impossibility}
For $X \in \R^{\dim(X)}$ continuous with full support, let $CS_{\alpha}(X, \tX)$ be any confidence set procedure for which there exists $(x_0, \tX_{0})$ with $x_{0} \in \tX_{0}$ such that $\beta_{0}\notin CS_{\alpha}(x_0, \tX_{0})$ (i.e., the procedure is ``capable of rejecting'' $\beta_{0}$) and $\beta_{0}$ is not contained in $CS_{\alpha}(x,\tX_{0})$ for all $x$ in some open neighborhood $U \subseteq \tX_{0}$ of $x_0$.\footnote{We restrict to procedures ``capable of rejecting'' $\beta_{0}$ so as to disregard trivial procedures for coverage (such as the constant set $CS_{\alpha}(\cdot, \cdot) \equiv \R^{\dim(\beta)}$).} Then there exists $(\Xpost, \tXpost)$ such that (i) the researcher deviates with positive probability for $X \in \Xpost$ and (ii) conditional coverage of $CS_{\alpha}(X, \tXpost)$ is zero.
\end{proposition}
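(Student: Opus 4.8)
The plan is to read the required counterexample directly off the hypotheses rather than build anything new. Take the \emph{true} deviation set to be exactly the open neighborhood $U$ supplied by the ``capable of rejecting'' assumption, i.e.\ set $\Xpost = U$, and have the researcher report the \emph{fixed} set $\tX_0$ no matter what data are realized, i.e.\ take $\tXpost$ to be the constant correspondence $\tXpost(x) \equiv \tX_0$. Since $U \subseteq \tX_0$ by hypothesis, this reporting rule is admissible even if one insists that the reported set contain the realized data on the deviation event: for every $x \in \Xpost = U$ we have $x \in \tX_0 = \tXpost(x)$. Finally, fix a data-generating process under which the estimand of interest equals $\beta_0$ and $X$ has a continuous full-support law; such a process exists because $\beta_0$ is, by assumption, a value that $CS_\alpha$ is built to be able to reject (hence a legitimate parameter value), and in the relevant model — e.g.\ the normal model $\hball \sim N(\ball, \Sigmaall)$ of Section~\ref{d2:sec:inference} — every parameter value is attained by some distribution whose induced law for $X$ is continuous with full support.

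With this triple $(\Xpost, \tXpost, P_{\beta_0})$ in hand, both claims follow immediately. For (i): because $X$ is continuous with full support on $\R^{\dim(X)}$ and $U$ is a nonempty open set, $\P[\beta_0]{X \in \Xpost} = \P[\beta_0]{X \in U} > 0$, so the researcher deviates on an event of strictly positive probability. For (ii): conditional on the deviation event $\{X \in \Xpost\} = \{X \in U\}$, the reported deviation set is $\tXpost(X) = \tX_0$, so the confidence set used is $CS_\alpha(X, \tX_0)$; but the ``capable of rejecting'' hypothesis states precisely that $\beta_0 \notin CS_\alpha(x, \tX_0)$ for \emph{every} $x \in U$. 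Hence
\[
    \P[\beta_0]{\beta_0 \in CS_\alpha(X, \tXpost(X)) \mid X \in \Xpost}
    = \P[\beta_0]{\beta_0 \in CS_\alpha(X, \tX_0) \mid X \in U} = 0,
\]
so conditional coverage is zero, contradicting any promised $1-\alpha$ conditional-coverage guarantee.

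There is no real analytic obstacle here; the proposition is essentially a matter of packaging the hypothesis correctly, so the main burden in the write-up is careful bookkeeping rather than a hard step. The two points worth stating explicitly are: (a) that the constructed triple is \emph{admissible} within the framework of Section~\ref{d2:sec:robustness} — a nonempty (open) true deviation set, a well-defined reporting correspondence whose value on the deviation event contains the realized data, and a DGP in the model with estimand $\beta_0$ — and (b) that the full-support assumption is exactly what converts the set $U$ of ``bad'' realizations into a deviation event of positive probability, which is why it cannot be dropped. It is also worth emphasizing in the proof that exhibiting a \emph{single} such scenario suffices: a procedure claiming conditional coverage $1-\alpha$ for every parameter value and every reported deviation set is refuted by one instance achieving conditional coverage $0$.
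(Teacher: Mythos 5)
Your proof is correct and follows essentially the same route as the paper's: both arguments combine the openness of $U$ with the full-support assumption to get a positive-probability deviation event, and then invoke the hypothesis that $\beta_{0}\notin CS_{\alpha}(x,\tX_{0})$ on $U$ to drive conditional coverage to zero. The only difference is cosmetic — the paper takes $\Xpost$ to be the larger set $A$ of all realizations at which \emph{some} report excludes $\beta_{0}$ (and then uses a data-dependent report), whereas you take the minimal choice $\Xpost=U$ with the constant report $\tX_{0}$, which suffices since the proposition only asks for existence of one such pair.
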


\begin{proof}
See Appendix \ref{d2:app:proof:impossibility}.
\end{proof}

Reframed, Proposition~\ref{d2:prop:impossibility} implies that \textit{some} form of additional structure on reporting behavior must be imposed to guarantee valid coverage. Therefore, while the assumptions in previous sections of the researcher being honest and accurate in specifying their deviation sets may feel unpalatably strong, \textit{some} type of assumption on $\tXpost(x)$ must be made. Observe also that a reinterpretation of this result can be understood as implying that an arbitrarily skeptical reviewer can always find an $\tXpost'$ which invalidates the researcher's reported results, should they wish to. In this sense, structure on allowable deviation sets also allows the \textit{researcher} to defend their results against arbitrarily unfavorable counter-assertions of their ``true'' deviation set, which we demonstrate more concretely in Section~\ref{d2:sec:robustness2}.

We now discuss two possible sources of such structure: Section~\ref{d2:sec:robustness1} considers \textit{partial} reports $\tXpost \subseteq \Xpost$, and Section~\ref{d2:sec:robustness2} presents sensitivity analysis for assessing the robustness of results to $\tXpost\neq\Xpost$, more generally.

\subsection{Local Reporting}\label{d2:sec:robustness1}

It may happen that an honest researcher can describe their deviation behavior \textit{local} to the realized $x\in\X$, but has difficulty articulating deviation behavior across all of $\Xpost$. For instance, when $\Xpost$ consists of many disjoint components, the researcher may have a clear understanding of components ``local'' to the realized data draw $x\in\X$, but not for those ``distant'' from $x$. This disjoint structure is particularly likely when there are \textit{many potential motives} for reporting $\hbpost$, but only certain motives are \textit{relevant} at any given draw $X\in\X$. In such a setting, articulating the full $\Xpost$ requires the researcher to enumerate all hypothetical motives for reporting $\hbpost$, otherwise reporting an incomplete description of $\tXpost \subseteq \Xpost.$

We can show that under certain ``local consistency'' conditions in reporting behavior for some partial component $\tXpost\subseteq\Xpost$ containing the data realization (i.e., $X\in\tXpost$), estimators $\quant_{\alpha}^{*}(X,\tXpost)$ and confidence intervals $CI_{\alpha}^{*}(X,\tXpost)$ that condition on said partial component yield valid inferences. Note, however, that these inferences will be less precise than those based on procedures that condition on the full $\Xpost$. For concreteness, we discuss this result in terms of a stylized example.

\begin{example}[Same Specification, Different Motive]\label{d2:ex:schools}
A researcher is studying the effect of peer quality on student performance. Depending on preliminary findings $\hbpre$ from the PAP, the researcher may be prompted to consider the following ex-post analysis $\Spost$: 
\begin{equation}\label{d2:ex:peereq}
\text{TestScore}_i = \alpha + \bpost \cdot \text{PeerScore}_i + \gamma \cdot \text{Controls}_i + \varepsilon_i,
\end{equation}
where $\bpost$ measures the relationship between a student’s outcome and the average test scores of their classroom peers. Depending on PAP outcomes, the researcher may pursue specification $\Spost$ under different interpretive lenses, leading to distinct \textit{motives} for reporting $\hbpost$. For instance, suppose the researcher employs a research design that implies quasi-random variation in peer scores across classrooms (e.g., due to institutional features). After collecting the data, the researcher performs a series of balance checks to verify successful randomization. If such checks pass, the researcher interprets \eqref{d2:ex:peereq} as a strategy for estimating a \textit{causal peer effect} ($m_1$). Should the balance checks fail, however, (e.g., peer scores are found to be tightly correlated with students' prior achievement or classroom assignment), \eqref{d2:ex:peereq} is still of interest, but might now serve as a way to diagnose \textit{sorting mechanisms} in administrative placement ($m_2$). 

These two motives for reporting ($m_1$: causal spillover; $m_2$: diagnostic for sorting) correspond to mutually exclusive interpretations of the data: $m_1$ relies on peer scores being as-good-as-random, while $m_2$ assumes peer scores are non-random and structured. These interpretations then also correspond to disjoint deviation sets $\X_{1}\cap\X_{2}=\varnothing$ with $\X_{1}, \X_{2} \subseteq \Xpost$, defined informally as:
\begin{align*}
    \X_{1} &= \{X \in \X \mid \text{peer-score variance \textit{large}, peer scores \textit{plausibly exogenous}} \}\\
    \X_{2} &= \{X \in \X \mid \text{peer-score assignment \textit{correlated with prior performance}} \}
\end{align*}
When the realized data $x \in \X_{1}$ causes the researcher to view $\bpost$ as a ``causal spillover'' ($m = 1$), they might not be primed to consider the case $x\in\X_{2}$ where $\bpost$ is a ``diagnostic for sorting'' ($m = 2$). As a result, the researcher may report an incomplete deviation set, $\tXpost = \X_{1} \subseteq \Xpost$.
\end{example}

The following proposition gives us that under certain reporting conditions, it is valid to condition on just $\X_1$ when $X\in\X_1$ and $\X_2$ when $X\in\X_2$.

\begin{proposition}\label{d2:prop:local.reporting}
Let $\Xpost = \{X\in\X:\Spost\in\cSpost\}$ be the deviation set for $\hbpost=\Spost(X)$, and let $CI_{\alpha}(X,\tilde{\X})$ be a confidence interval procedure which delivers valid conditional coverage as in \eqref{d2:eq:cond.coverage} for $X\in\tilde{\X}$, i.e.,
$$\P[\ball]{l'\bpost \in CI_{\alpha}(X,\tilde{\X}) \mid X \in \tilde{\X}} \geq 1 - \alpha, \quad \forall \ball.$$ 
For $\{\X_m\}_{m=1}^M$ a partition of $\Xpost$, i.e., $\bigcup_m\X_m = \Xpost$ and $\forall\,m\neq m'$, $\X_{m} \cap \X_{m'}=\varnothing$, define ``local report function''  $\tilde{\mathcal{X}}_{post}(X) : \mathcal{X} \to \{\mathcal{X}_1, \ldots, \mathcal{X}_M\}$ as
\[
\tilde{\mathcal{X}}_{post}(X) = \mathcal{X}_m \quad \text{if } X \in \mathcal{X}_m.
\]
Then $CI_{\alpha}(X,\tXpost(X))$ using local report function $\tilde{\mathcal{X}}_{post}(X)$ delivers valid conditional coverage $\forall X\in\Xpost$, i.e.,
$$\P[\ball]{l'\bpost \in CI_{\alpha}(X,\tXpost(X)) \mid X \in \Xpost} \geq 1 - \alpha, \quad \forall \ball.$$
\end{proposition}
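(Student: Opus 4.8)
The plan is to reduce the claim to a convex combination of the cell-wise coverage guarantees via the law of total probability. Fix $\ball$ and recall that $\{\X_m\}_{m=1}^M$ partitions $\Xpost$. The key observation is definitional: on the event $\{X \in \X_m\}$ the local report function returns $\tXpost(X) = \X_m$, so the random set $CI_{\alpha}(X,\tXpost(X))$ agrees there with $CI_{\alpha}(X,\X_m)$. Hence, for each $m$,
\[
\{\, l'\bpost \in CI_{\alpha}(X,\tXpost(X)) \,\} \cap \{X \in \X_m\} \;=\; \{\, l'\bpost \in CI_{\alpha}(X,\X_m) \,\} \cap \{X \in \X_m\},
\]
and summing over the disjoint cells gives
\[
\P[\ball]{l'\bpost \in CI_{\alpha}(X,\tXpost(X)),\, X \in \Xpost} \;=\; \sum_{m=1}^{M} \P[\ball]{l'\bpost \in CI_{\alpha}(X,\X_m),\, X \in \X_m}.
\]

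Next I would divide through by $\P[\ball]{X \in \Xpost} = \sum_{m} \P[\ball]{X \in \X_m}$ (again using disjointness). Writing $w_m = \P[\ball]{X \in \X_m}/\P[\ball]{X \in \Xpost} \geq 0$, so that $\sum_m w_m = 1$, this yields
\[
\P[\ball]{l'\bpost \in CI_{\alpha}(X,\tXpost(X)) \mid X \in \Xpost} \;=\; \sum_{m=1}^{M} w_m\, c_m,
\]
where $c_m := \P[\ball]{l'\bpost \in CI_{\alpha}(X,\X_m) \mid X \in \X_m}$ whenever $w_m > 0$, and $c_m := 1-\alpha$ (an arbitrary convention) whenever $w_m = 0$, the latter cells contributing zero to both sides. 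Now apply the hypothesis on $CI_{\alpha}(\cdot,\tilde\X)$ with $\tilde\X = \X_m$ — a legitimate input since each $\X_m \subseteq \X$ — to conclude $c_m \geq 1-\alpha$ for every $m$ with $w_m > 0$, hence for every $m$. Since a convex combination of numbers each at least $1-\alpha$ is at least $1-\alpha$, the left-hand side is $\geq 1-\alpha$, which is the desired conclusion. The identical argument, run with the quantile-unbiased estimator $\quant_{\alpha}^{*}$ in place of $CI_{\alpha}$, gives the corresponding statement for point estimators (quantile conditional unbiasedness holding as an inequality after averaging).

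This is essentially a bookkeeping argument with no hard analytic content; the points that need care are (i) measurability — $\tXpost(\cdot)$ is a finite-valued step function, constant on each cell of the measurable partition $\{\X_m\}$, so $X \mapsto CI_{\alpha}(X,\tXpost(X))$ is measurable and all the probabilities above are well defined; (ii) the handling of cells with $\P[\ball]{X \in \X_m} = 0$, which must be excluded from the conditioning events but drop out harmlessly; and (iii) keeping straight that the coverage claim concerns the \emph{composed} object $CI_{\alpha}(X,\tXpost(X))$, not $CI_{\alpha}(X,\Xpost)$ — the proposition asserts nothing about the latter, and indeed the whole point is that one conditions only on the locally identifiable cell. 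If one wished to allow a countably infinite partition, the same steps go through verbatim with $M = \infty$, replacing finite additivity by countable additivity.
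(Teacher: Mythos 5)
Your proof is correct and is essentially identical to the paper's: both decompose the conditional coverage probability over the partition $\{\X_m\}$ via the law of total probability, substitute $CI_{\alpha}(X,\X_m)$ for $CI_{\alpha}(X,\tXpost(X))$ on each cell, and bound the resulting convex combination below by $1-\alpha$. Your added remarks on measure-zero cells and measurability are sensible housekeeping that the paper leaves implicit.
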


\begin{proof}
See Appendix \ref{d2:app:proof:local.reporting}.
\end{proof}

The crucial requirement here is that the researcher behaves ``locally coherently'' in reporting the incomplete deviation set $\mathcal{X}_m$ for all $X\in\mathcal{X}_m$. To be concrete, consider again the partial deviation sets from Example~\ref{d2:ex:schools}. Define $\mathcal{X}_{1}\subseteq \mathcal{X}_{\mathrm{post}}$ by the following two sample-based conditions:
\begin{enumerate}
    \item[(i)] \textbf{Sufficient dispersion of peer scores across classrooms:} for a prespecified threshold $\delta>0$,
    \[ \widehat{\Var}_{j}\!\left(\,\overline{\text{PeerScore}}_{j}\,\right) > \delta,
    \]
    where $\widehat{\Var}_{j}$ denotes the \emph{sample} variance across classrooms of the classroom mean peer score $\overline{\text{PeerScore}}_{j}$.
    \item[(ii)] \textbf{No baseline correlation with prior achievement:} estimating
    \[
    \text{PreScore}_i \;=\; \eta \,+\, \rho\,\text{PeerScore}_i \,+\, \nu_i,
    \]
    we fail to reject $H_0\!:\rho=0$ at the 5\% level (i.e., $p(\hat\rho)>0.05$).
\end{enumerate}
Together, these define
\[
\mathcal{X}_{1}
=\Big\{\,X\in\mathcal{X}:\;
\widehat{\Var}_{j}\!\left(\overline{\text{PeerScore}}_{j}\right)>\delta
\ \text{and}\ 
p(\hat\rho)>0.05
\Big\}.
\]

This form of ``coherence'' can conceptually be broken into two conditions. The first condition is that for every data draw $X\in\X_{1}$, the researcher reports the same $\X_{1}$ as their deviation set. For this example, the first condition would be violated if there existed some counterfactual draw $X\in\X_{1}$ for which the researcher reported a different threshold $\delta'\neq\delta$ or different significance cutoff $\eta'\neq 0.05$ for $\hat{\rho}$. 

The second condition requires that the partial reports $\{\X_m\}_m$ be \emph{disjoint} subsets of the sample space $\X$, i.e., $\X_m \cap \X_{m'} = \varnothing$ for all $m \neq m'$. Disjointness fits settings where the researcher has multiple, distinct interpretive frames for reporting $\hbpost$, each tied to a separate region of the data, so that at any realization $X$ only one interpretation applies. This requirement is naturally satisfied when interpretations are mutually exclusive, as in Example~\ref{d2:ex:schools}.

\subsection{Sensitivity Analysis}\label{d2:sec:robustness2}

In settings where the deviation set takes the form of a cutoff rule, the researcher may not be able to discern the exact value of the cutoff. In this sense, the true cutoff is fuzzy. For example, suppose $\hbpost$ is of interest because a PAP estimate $\hbpre$ was observed to be small. That is, there exists some cutoff $\kappa_{0} > 0$ for which
\begin{align*}
    \X_{post} = \X_{0} = \{X \in \X: |\hbpre| \leq \kappa_{0}\}.
\end{align*}
However, the exact value of $\kappa_{0}$ may not be obvious to the researcher. After some introspection, the researcher reports $\tilde{\kappa}$ as an approximation to $\kappa_{0}$. This yields reported deviation set
\begin{align*}
    \tilde{\X}_{post} &= \{X \in \X: |\hbpre| \leq \tilde{\kappa}\}. 
\end{align*}
In such settings, a natural robustness exercise is to plot conditionally valid intervals for a range of $\kappa$ above and below the reported $\tilde{\kappa}$. Formally, given $\e > 0$, let $\Tilde{\mathcal{K}}_{\e}$ be a set of $\kappa$ such that $\tilde{\kappa} - \e \leq \kappa \leq \tilde{\kappa} + \e$ for each $\kappa \in \Tilde{\mathcal{K}}_{\e}$. One can plot
\begin{align}\label{d2:eq:CI.range}
    CI_{\alpha}^{*}(X,\X_{\kappa}), \quad \X_{\kappa} = \{X \in \X: |\hbpre| \leq \kappa\}, \quad \forall \kappa \in \Tilde{\mathcal{K}}_{\e}.
\end{align}
This allows one to assess the sensitivity of conclusions to different potential values of $\kappa_{0}$.

This robustness exercise is formally justified under the condition that $|\tilde{\kappa} -\kappa_{0}| \leq \e$ for known $\e$. Under this condition, if the conclusions that the researcher reaches with $CI_{\alpha}^{*}(X, \tXpost)$ can also reached with $CI_{\alpha}^{*}(X,\X_{\kappa})$ for each $\kappa \in \Tilde{\mathcal{K}}_{\e}$, then we know the researcher would reach those conclusions with the true $CI_{\alpha}^{*}(X,\X_{0})$.
While knowledge of $\e$ may be a strong assumption, one can in practice vary $\e$ to determine the largest $\e$ for which the researcher's conclusions persist. Intuitively, if we believe that $\tilde{\kappa} \approx \kappa_{0}$, then we expect $\kappa_{0}$ to fall into $\Tilde{\mathcal{\kappa}}_{\e}$ for some $\e$ that is not too large. Plausible departures from the reported $\tilde{\kappa}$ ought to not lead to massive changes in results.

As intuition for this exercise, see Figure~\ref{d2:fig:cutoffs}, which considers $(\hbpre,\hbpost)$ from Example~\ref{d2:ex:toy}, where $\rho=0.25$. Recall the plots from Figure~\ref{fig:cutoff_distortions}. We consider the exercise of fixing a particular data realization  $(\hbpre,\hbpost)$ and plotting the estimates and confidence intervals as a function of cutoff $\tilde{\kappa}$. 
The green lines show the corrected $CI^*_{0.95}$ for $\hbpost$, with the first panel corresponding to $\Xpost = \{\hbpre \geq \tilde{\kappa}\}$. As one can see, the reported $\tilde{\kappa}$ can yield very different intervals: as one reports $\tilde{\kappa}$ further and further from $\hbpre$, the confidence intervals move from including to excluding zero.
\begin{figure}
\centering
    \includegraphics[width=0.6\textwidth]{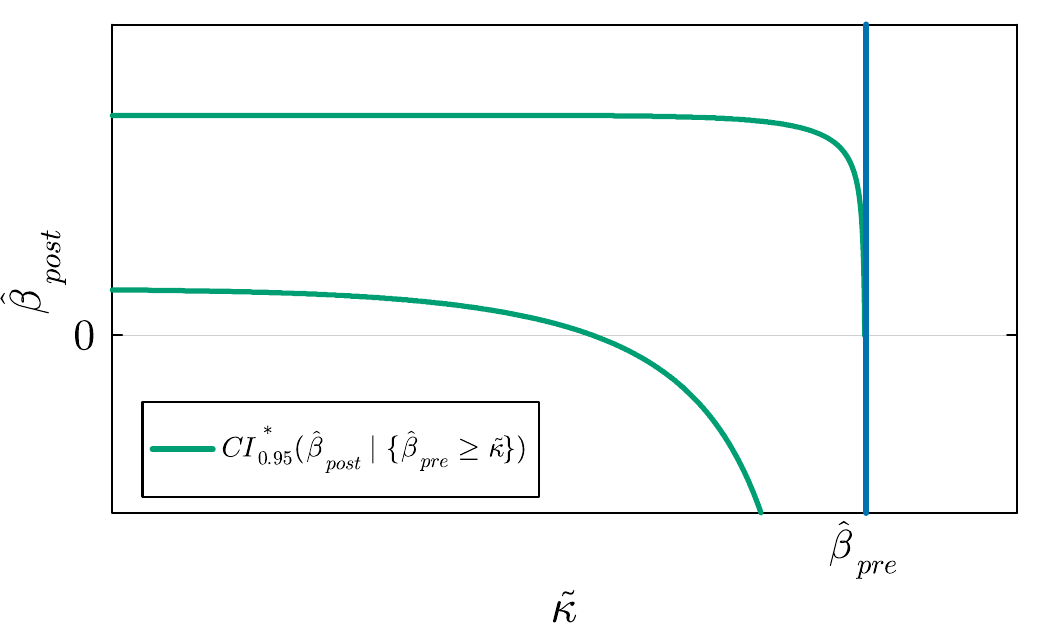} \\ 
    \caption{Effect of moving cutoff $\tilde{\kappa}$ on $CI^*_{0.95}$}
\label{d2:fig:cutoffs}
\end{figure}

\paragraph{Implications for Reporting Conventions.} For deviations based on statistical significance cutoffs $\kappa_{\eta} = z_{1-\eta/2}\sigma_{pre}$, a conventional choice of significance level is $\eta = 0.05$. By sticking to convention, there is less need for the scrutiny in \eqref{d2:eq:CI.range}. But for cutoffs with potentially no obvious conventions, such as those based on economic significance, the robustness exercise in \eqref{d2:eq:CI.range} is useful. To avoid such scrutiny, one can preregister definitions of economic significance for their primary outcomes. Note that the above analysis is also valid for any deviation set $\X_{0}$ known up to a finite set of fuzzy cutoffs $\kappa_{0}$. Moreover, while we focused on confidence intervals $CI_{\alpha}^{*}(X,\X_{\kappa})$, the same sensitivity analysis can be applied to point estimators $\quant_{1/2}^{*}(\X_{\kappa})$.

\section{Conclusion}\label{d2:sec:conclusion}
This paper considers the statistical consequences of deviating from prespecified analysis. We first develop a general model of preregistration in the research process, which we use to demonstrate that PAP deviations can be viewed as a form of conditional reporting that, if left unacknowledged, yields invalid inference. Our framework yields two recommendations. First, researchers should adopt conditional inference as the relevant criteria for non-prespecified analysis. Second, given this conditional criteria, researchers should articulate their corresponding reasons for deviating, then leverage $\Xpost$ to report corrected (i.e., conditional) inferences alongside conventional (i.e., unconditional) inferences. To this end, we provide general and tractable inference procedures for obtaining confidence intervals with correct conditional coverage and point estimators that are conditionally unbiased. We formalize our conditional inference objectives in a model with normally distributed estimates, and present general procedures for constructing confidence intervals and point estimators that are conditionally valid. We then show how to implement these procedures in practice, providing uniformity guarantees for non-normal estimates. Using data from \cite{bessone2021economic}, we demonstrate that, depending on the deviation set and data realization, accounting for conditional reporting can range from having no difference to an economically significant difference relative to conventional practice. In particular, when the data puts one close to the boundary of reporting, the impact of conditioning can be large, whereas when the data realization is sufficiently far from the boundary, there is no change from unconditional inference. This framework has direct implications for considering the validity of past non-preregistered results reported in papers.

We conclude with a discussion of the robustness of our procedures to misspecification of the reported deviation set. Our results suggest possible directions for future work. In particular, this framework may have implications for the optimal length and detail of PAP specification, yielding possible rules of thumb for researchers and journals. This framework may also suggest new paradigms for adaptive experiments with data-driven selection.

\newpage

\bibliography{lit}
\newpage

\appendix

\section*{Appendix}
The outline of the Appendix is as follows.
\begin{itemize}
    \item Appendix \ref{d2:app:sec:counterfactual.PAP}: Bayesian decision problem based on counterfactual PAPs.
    \item Appendix \ref{d2:app:sec:average.coverage}: Conditional coverage for average coverage across studies.
    \item Appendix \ref{d2:app:sec:covariance.structure}: Covariance structure of conditioning events based on multiple polyhedra.
    \item Appendix \ref{d2:app:proofs}: Proofs of results from the main text.
    \item Appendix \ref{d2:app:sec:uniformity}: Proof of uniform asymptotic validity.
\end{itemize}

\section{Deviations Based on Counterfactual PAP}\label{d2:app:sec:counterfactual.PAP}
In Section \ref{d2:sec:decision.problem}, we illustrated how deviations can arise in a standard Bayesian decision problem when $\cS$ is constrained. In this section, we provide a variant of the Bayesian decision problem where deviations can arise even when $\cS$ is \textit{unconstrained}. We give the setup in Section \ref{d2:app:sec:counterfactual.PAP.setup}, and provide a stylized example in \ref{d2:app:sec:stylized.examples} that yields a deviation condition analogous to the one from \cite{bessone2021economic}, discussed in Example \ref{d2:ex:sleep}. 

\subsection{Counterfactual PAP Approach}\label{d2:app:sec:counterfactual.PAP.setup}
If the researcher counterfactually entered the experiment with their posterior beliefs $\pi_{X}$, they would have preregistered 
\begin{equation}\label{d2:app:eq:post.counterfactual}
    \Spost \in \argmin_{S \in \cS} \Bar{R}(S, \pi_{X}), \quad \Bar{R}(S, \pi_{X}) = \int_{\Theta} R(S, \theta) d\pi_{X}(\theta),
\end{equation}
where $\Bar{R}(S, \pi_{X})$ is the average risk of $S$ under $\pi_{X}$. We can think of $\Spost$ as the counterfactual PAP from solving problem \eqref{d2:eq:pre.problem} with $\pi_{X}$ in the place of $\pi$. If one views $\cS$ as the relevant action space, then the counterfactual PAP approach (i.e., solving \eqref{d2:eq:pre.problem} and \eqref{d2:app:eq:post.counterfactual}) aligns with the conditional Bayes principle, which says to choose an action that minimizes average loss under one's current beliefs \citep[Section 1.5.1]{berger2013statistical}.

By construction, the counterfactual PAP minimizes average risk over \textit{future} data draws that are independent of $X$, making it a reasonable point of departure for future researchers studying similar topics. Indeed, the process of solving problems \eqref{d2:eq:pre.problem} and \eqref{d2:app:eq:post.counterfactual} is analogous to conducting a pilot study with initial PAP $\Spre$, seeing results $X_{pre}$, then registering updated PAP $\Spost$ before implementing the experiment at scale on an independent sample $X_{post}$. The difference here is that the same data is used for both steps, $X = X_{pre} = X_{post}$, which must be accounted for when conducting inferences; see Section \ref{d2:sec:inference}. Notably, deviations can arise in \eqref{d2:app:eq:post.counterfactual} even when $\cS$ is unrestricted. In this sense, the dynamic inconsistency that arises in the counterfactual PAP approach is more attributable to the shift in beliefs from $\pi$ to $\pi_{X}$, relative to the posterior average loss approach in \eqref{d2:eq:post.Bayes}.

\subsection{Stylized Example}\label{d2:app:sec:stylized.examples}
To build further intuition for the counterfactual PAP approach, we solve problems (\ref{d2:eq:pre.problem}) and (\ref{d2:app:eq:post.counterfactual}) in a stylized example patterned after the deviation from \cite{bessone2021economic}, discussed in Example \ref{d2:ex:sleep}.  

\begin{itemize}
    \item The data $X = (\htau_{1}, \htau_{2}, \sigma)$ consists of treatment effect estimates $\htau_{k}$ with common standard deviation $\sigma$. Given treatment effect estimands $\tau_{k}$, the estimates are normally distributed $\htau_{k} \sim N(\tau_{k}, \sigma^{2})$, independently across groups $k$.
    \item The parameter $\theta =  (\tau_{1}, \tau_{2}, \sigma)$ contains the treatment effect estimands $\tau_{k}$.
    \item Under the prior distribution $\pi$, estimands are drawn $\tau_{k} \overset{iid}{\sim} N(m, v^{2})$, independently of $\sigma$. In particular, the researcher believes that the estimands $\tau_{k}$ capture some latent causal effect $m$, but maintains a level of uncertainty, captured by $v^{2}$. 
\end{itemize}
Mapping to Example \ref{d2:ex:sleep}, the estimands $(\tau_{1}, \tau_{2})$ represent treatment effects for (i) workers that receive the incentives treatment and (ii) workers that receive the encouragement treatment. In this case, $m$ represents the latent effectiveness of night sleep treatments. 

The researcher wishes to estimate the treatment effect $\tau_{1}$ for the first group, and chooses between estimators $w\htau_{1} + (1-w)\htau_{2}$, for weights $w \in \{1, 1/2\}$. Choosing $w=1$ yields $\htau_{1}$, which is the unbiased plug-in estimator for $\tau_{1}$. Choosing $w=1/2$ yields $(\htau_{1} + \htau_{2})/2$, which is the pooled estimator that may be biased for $\tau_{1}$, but has smaller variance than the plug-in estimator. The loss function is squared error, as in equation \eqref{d2:eq:squared.error}: $L(w\htau_{1} + (1-w)\htau_{2}, \theta) = (w\htau_{1} + (1-w)\htau_{2} - \tau_{1})^{2}$. The risk is given by the mean squared error (MSE):
\begin{equation*}
    R(w, \theta) =  \underbrace{(1-w)^{2}(\tau_{2} - \tau_{1})^{2}}_{\text{squared bias}} + \underbrace{\sigma^{2}(w^{2} + (1-w)^{2})}_{\text{variance}}.
\end{equation*}
MSE reflects the trade-off between bias and variance associated with the choice of weights $w$. Note that, in this setup, the set of specifications $\cS$ under consideration is one-to-one with the weighting schemes $w \in \{1, 1/2\}$. The first scheme puts all the weight on the first group, while the second scheme pools the estimates according to equal shares. 
 
Let $\E_{\pi}[\cdot]$ denote expectations with respect to the prior distribution. To solve problem (\ref{d2:eq:pre.problem}), the researcher compares the average risk of $w = 1$ and $w = 1/2$ under their prior.
\begin{align*}
    \Bar{R}(1, \pi) = \E_{\pi}[\sigma^{2}], \quad \Bar{R}(1/2, \pi) = \frac{v^{2} + \E_{\pi}[\sigma^{2}]}{2}.
\end{align*}
We consider a researcher who expects the sampling variance of the estimates to be low relative to the dispersion of the treatment effect distribution: $\E_{\pi}[\sigma^{2}] < v^{2}$. In this case, it is optimal for the researcher to preregister the specification that yields an unbiased estimator: $\Spre(X) = \htau_{1}$. Intuitively, since $\E_{\pi}[\sigma^{2}] < v^{2}$, the precision gains from pooling are perceived to be low, leading the researcher to choose $w = 1$.

Upon observing $X = (\htau_{1}, \htau_{2}, \sigma)$, the researcher learns the true $\sigma$, and updates their prior beliefs over $\tau_{k}$ to posterior beliefs where the treatment effect estimands are distributed
\begin{align*}
    \tau_{k}|X \sim N\left(\frac{v^{2}}{v^{2} + \sigma^{2}}\htau_{k} + \left(1-\frac{v^{2}}{v^{2} + \sigma^{2}}\right)m, \left(1-\frac{v^{2}}{v^{2} + \sigma^{2}}\right)v^{2}\right),
\end{align*}
independently across $k$. That is, observing $X$ leads the researcher to update their prior mean $m$ towards the treatment effect estimates $\htau_{k}$, while reducing their prior uncertainty. The extent of belief updating is governed by $v^{2}/(v^{2} + \sigma^{2})$, which measures the researcher's prior uncertainty relative to the sampling variability of the treatment effect estimates.  

To solve problem (\ref{d2:app:eq:post.counterfactual}), the researcher compares the average risk of $w = 1$ and $w = 1/2$ under their posterior:
\begin{align*}
    \Bar{R}(1, \pi_{X}) = \sigma^{2}, \quad \Bar{R}(1/2, \pi_{X}) = \left(1-\frac{v^{2}}{v^{2} + \sigma^{2}}\right)\frac{v^{2}}{2} + \left(\frac{v^{2}}{v^{2} + \sigma^{2}}\right)^{2}\frac{(\htau_{2} - \htau_{1})^{2}}{4} + \frac{\sigma^{2}}{2}.
\end{align*}
Thus, it is optimal for the researcher to deviate from their PAP if and only if the  $t$-statistic $\widehat{T} = |\htau_{2} - \htau_{1}|/\sqrt{2\sigma^{2}}$ satisfies
\begin{equation*}
    \widehat{T} \leq \sqrt{\frac{\sigma^{2}}{v^{2}}\left(1 + \frac{\sigma^{2}}{v^{2}}\right)}.
\end{equation*}
In particular, the deviation $\Spost(X) = (\htau_{1} + \htau_{2})/2$ occurs when the treatment effect estimates $\htau_{k}$ are sufficiently homogeneous relative to the researcher's prior beliefs. In such data realizations, the scope for bias is lower than expected, so the researcher pools the estimates. 

In summary, the researcher preregisters $\Spre(X) = \htau_{1}$ and deviates to $\Spost(X) = (\htau_{1} + \htau_{2})/2$ when
\begin{equation*}
    \E_{\pi}[\sigma^{2}] \leq v^{2} \leq \frac{\sigma^{2}}{\widehat{T}^{2}}\left(\frac{1}{2} + \sqrt{\frac{1}{4} + \widehat{T}^{2}}\right).
\end{equation*}
These inequalities reflect the scenario (i) the researcher had decided not to pool from the start and (ii) it became optimal to pool thereafter. If either of these conditions fail, then one would not observe the above switch from not pooling to pooling.

\section{Average Coverage}\label{d2:app:sec:average.coverage}
Consider a sequence of studies $t = 1, \ldots, T$. For each study, there is a baseline vector of estimates $X_{t} \in \R^{d}$ with mean $\mu_{t} \in \R^{d}$ and positive definite covariance matrix $\Sigma_{t} \in \R^{d \times d}_{++}$. The researcher in study $t$ reports $l_{t}'X_{t}$ if and only if $X_{t} \in \X_{t}$, where $l_{t} \in \R^{d}$ is a linear combination vector and $\X_{t} \in \mathcal{L}(\R^{d})$ is a Lebesgue measurable deviation set. In data realizations $X_{t} \in \X_{t}$, researchers use intervals $CI_{\alpha}(X_{t}; \Sigma_{t}, \X_{t}, l_{t})$ to conduct inference on $l_{t}'\mu_{t}$. We suppress the dependence on $(\Sigma_{t}, \X_{t}, l_{t})$, and simply denote the confidence intervals by $CI_{\alpha, t}(X_{t}) = CI_{\alpha}(X_{t}; \Sigma_{t}, \X_{t}, l_{t})$

The average coverage rate (ACR) across studies is
\begin{align*}
    ACR_{T} = \frac{\sum_{t=1}^{T} \1\{l_{t}'\mu_{t} \in CI_{\alpha, t}(X_{t})\} \1\{X_{t} \in \X_{t}\}}{\sum_{t=1}^{T} \1\{X_{t} \in \X_{t}\}}.
\end{align*}
We want an average coverage rate that stays above $1-\alpha$ as $T \to \infty$, in analogy to criteria \eqref{d2:eq:cond.coverage}. Otherwise, the research discipline spanned by the above sequence of studies may systematically produce misleading results. Below we establish a general sense in which conditional coverage for each study is \textit{necessary and sufficient} for controlling average coverage across all studies.

To place structure on the above problem, we suppose there exists an unknown probability distribution $F \in \mathcal{F}$ on the sample space $\mathcal{V} = \R^{d} \times \R^{d \times d}_{++} \times \mathcal{L}(\R^{d}) \times \R^{d}$ such that
\begin{align*}
     V_{t} = (\mu_{t}, \Sigma_{t}, \X_{t}, l_{t}) \overset{iid}{\sim} F, \quad t = 1,\ldots,T,
\end{align*}
where $\mathcal{F}$ is a class of probability distributions on $\mathcal{V}$. This assumption approximates a research environment where similarly qualified researchers pursue independent prespecified analysis, obtain their data, and then potentially deviate from their PAPs.\footnote{Similar sampling assumptions are made in models of publication bias \citep{andrews2019identification}.} Denoting $X = (X_{1}', \ldots, X_{T}')'$, $\mu = (\mu_{1}', \ldots, \mu_{T}')'$, $\Sigma = \text{diag}(\Sigma_{1}, \ldots, \Sigma_{T})$, and $V = (V_{1}, \ldots, V_{T})$, we assume that
\begin{align*}
    X|V \overset{d}{=} X|\mu, \Sigma \sim N(\mu, \Sigma).
\end{align*}
In particular, conditional on $V$, the baseline estimates are independently normally distributed across $t$, where $X_{t}|V \overset{d}{=} X_{t}|V_{t} \sim N(\mu_{t}, \Sigma_{t})$. Thus, since $V_{t}$ is independent across $t$, we have that $(X_{t}, V_{t})$ is independent across $t$. Define $\P[v]{\cdot} = \P[F]{\cdot|V_{t} =v}$, which does not depend on $F$ under our sampling assumptions. Denote $v = (\mu_{v}, \Sigma_{v}, \X_{v}, l_{v})$ for a particular realization $V_{t} = v$, so that $CI_{\alpha,v}(X_{t}) = CI_{\alpha}(X_{t}; \Sigma_{v}, \X_{v}, l_{v})$.

Let $w(v) = \P[v]{X_{t} \in \X_{v}}$ denote the probability of deviation under $N(\mu_{t},\Sigma_{t})|V_{t}=v$, which does not depend on $t$ under our sampling assumptions. The law of iterated expectations yields
\begin{align*}
    \EP[F]{\1\{l_{t}'\mu_{t} \in CI_{\alpha,t}(X_{t})\} \1\{X_{t} \in \X_{t}\}} = \int_{v} w(v)\P[v]{l_{v}'\mu_{v} \in CI_{\alpha,v}(X_{t})|X_{t} \in \X_{v}} dF(v),
\end{align*}
which does not depend on $t$. Thus, law of large numbers implies
\begin{align*}
    \plim_{T \to \infty} \frac{1}{T}\sum_{t=1}^{T} \1\{l_{t}'\mu_{t} \in CI_{\alpha,t}(X_{t})\} \1\{X_{t} \in \X_{t}\} = \int_{v} w(v) \P[v]{l_{v}'\mu_{v} \in CI_{\alpha,v}(X_{t})|X_{t} \in \X_{v}} dF(v).
\end{align*}
An analogous argument for the denominator, together with Slutsky's lemma, yields
\begin{align*}
    ACR_{F} = \plim_{T \to \infty}ACR_{T} = \frac{\displaystyle \int_{v} w(v)\P[v]{l_{t}'\mu_{t} \in CI_{\alpha,v}(X_{t})|X_{t} \in \X_{v}} dF(v)}{\displaystyle  \int_{v} w(v) dF(v)},
\end{align*}
provided that the denominator is positive. 

We are now prepared to state the result. In what follows, $\mathcal{V}_{+} = \{v \in \mathcal{V}: w(v) > 0\}$ denotes the set of $v$ for which the probability of deviation $w(v) = \P[v]{X_{t} \in \X_{v}}$ under $N(\mu_{t},\Sigma_{t})|V_{t}=v$ is positive, and $\mathcal{F}_{+}$ denotes the set of distributions on $\mathcal{V}_{+}$.

\begin{proposition}\label{d2:app:prop:necessary.sufficient}
Suppose that $\mathcal{F}$ satisfies $\mathcal{F}_{+} \subseteq \mathcal{F}$ and $\displaystyle \inf_{F \in \mathcal{F}}\int_{v} w(v) dF(v) > 0$. Then, to obtain correct average coverage across studies in the sense that
\begin{align*}
    ACR_{F} \geq 1-\alpha, \quad \forall F \in \mathcal{F},
\end{align*}
it is necessary and sufficient to ensure conditional coverage for each study in the sense that
\begin{align*}
    \P[v]{l_{v}'\mu_{v} \in CI_{\alpha,v}(X_{t})|X_{t} \in \X_{v}} \geq 1-\alpha, \quad \forall v \in \mathcal{V}_{+}.
\end{align*}
\end{proposition}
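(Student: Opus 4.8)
The plan is to lean entirely on the closed-form expression for the limiting average coverage rate that the preceding discussion has already derived, namely
\[
ACR_{F} = \frac{\displaystyle\int_{v} w(v)\,\P[v]{l_{v}'\mu_{v} \in CI_{\alpha,v}(X_{t}) \mid X_{t} \in \X_{v}}\, dF(v)}{\displaystyle\int_{v} w(v)\, dF(v)},
\]
valid for all $F \in \mathcal{F}$ because the $\inf_{F}\int_{v} w\,dF > 0$ hypothesis keeps the denominator positive. Once this formula is in hand, both implications reduce to elementary manipulations of the ratio.

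For the sufficiency direction, I would assume conditional coverage holds, i.e.\ $\P[v]{l_{v}'\mu_{v} \in CI_{\alpha,v}(X_{t}) \mid X_{t} \in \X_{v}} \geq 1-\alpha$ for every $v \in \mathcal{V}_{+}$. The integrand $w(v)\,\P[v]{\cdots}$ is zero whenever $w(v) = 0$ and is at least $(1-\alpha)\,w(v)$ whenever $w(v) > 0$, so the numerator is bounded below by $(1-\alpha)\int_{v} w(v)\,dF(v)$. Dividing by the strictly positive denominator gives $ACR_{F} \geq 1-\alpha$ for every $F \in \mathcal{F}$, as desired.

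For the necessity direction, I would argue by contraposition. Suppose conditional coverage fails at some $v_{0} \in \mathcal{V}_{+}$, so that $\P[v_{0}]{l_{v_{0}}'\mu_{v_{0}} \in CI_{\alpha,v_{0}}(X_{t}) \mid X_{t} \in \X_{v_{0}}} = 1-\alpha-\delta$ for some $\delta > 0$. Consider the Dirac measure $F_{0} = \delta_{v_{0}}$. Since $v_{0} \in \mathcal{V}_{+}$, $F_{0}$ is supported on $\mathcal{V}_{+}$, hence $F_{0} \in \mathcal{F}_{+} \subseteq \mathcal{F}$ by hypothesis; moreover $\int_{v} w\,dF_{0} = w(v_{0}) > 0$, so $ACR_{F_{0}}$ is well-defined. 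Substituting $F_{0}$ into the displayed formula, the $w(v_{0})$ factors cancel and $ACR_{F_{0}} = 1-\alpha-\delta < 1-\alpha$, contradicting average coverage. Hence average coverage for all $F \in \mathcal{F}$ forces conditional coverage at every $v \in \mathcal{V}_{+}$, completing the proof.

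The main obstacle I anticipate is bookkeeping rather than analysis: one must confirm that the richness hypothesis $\mathcal{F}_{+} \subseteq \mathcal{F}$ genuinely licenses the point-mass construction (in particular that the class of distributions on $\mathcal{V}_{+}$ contains Dirac measures, and that $v \mapsto w(v)\,\P[v]{\cdots}$ is suitably measurable so that the integrals are well-defined), and that the $\inf_{F}\int_{v} w\,dF > 0$ condition is exactly what is needed to keep $ACR_{F}$ well-defined throughout, including at the degenerate $F_{0}$. With those checks in place, both implications are immediate consequences of the formula for $ACR_{F}$.
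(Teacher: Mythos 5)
Your proposal is correct and follows essentially the same route as the paper's proof: sufficiency by bounding the numerator of the $ACR_{F}$ ratio below by $(1-\alpha)$ times the denominator, and necessity by contraposition using the Dirac measure at a point of $\mathcal{V}_{+}$ where conditional coverage fails, which lies in $\mathcal{F}_{+}\subseteq\mathcal{F}$ by hypothesis. The measurability and well-definedness checks you flag are handled implicitly in the paper by the definition of $\mathcal{F}_{+}$ as the set of all distributions on $\mathcal{V}_{+}$, so nothing further is needed.
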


\begin{proof}[Proof of Proposition \ref{d2:app:prop:necessary.sufficient}]
$ $ \newline
For the sufficiency direction, observe that for all $F \in \mathcal{F}$,
\begin{align*}
    ACR_{F} &= \frac{\displaystyle \int_{v \in \mathcal{V}_{+}} w(v)\P[v]{l_{v}'\mu_{v} \in CI_{\alpha,v}(X_{t})|X_{t} \in \X_{v}} dF(v)}{\displaystyle  \int_{v \in \mathcal{V}_{+} } w(v) dF(v)} \\
    &\geq \frac{\displaystyle \int_{v \in \mathcal{V}_{+}} w(v) (1-\alpha) dF(v)}{\displaystyle \int_{v \in \mathcal{V}_{+}} w(v) dF(v)} = 1-\alpha.
\end{align*}
For the necessity direction, suppose there exists $v \in \mathcal{V}_{+}$ for which 
\begin{align*}
    \P[v]{l_{v}'\mu_{v} \in CI_{\alpha,v}(X_{t})|X_{t} \in \X_{v}} < 1-\alpha.
\end{align*}
But since $\mathcal{F}_{+} \subseteq \mathcal{F}$, the probability distribution $F_{v} \in \mathcal{F}_{+}$ that puts probability one on $v$ must be contained in $\mathcal{F}$, which yields $ACR_{F_{v}} < 1-\alpha$. Thus, by contraposition, the necessity direction holds. 
\end{proof}

Proposition \ref{d2:app:prop:necessary.sufficient} shows that, if the class of distributions $\mathcal{F}$ that generates $V_{t} = (\mu_{t}, \Sigma_{t}, \X_{t}, l_{t})$ is rich enough, then conditional coverage in each study is necessary and sufficient for obtaining correct average coverage across studies.\footnote{\citet[Example 1]{fithian2014optimal} consider an analogous setup, but do not show necessity.} Intuitively, when $\mathcal{F}$ is rich enough, there will exist unfavorable distributions $F \in \mathcal{F}$ that invalidate unconditional inference procedures. Allowing for rich classes of $\mathcal{F}$ in this way seems warranted, given that deviations occur precisely because researchers cannot anticipate them.

\section{Covariance Structure of Conditioning Event}\label{d2:app:sec:covariance.structure}
The truncation quantities $(\hZminuspostk, \hZpluspostk, \hZzeropostk)$ depend on the sign and magnitude of the terms $(\Apostk \coef)_{j}$. These terms are the coefficients from regressing the conditioning event variables $\Apostk \hball$ on the estimator of interest $l'\hbpost$:
\begin{align*}
    (\Apostk \coef)_{j} = \paren{\Apostk \frac{\Sigmaall l_{post}}{\sigma_{post}^{2}}}_{j} = \paren{\Apostk\frac{\Cov_{\ball}(\hball, l'\hbpost)}{\Var_{\ball}(l'\hbpost)}}_{j} = \paren{\frac{\Cov_{\ball}(\Apostk\hball, l'\hbpost)}{\Var_{\ball}(l'\hbpost)}}_{j}.
\end{align*}
Thus, the truncated distribution used to construct $\quant_{\alpha}^{*}$ depends on the covariance structure of the conditioning event:
\begin{align*}
    \Cov_{\ball}(\Apostk\hball, l'\hbpost), \quad k = 1,\ldots,K.
\end{align*}
Of course, $\quant_{\alpha}^{*}$ also depends on the variance term $\sigma_{post}^{2} = \Var_{\ball}(l'\hbpost)$, but since this term does not depend on $(\Apostk)_{k=1}^{K}$, we ignore it in this discussion.

To illustrate, suppose one's experiment consists of pilot data $X_{0}$ and primary data $X_{1}$ from independent samples; that is, $X = (X_{0}, X_{1})$, where $X_{0}$ is independent of $X_{1}$. If the deviation set for $\hbpost = \Spost(X_{1})$ is based on estimates $\hball_{0} \subseteq \hball$ computed on $X_{0}$, then there exist matrices $(A_{post,k,0})_{k=1}^{K}$ such that $\Apostk\hball = A_{post,k,0}\hball_{0}$ for each $k$. In this case,
\begin{align*}
    \Cov_{\ball}(\Apostk\hball, l'\hbpost) =   A_{post,k,0}\Cov_{\ball}(\hball_{0}, l'\hbpost) = 0, \quad \forall k.
\end{align*}
Thus, $[\hZminuspostk, \hZpluspostk] = [-\infty, +\infty]$ for all $k$ so that $\widehat{\trunc}_{post} = [-\infty, +\infty]$. In this case, by equation \eqref{d2:eq:unconditional.QUE}, conditional inference based on $\quant_{\alpha}^{*}$ reduces to conventional unconditional inference.

As highlighted above, the adjustments produced by $\quant_{\alpha}^{*}$ depend on the covariance structure of the conditioning event. When there is zero covariance, our proposed conditional intervals and point estimators reduce to their unconditional analogues:
\begin{align*}
    \Apostk \coef = 0, \quad \forall k
    \quad \implies \quad 
    \begin{pmatrix}
    CI_{\alpha}^{*}(X) \\
    \quant_{1/2}^{*}
    \end{pmatrix} 
    = 
    \begin{pmatrix}
    [l'\hbpost \pm z_{1-\alpha/2}\sigma_{post}] \\
    l'\hbpost
    \end{pmatrix}.
\end{align*}

Moreover, reductions to unconditional inference can occur even without full independence. For example, suppose the estimates $\hball$ fall into a polyhedra $k$ where $\Apostk \coef = 0$. Then, we have $\widehat{\trunc}_{post,k} = [-\infty, +\infty]$, which yields $\widehat{\trunc}_{post} = [-\infty, +\infty]$. Notably, this can occur even if there are other polyhedra $k'\neq k$ with $\Apostk \coef \neq 0$. Thus, even without independent samples, it is possible for inferences based on $\quant_{\alpha}^{*}$ to align with conventional ones, depending on the particular data realization.

\section{Proofs of Results from the Main Text}\label{d2:app:proofs}

\subsection{Proof of Proposition \ref{d2:prop:pfanzagl}}\label{d2:app:proof:pfanzagl}
It suffices to verify that our setup satisfies the conditions of \citet[Theorem 5.5.13]{Pfanzagl1994}. To this end, first note that observing $\hball$ is equivalent to observing $(l'\hbpost, \hat{r})$, since $\hat{r} = \hball - \coef l'\hbpost$ for known $\coef$. Moreover,
\begin{align*}
\begin{pmatrix}
    l'\hbpost \\
    \hat{r}
\end{pmatrix}
\sim
N\paren{ 
\begin{pmatrix}
    l'\bpost \\
    \ball - \coef l'\bpost
\end{pmatrix}
,
\begin{pmatrix}
    \sigma_{post}^{2} & 0 \\
    0 & \Sigmaall_{r}
\end{pmatrix}
}, \quad \Sigmaall_{r} = \Sigmaall - \displaystyle \frac{\Sigmaall l_{post}l_{post}'\Sigmaall}{\sigma_{post}^{2}}, \quad
\forall \ball \in \R^{\dim(\ball)}.
\end{align*}
Following \citet[Section 8a.4]{rao1973linear}, there exists a dominating measure for this class of normal distributions, with corresponding density $f(v; l'\bpost)f(r; \ball - \coef l'\bpost)$, where
\begin{align*}
    f(v; \theta) = \frac{1}{\sqrt{2\pi \sigma_{post}^{2}}} \exp\paren{\frac{-(v - \theta)^{2}}{2\sigma_{post}^{2}}}, \quad \theta \in \Theta = \curly{l_{post}'\ball: \ball \in \R^{\dim(\ball)}} = \R,
\end{align*}
and
\begin{align*}
    f(r; \eta) = \frac{1}{\sqrt{\abs{2\pi\Sigmaall_{r}}_{+}}} \exp\paren{\frac{-(r - \eta)'\Sigmaall_{r}^{+}(r - \eta)}{2}}, \quad \eta \in H = \curly{(I - \coef l_{post}')\ball: \ball \in \R^{\dim(\ball)}},
\end{align*}
where $(\Sigmaall_{r}^{+}, \abs{\Sigmaall_{r}}_{+})$ denote the Moore-Penrose inverse and pseudo-determinant of $\Sigmaall_{r}$, respectively. Note that $\ball \mapsto (l'\bpost, \ball - \coef l'\bpost)$ is an invertible transformation. Thus, it suffices to show that $\quant_{\alpha}^{*}$ dominates quantile conditionally unbiased estimators of the form $\quant_{\alpha} = \mu_{\alpha}(l'\hbpost, \hat{r})$ over the class of conditional densities given by
\begin{align*}
    f(v; \theta)f(r; \eta) \frac{\1\curly{r + \gamma v \in \Bpost}}{\P[\theta, \eta]{\hat{r} + \gamma l'\hbpost \in \Bpost}}, \quad (\theta, \eta) \in \Theta \times H,
\end{align*}
where the aforementioned dominating measure is left implicit.

The above densities and the estimator $\quant_{\alpha}^{*}$ satisfy the conditions of \citet[Theorem 5.5.13]{Pfanzagl1994}. The conclusion of that theorem together with \citet[Proposition 2.5.3]{Pfanzagl1994} yields
\begin{equation*}
    \E_{\theta, \eta}[L(\quant_{\alpha}^{*}, \theta)|\hball \in \Bpost] \leq \E_{\theta, \eta}[L(\quant_{\alpha}, \theta)|\hball \in \Bpost], \quad \forall (\theta, \eta) \in \Theta \times H,
\end{equation*}
for any quasiconvex loss function $d \mapsto L(d, \theta)$ that attains its minimum at $d = \theta$. By definition of $\Theta$ and $H$ and invertibility of $\ball \mapsto (l'\bpost, \ball - \coef l'\bpost)$, the conclusion follows.

\subsection{Proof of Proposition \ref{d2:prop:highprob.deviations}}\label{d2:app:proof:highprob.deviations}
Let $\quant_{\alpha}^{\Phi} = l'\hbpost + z_{\alpha}\sigma_{post}$. To show $\P[\ball_{m}]{|\quant_{\alpha,m}^{*} - \quant_{\alpha}^{\Phi}| > \e|\hball \in \Bpost^{m}} \to 0$, it suffices to show
\begin{align*}
    \P[\ball_{m}]{|\quant_{\alpha,m}^{*} - \quant_{\alpha}^{\Phi}| > \e|\hball \in \Bpost^{m}, \hat{r}_{m}} \to[p] 0
\end{align*}
Since $\P[\ball_{m}]{\hball \in \Bpost^{m}} \to 1$ implies $\P[\ball_{m}]{\hball \in \Bpost^{m}|\hat{r}_{m}} \to[p] 1$, to show that the above converges in probability to zero when $\P[\ball_{m}]{\hball \in \Bpost^{m}} \to 1$, it suffices to show that
\begin{align}\label{d2:app:highprob.condition}
    \P[\ball_{m}]{\hball \in \Bpost^{m}|\hat{r}_{m}} \to[p] 1 \quad \implies \quad \P[\ball_{m}]{|\quant_{\alpha,m}^{*} - \quant_{\alpha}^{\Phi}| > \e|\hball \in \Bpost^{m}, \hat{r}_{m}} \to[p] 0.
\end{align}
Following similar notation to \citet[Proof of Proposition 9]{andrews2024inference}, denote
\begin{align*}
    g(\Bpost, \ball, r) &= \P[\ball]{\hball \in \Bpost|\hat{r} = r}, &
    \mathcal{G}(\kappa) &= \{(\Bpost, \ball, r): 1-g(\Bpost, \ball, r) \leq \kappa \}, \\
    h_{\e}(\Bpost, \ball, r) &= \P[\ball]{|\quant_{\alpha}^{*} - \quant_{\alpha}^{\Phi}| > \e |\hball \in \Bpost, \hat{r} = r}, &
    \mathcal{H}(\e, \delta) &= \{(\Bpost,\ball, r): h_{\e}(\Bpost, \ball, r) \leq \delta\}.
\end{align*}
Suppose we can show that for any sequence $(\Bpost^{m}, \ball_{m}, r_{m})$ where $g(\Bpost^{m}, \ball_{m}, r_{m}) \to 1$, we have $h_{\e}(\Bpost^{m}, \ball_{m}, r_{m}) \to 0$. Then, for each $(\e, \delta)$ there exists $\kappa(\e, \delta)$ such that $\mathcal{G}(\kappa(\e, \delta)) \subseteq \mathcal{H}(\e, \delta)$. In this case, we have 
\begin{align*}
     \P[\ball_{m}]{1 - \P[\ball_{m}]{\hball \in \Bpost^{m}|\hat{r}_{m}} \leq \kappa(\e,\delta)} &= \P[\ball_{m}]{(\Bpost^{m}, \ball_{m}, \hat{r}_{m}) \in \mathcal{G}(\kappa(\e, \delta))} \\
     &\leq \P[\ball_{m}]{(\Bpost^{m}, \ball_{m}, \hat{r}_{m}) \in \mathcal{H}(\e, \delta)} \\
     &= \P[\ball_{m}]{\P[\ball_{m}]{|\quant_{\alpha,m}^{*} - \quant_{\alpha}^{\Phi}| > \e|\hball \in \Bpost^{m}, \hat{r}_{m}} \leq \delta},
\end{align*}
By $\P[\ball_{m}]{\hball \in \Bpost^{m}|\hat{r}_{m}} \to[p] 1$, the LHS converges to one for any $\kappa$, so that the RHS converges to one for any $(\e, \delta)$, which implies \eqref{d2:app:highprob.condition}. Thus, to conclude the proof, it suffices to show that for any sequence $(\Bpost^{m}, \ball_{m}, r_{m})$ where $g(\Bpost^{m}, \ball_{m}, r_{m}) \to 1$, we have $h_{\e}(\Bpost^{m}, \ball_{m}, r_{m}) \to 0$. 

Let $(\Bpost^{m}, \ball_{m}, r_{m})$ be such a sequence. Recall $\trunc_{post}^{m}(r) = \{ z \in \R: r + \gamma z \in \Bpost^{m}\}$ so that
\begin{align*}
    g(\Bpost^{m}, \ball_{m}, r_{m}) = \P[\ball_{m}]{l'\hbpost \in \trunc_{post}^{m}(r_{m})|\hat{r}_{m} = r_{m}} = \P[\ball_{m}]{l'\hbpost \in \trunc_{post}^{m}(r_{m})},
\end{align*}
where the last equality follows from independence of $\hat{r}_{m}$ and $l'\hbpost$. Likewise, letting $\quant_{\alpha,m}^{*}(r)$ denote the unique solution to $F_{TN}(l'\hbpost; \quant_{\alpha,m}^{*}(r), \sigma_{post}^{2}, \trunc_{post}^{m}(r)) = 1-\alpha$ given $l'\hbpost \in \trunc_{post}^{m}(r)$, we have
\begin{align*}
    h_{\e}(\Bpost^{m}, \ball_{m}, r_{m}) &= \P[\ball_{m}]{|\quant_{\alpha,m}^{*}(r_{m}) - \quant_{\alpha}^{\Phi}| > \e|l'\hbpost \in \trunc_{post}^{m}(r_{m}), \hat{r}_{m} = r_{m}} \\
    &= \P[\ball_{m}]{|\quant_{\alpha,m}^{*}(r_{m}) - \quant_{\alpha}^{\Phi}| > \e|l'\hbpost \in \trunc_{post}^{m}(r_{m})}.
\end{align*}
Thus, to show \eqref{d2:app:highprob.condition}, it suffices to show
\begin{align*}
    \P[\ball_{m}]{l'\hbpost \in \trunc_{post}^{m}(r_{m})} \to 1 \quad \implies \quad \P[\ball_{m}]{|\quant_{\alpha,m}^{*}(r_{m}) - \quant_{\alpha}^{\Phi}| > \e|l'\hbpost \in \trunc_{post}^{m}(r_{m})} \to 0.
\end{align*}
As we argue below, this setup is amenable to \citet[Lemma 5.10]{van2000asymptotic}. To set things up, denote $\Bar{\mu}_{\alpha, m}^{*} = (\quant_{\alpha,m}^{*}(r_{m})-l'\hbpost)/\sigma_{post}$ and $\Bar{\trunc}_{post}^{m} = (\trunc_{post}^{m}(r_{m}) - l'\hbpost)/\sigma_{post}$, and note that given $0 \in \Bar{\trunc}_{post}^{m}$,
\begin{align*}
    F_{TN}(l'\hbpost; \quant_{\alpha,m}^{*}(r_{m}), \sigma_{post}^{2}, \trunc_{post}^{m}(r_{m})) &= \frac{\displaystyle \int_{-\infty}^{l'\hbpost} \phi\paren{\frac{t-\quant_{\alpha,m}^{*}(r_{m})}{\sigma_{post}}}\1\curly{t \in \trunc_{post}^{m}(r_{m})} dt}{\displaystyle \int_{-\infty}^{\infty}\phi\paren{\frac{t-\quant_{\alpha,m}^{*}(r_{m})}{\sigma_{post}}}\1\curly{t \in \trunc_{post}^{m}(r_{m})} dt} \\
    &= \frac{\displaystyle \int_{-\infty}^{0} \phi(u-\Bar{\mu}_{\alpha, m}^{*})\1\curly{u \in \Bar{\trunc}_{post}^{m}} du}{\displaystyle \int_{-\infty}^{\infty} \phi(u-\Bar{\mu}_{\alpha, m}^{*}) \1\curly{u \in \Bar{\trunc}_{post}^{m}} du} \\
    &= F_{TN}(0; \Bar{\mu}_{\alpha, m}^{*}, 1, \Bar{\trunc}_{post}^{m}),
\end{align*}
which follows from the substitution $u = (t - l'\hbpost)/\sigma_{post}$. Since $\quant_{\alpha,m}^{*}(r_{m})$ is the unique solution for which the LHS equals $1-\alpha$, then $\Bar{\mu}_{\alpha, m}^{*}$ is the unique zero to
\begin{align*}
    \Psi_{m}(\Bar{\mu}_{\alpha, m}^{*}) = (1-\alpha) - F_{TN}(0; \Bar{\mu}_{\alpha, m}^{*}, 1, \Bar{\trunc}_{post}^{m}) = 0, \quad \forall m.
\end{align*}
Given $0 \in \Bar{\trunc}_{post}^{m}$, the maps $z \mapsto \Psi_{m}(z)$ are strictly increasing over $z \in \R$ and have unique zeros $\Bar{\mu}_{\alpha, m}^{*}$. Thus, we appeal to the arguments of \citet[Lemma 5.10]{van2000asymptotic}. In particular, if there exists a function $z \mapsto \Psi(z)$ such that (i) $\Psi_{m}(z)|\{0 \in \Bar{\trunc}_{post}^{m}\} \to[p] \Psi(z)$ for all $z \in \R$ and (ii) $\Psi(z_{\alpha} - \e) < 0 < \Psi(z_{\alpha} + \e)$ for every $\e > 0$, then we obtain $\Bar{\mu}_{\alpha, m}^{*}|\{0 \in \Bar{\trunc}_{post}^{m}\} \to[p] z_{\alpha}$. Indeed, under the monotonicity property, we have 
\begin{align*}
    \P[\ball_{m}]{\Psi_{m}(z_{\alpha} - \e) < 0 < \Psi_{m}(z_{\alpha} + \e)|0 \in \Bar{\trunc}_{post}^{m}} \leq \P[\ball_{m}]{z_{\alpha} - \e < \Bar{\mu}_{\alpha, m}^{*} < z_{\alpha} + \e|0 \in \Bar{\trunc}_{post}^{m}},
\end{align*}
The LHS converges to one, since under conditions (i) and (ii) we have $(\Psi_{m}(z_{\alpha} - \e), \Psi_{m}(z_{\alpha} + \e))|\{0 \in \Bar{\trunc}_{post}^{m}\} \to[p] (\Psi(z_{\alpha} - \e), \Psi(z_{\alpha} + \e))$ and $\Psi(z_{\alpha} - \e) < 0 < \Psi(z_{\alpha} + \e)$. Thus, the conclusion of the lemma yields
\begin{align*}
    \P[\ball_{m}]{|\quant_{\alpha,m}^{*}(r_{m}) - \quant_{\alpha}^{\Phi}| > \e | l'\hbpost \in \trunc_{post}^{m}(r_{m})} = \P[\ball_{m}]{|\Bar{\mu}_{\alpha, m}^{*} - z_{\alpha}| > \e | 0 \in \Bar{\trunc}_{post}^{m}} \to 0,
\end{align*}
which is the desired convergence. We now show that $\P[\ball_{m}]{l'\hbpost \in \trunc_{post}^{m}(r_{m})} \to 1$ implies the existence of $\Psi(z)$ satisfying conditions (i) and (ii), which will conclude the proof.

Consider $\Psi(z) = (1-\alpha)-\Phi(-z)$, which has unique zero at $z_{\alpha}$, and hence satisfies condition (ii). Now, in considering $\P[\ball_{m}]{\abs{\Psi_{m}(z) - \Psi(z)} > \e|0 \in \Bar{\trunc}_{post}^{m}}$, we bound the inner absolute value as follows. Letting $\xi \overset{d}{=} l'\hbpost - l_{post}'\ball_{m} \overset{\ball_{m}}{\sim} N(0, \sigma_{post}^{2})$, we have
\begin{align*}
    &\abs{\Psi_{m}(z) - \Psi(z)} = \abs{F_{TN}(0; z, 1, \Bar{\trunc}_{post}^{m}) - \Phi(-z)} \\
    &= \abs{F_{TN}(\xi; \xi + z\sigma_{post}, \sigma_{post}^{2}, \trunc_{post}^{m}(r_{m}) - l_{post}'\ball_{m}) - \Phi(-z)} \\
    &\leq \frac{\displaystyle \abs{\int_{-\infty}^{\xi} \phi\paren{\frac{t-(\xi + z\sigma_{post})}{\sigma_{post}}}(\1\{t \in \trunc_{post}^{m}(r_{m}) - l_{post}'\ball_{m}\}-1)dt}}{\displaystyle \int_{-\infty}^{\infty}\phi\paren{\frac{t-(\xi + z\sigma_{post})}{\sigma_{post}}}\1\{t \in \trunc_{post}^{m}(r_{m}) - l_{post}'\ball_{m}\}dt} \\
    &+ \abs{\frac{1}{\displaystyle \int_{-\infty}^{\infty}\frac{1}{\sigma_{post}}\phi\paren{\frac{t-(\xi + z\sigma_{post})}{\sigma_{post}}}\1\{t \in \trunc_{post}^{m}(r_{m}) - l_{post}'\ball_{m}\}dt} - 1}\Phi(-z) \\
    &\leq \frac{\displaystyle \int_{-\infty}^{\infty} \phi\paren{\frac{t-(\xi + z\sigma_{post})}{\sigma_{post}}}\1\{t \notin \trunc_{post}^{m}(r_{m}) - l_{post}'\ball_{m}\}dt}{\displaystyle \int_{-\infty}^{\infty}\phi\paren{\frac{t-(\xi + z\sigma_{post})}{\sigma_{post}}}\1\{t \in \trunc_{post}^{m}(r_{m}) - l_{post}'\ball_{m}\}dt} \\
    &+ \abs{\frac{1}{\displaystyle \int_{-\infty}^{\infty}\frac{1}{\sigma_{post}}\phi\paren{\frac{t-(\xi + z\sigma_{post})}{\sigma_{post}}}\1\{t \in \trunc_{post}^{m}(r_{m}) - l_{post}'\ball_{m}\}dt} - 1}\Phi(-z) \\
    &= f_{m}(\xi, z).
\end{align*}
For a given realization $\xi = v$, and letting $\xi_{z,v} \sim N(v + z\sigma_{post}, \sigma_{post}^{2})$, the above is equal to
\begin{align*}
    f_{m}(v, z) = \frac{\P[]{\xi_{z,v} \notin \trunc_{post}^{m}(r_{m}) - l_{post}'\ball_{m}}}{\P[]{\xi_{z,v} \in \trunc_{post}^{m}(r_{m}) - l_{post}'\ball_{m}}} + \abs{\frac{1}{\P[]{\xi_{z,v} \in \trunc_{post}^{m}(r_{m}) - l_{post}'\ball_{m}}} - 1}\Phi(-z)
\end{align*}
Since $\P[]{\xi \in \trunc_{post}^{m}(r_{m}) - l_{post}'\ball_{m}} \to 1$ and $\xi_{z,v}$ has full support for each $(z,v)$, then $f_{m}(v,z) \to 0$ for each $(z,v)$. In particular, $f_{m}(\xi,z)$ converges to zero almost surely: $\P[]{\lim_{m \to \infty}f_{m}(\xi,z) = 0} = 1$. This implies convergence in probability $\P[]{f_{m}(\xi,z) > \e} \to 0$, and hence
\begin{align*}
    \P[\ball_{m}]{\abs{\Psi_{m}(z) - \Psi(z)} > \e|0 \in \Bar{\trunc}_{post}^{m}} &\leq \P[]{f_{m}(\xi,z) > \e|\xi \in \trunc_{post}^{m}(r_{m}) - l_{post}'\ball_{m}} \\
    &\leq \frac{\P[]{f_{m}(\xi,z) > \e}}{\P[]{\xi \in \trunc_{post}^{m}(r_{m}) - l_{post}'\ball_{m}}} \\
    &\to 0.
\end{align*}
Thus, $\Psi(z) = (1-\alpha)-\Phi(-z)$ satisfies condition (i).

\subsection{Proof of Proposition \ref{d2:prop:lee.representation}}\label{d2:app:proof:lee.representation}
For each $k$, \citet[Lemma 5.1]{lee2016exact} implies
\begin{align*}
    \{A_{post, k} \hball \leq c_{post, k}\} &= \{A_{post, k} (\gamma l'\hbpost + \hat{r}) \leq c_{post, k}\}  \\
    &= \{A_{post, k}\gamma l'\hbpost \leq c_{post, k} - A_{post, k}\hat{r}\} \\
    &= \curly{(A_{post, k}\gamma)_{j} l'\hbpost \leq (c_{post, k})_{j} - (A_{post, k}\hat{r})_{j}, \forall j} \\
    &= \curly{l'\hbpost \geq \frac{(c_{post, k})_{j} - (A_{post, k}\hat{r})_{j}}{(A_{post, k}\gamma)_{j}}, \quad \forall j \in J_{k}^{-}} \\
    &\bigcap \curly{l'\hbpost \leq \frac{(c_{post, k})_{j} - (A_{post, k}\hat{r})_{j}}{(A_{post, k}\gamma)_{j}}, \quad \forall j \in J_{k}^{+}} \\
    &\bigcap \curly{(c_{post, k})_{j} - (A_{post, k}\hat{r})_{j} \geq 0, \quad  \forall j \in J_{k}^{0}}
\end{align*}
Thus, for each $k$, the event $\{A_{post, k} \hball \leq c_{post, k}\}$ is equivalent to $\{l'\hbpost \in \widehat{\trunc}_{post,k}\}$. Following the same above steps, we also have
\begin{align*}
    \widehat{\trunc}_{post} = \{z \in \R: \hat{r} + \gamma z \in \Bpost\} &= \bigcup_{k=1}^{K}\curly{z \in \R: \Apostk(\gamma z + \hat{r}) \leq \cpostk} \\
    &= \bigcup_{k=1}^{K}\curly{z \in \R: z \in \widehat{\trunc}_{post,k}} \\
    &= \bigcup_{k=1}^{K}\widehat{\trunc}_{post,k}.
\end{align*}

\subsection{Proof of Proposition \ref{d2:prop:impossibility}}\label{d2:app:proof:impossibility}
Let $(x_{0}, \tXo)$ be as in the conditions, so we can take open neighborhood  $U \subseteq \tXo$ of $x_0$ such that $\beta_{0} \notin CS_{\alpha}(x, \tXo)$ for all $x \in U$.
By definition of the set
\[
A = \left\{ x \in \R^{\dim(X)}  : \exists\, \tXpost \subseteq \R^{\dim(X)} \text{ such that } x \in \tXpost \text{ and } \beta_{0} \notin CS_{\alpha}(x, \tXpost) \right\},
\]
we have that every $x \in U$ belongs to $A$ by taking $\tXpost = \tXo$. Since $U$ is an open neighborhood, it has positive Lebesgue measure. Thus, $A$ contains a set of positive measure, and so $A$ itself has positive Lebesgue measure. By continuity and full support yielding density $p(x) > 0$ for all $x \in \R^{\dim(X)}$, we therefore obtain 
$$\P{X\in A} \geq \P{X\in U} = \int_{U} p(x)dx > 0,$$
meaning that $A$ has strictly positive probability under the distribution of $X$. Thus, a researcher who deviates for $\Xpost = A$ will do so with positive probability. Moreover, by definition of $A$, one can report $\tXpost$ as a function of $x \in A$ to ensure that $\beta_{0}$ will never be in $CS_{\alpha}(x, \tXpost)$, which yields conditional coverage equal to zero.

\subsection{Proof of Proposition \ref{d2:prop:local.reporting}}\label{d2:app:proof:local.reporting}
By the law of total probability over the partition $\Xpost=\bigcup_{m}\X_m$, for all $\ball$,
    \begin{align*}
    &\P[\ball]{l'
    \bpost \in CI_{\alpha}(X,\tXpost(X)) \mid X \in \Xpost} \\
    &= \sum_{m=1}^M \P[\ball]{X \in \X_m \mid X \in \Xpost} \cdot \P[\ball]{l'
    \bpost \in CI_{\alpha}(X,\tXpost(X)) \mid X \in \X_m} \\
    &= \sum_{m=1}^M \P[\ball]{X \in \X_m \mid X \in \Xpost} \cdot \underbrace{\P[\ball]{l'\bpost \in CI_{\alpha}(X,\X_m) \mid X \in \X_m}}_{\geq 1-\alpha} \\
    &\geq (1-\alpha)\cdot \underbrace{\sum_{m=1}^M \P[\ball]{X \in \X_m \mid X \in \Xpost}}_{=1} \\ 
    &= 1-\alpha.
    \end{align*}

\section{Uniform Asymptotic Validity}\label{d2:app:sec:uniformity}
We establish the uniform asymptotic validity of the feasible conditional inference procedures discussed Section \ref{d2:sec:feasible.inference} of the main text. The following analyses and proofs follow the approaches of \cite{andrews2024inference} and \cite{mccloskey2024hybrid}. To make this section self-contained, we once again summarize the environment and procedure---though with some additional discussion, notation, and remarks.

\subsection{Setup}

\paragraph{Environment.} We suppose that a sample of size $n$ is drawn from some unknown distribution $P_{n} \in \cPn$, where $\cPn$ is a class of probability distributions corresponding to a sample of size $n$. For example, given a class of distributions $\mathcal{P}_{0}$ with bounded moments, if we have i.i.d. draws of size $n$ from some fixed $P_{0} \in \mathcal{P}_{0}$, then $P_{n} = (P_{0})^{n}$ is the product distribution and $\mathcal{P}_{n}$ is the set of all products of a fixed distribution with bounded moments. As we take $n \to \infty$, there is a corresponding sequence of unknown distributions $\{P_{n}\} \in \times_{n=1}^{\infty}\cPn$, where $\times_{n=1}^{\infty}\cPn$ is the sequence of distribution classes, and the notation $\{P_{n}\} \in \times_{n=1}^{\infty}\cPn$ means that $P_{n} \in \cPn$ for all $n$. We use $P \in \cup_{n=1}^{\infty}\cPn$ to index elements belonging to $\mathcal{P}_{n}$ for some $n$.

\paragraph{Estimates.} Given a sample of size $n$ from the unknown $P_{n}$, the researcher constructs PAP and post estimates, which we collect into a vector $\hballn$, alongside a corresponding covariance matrix estimator $\hSigmaalln$. The researcher is interested in a linear combination of a particular vector of post estimates: $l'\hat{\beta}_{post,n}$. Let $l_{post}$ denote the vector induced by (i) matrix multiplication to select $\hat{\beta}_{post,n}$ from $\hballn$ and (ii) multiplication of $\hat{\beta}_{post,n}$ by $l$. Thus, $l'\hat{\beta}_{post,n} = l_{post}'\hballn$. We denote the corresponding variance estimator by $\hsigmapostn^{2} = l_{post}'\hSigmaalln l_{post}$.

\paragraph{Conditioning Event.} The conditioning event corresponding to $\hat{\beta}_{post,n}$ is 
\begin{align*}
    \hballn \in \hBpostn, \quad \hBpostn = \bigcup_{k=1}^{K} \{\hballn: \Apostk \hballn \leq \hcpostkn\},
\end{align*}
where $\{\hballn: \Apostk \hballn \leq \hcpostkn\}$ are disjoint polyhedra based on nonrandom matrices $\Apostk \in \R^{\dim(c_{k}) \times \dim(\ball)}$ and random vectors $\hcpostkn \in \R^{\dim(c_{k})}$. We assume the rows of $\Apostk$ are nonzero, i.e., $(\Apostk)_{j} \neq 0$ for all $(j,k)$. When convenient, we use
\begin{align*}
    A = 
    \begin{pmatrix}
    A_{1} \\
    \vdots \\
    A_{K} \\
    \end{pmatrix} \in \R^{\dim(c) \times \dim(\ball)}, \quad \quad 
    c = 
    \begin{pmatrix}
    c_{1} \\
    \vdots \\
    c_{K}
    \end{pmatrix} \in \R^{\dim(c)}.
\end{align*}
to denote stacked matrices and vectors. To account for conditioning, we consider 
\begin{align*}
    \hat{r}_{n} = \hballn - \hcoefn l_{post}'\hballn, \quad \hcoefn = \frac{\hSigmaalln l_{post}}{\hsigmapostn^{2}}.
\end{align*}
Throughout, we define the maximum over the empty set as $-\infty$ and the minimum over the empty set as $+\infty$. With that in mind, define the sets 
\begin{align*}
    \hZminuspostkn &= \max_{j \in \hat{J}_{k,n}^{-}} \frac{(\hcpostkn)_{j} - (\Apostk \hat{r}_{n})_{j}}{(\Apostk\hcoefn)_{j}}, & \hat{J}_{k,n}^{-} &= \curly{j: (\Apostk\hcoefn)_{j} < 0}, \\
    \hZpluspostkn &= \min_{j \in \hat{J}_{k,n}^{+}} \frac{(\hcpostkn)_{j} - (\Apostk \hat{r}_{n})_{j}}{(\Apostk \hcoefn)_{j}}, & \hat{J}_{k,n}^{+} &= \curly{j: (\Apostk\hcoefn)_{j} > 0}, \\ 
    \hZzeropostkn &= \min_{j \in \hat{J}_{k,n}^{0}} (\hcpostkn)_{j} - (\Apostk \hat{r}_{n})_{j}, & \hat{J}_{k,n}^{0} &= \curly{j: (\Apostk\hcoefn)_{j} = 0}.
\end{align*}
As in the main text, the event $\{\Apostk \hballn \leq \hcpostkn\}$ is equivalent to event $\{l_{post}'\hballn \in \htruncpostkn\}$, and $\{\hballn \in \hBpostn\}$ is equivalent to event $\{l_{post}'\hballn \in \htruncpostn\}$, where
\begin{equation*}
    \htruncpostn = \bigcup_{k=1}^{K} \htruncpostkn, \quad  
    \htruncpostkn =
    \begin{cases}
        [\hZminuspostkn, \hZpluspostkn], & \hZzeropostkn \geq 0, \\
        \hfil \varnothing, & \hZzeropostkn < 0.
    \end{cases}
\end{equation*}

\paragraph{Conditional Inference.} Let $z \mapsto F_{TN}(z;\mu,\sigma^{2},\trunc)$ denote the CDF of the $N(\mu, \sigma^{2})$ distribution truncated to a set $\trunc$. This truncated CDF takes the form
\begin{align*}
    F_{TN}(z;\mu,\sigma^{2},\trunc) =  \frac{\displaystyle \int_{-\infty}^{z} \phi\paren{\frac{t-\mu}{\sigma}}\1\curly{t \in \trunc} dt}{\displaystyle \int_{-\infty}^{\infty}\phi\paren{\frac{t-\mu}{\sigma}}\1\curly{t \in \trunc} dt}.
\end{align*}
The truncated normal distribution with known variance has strict monotone likelihood ratio in $\mu$, and hence $\mu \mapsto F_{TN}(z;\mu,\sigma^{2},\trunc)$ is strictly decreasing in $\mu$. Given $\hballn \in \hBpostn$, the conditionally quantile unbiased estimator (QUE) is the unique $\hquantn$ that solves
\begin{equation*}
     F_{TN}(l_{post}'\hballn;\hquantn,\hsigmapostn^{2},\htruncpostn ) = 1-\alpha.
\end{equation*}
The goal is to show that inference based on $\hquantn$ is uniformly asymptotically valid in the sense of Proposition \ref{d2:app:prop:valid} below.

\paragraph{Scaled Quantities.} In what follows, we consider scaled estimates $\tballn = \sqrt{n}\hballn$ and $\tSigmaalln = n\hSigmaalln$, which induce variance $\tsigmapostn^{2} = n\hsigmapostn^{2}$, deviation set $\tBpostn = \sqrt{n}\hBpostn$ such that
\begin{align*}
    \tballn \in \tBpostn, \quad \tBpostn = \bigcup_{k=1}^{K} \{\tballn: \Apostk \tballn \leq \tcpostkn\}, \quad \tcpostkn = \sqrt{n}\hcpostkn,
\end{align*}
residualization step 
\begin{align*}
    \tilde{r}_{n} = \tballn - \tcoefn l_{post}'\tballn = \sqrt{n}\hat{r}_{n}, \quad \tcoefn = \frac{\tSigmaalln l_{post}}{\tsigmapostn^{2}} = \hcoefn,
\end{align*}
and sets 
\begin{align*}
    \tZminuspostkn &= \max_{j \in \tilde{J}_{k,n}^{-}} \frac{(\tcpostkn)_{j} - (\Apostk \tilde{r}_{n})_{j}}{(\Apostk\tcoefn)_{j}} = \sqrt{n}\hZminuspostkn, & \tilde{J}_{k,n}^{-} &= \curly{j:(\Apostk\tcoefn)_{j} < 0}, \\
    \tZpluspostkn &= \min_{j \in \tilde{J}_{k,n}^{+}} \frac{(\tcpostkn)_{j} - (\Apostk \tilde{r}_{n})_{j}}{(\Apostk \tcoefn)_{j}} = \sqrt{n}\hZpluspostkn, & \tilde{J}_{k,n}^{+} &= \curly{j:(\Apostk\tcoefn)_{j} > 0}, \\ 
    \tZzeropostkn &= \min_{j \in \tilde{J}_{k,n}^{0}} (\tcpostkn)_{j} - (\Apostk \tilde{r}_{n})_{j} = \sqrt{n}\hZzeropostkn, & \tilde{J}_{k,n}^{0} &= \curly{j:(\Apostk\tcoefn)_{j} = 0},
\end{align*}
such that the event $\{\tballn \in \tBpostn\}$ is equivalent to event $\{l_{post}'\tballn \in \ttruncpostn\}$, where
\begin{equation*}
    \ttruncpostn = \bigcup_{k=1}^{K} \ttruncpostkn, \quad  
    \ttruncpostkn =
    \begin{cases}
        [\tZminuspostkn, \tZpluspostkn], & \tZzeropostkn \geq 0, \\
        \hfil \varnothing, & \tZzeropostkn < 0.
    \end{cases}
\end{equation*}
Note that $\ttruncpostkn = \sqrt{n}\htruncpostkn$ so that $\ttruncpostn = \sqrt{n}\htruncpostn$.

\subsection{Assumptions}\label{d2:app:sec:assumptions}
\setcounter{assumption}{0}
For ease of reference, we restate Assumptions \ref{d2:app:ass:BL}-\ref{d2:app:ass:consistent.cutoff}.

Let $BL_{1}$ denote the set of real-valued functions that are bounded above in absolute value by one and have Lipschitz constant bounded above by one. Furthermore, given matrix $\Sigmaall$, let $\lambda_{min}(\Sigmaall)$ and $\lambda_{max}(\Sigmaall)$ denote its minimum and maximum eigenvalues.
\begin{assumption}\label{d2:app:ass:BL}
There exist $(\ballP, \SigmaallP)$ such that for $\tballPn = \sqrt{n}\ballP$ and $\xi_{P} \sim N(0, \SigmaallP)$,
\begin{align*}
    \lim_{n \to \infty} \sup_{P \in \cPn} \sup_{f \in BL_{1}} \abs{\E_{P}[
    f(\tballn - \tballPn)] - \E[f(\xi_{P})]}.
\end{align*}
Moreover, there exists finite $\blam > 0$ such that $1/\blam \leq \lambda_{min}(\SigmaallP) \leq \lambda_{max}(\SigmaallP) \leq \blam$ for all $P$.
\end{assumption}

Consider the scaled covariance estimator $\tSigmaalln = n\hSigmaalln$
\begin{assumption}\label{d2:app:ass:consistent.variance}
For each $\e > 0$,
\begin{align*}
    \lim_{n \to \infty} \sup_{P \in \cPn} \P[P]{\norm{\tSigmaalln - \SigmaallP} > \e} = 0.
\end{align*}
\end{assumption}

Consider $\tcpostkn = \sqrt{n}\hcpostkn$ for $\{\tballn: \Apostk \tballn \leq \tcpostkn\} = \sqrt{n}\{\hballn: \Apostk \hballn \leq \hcpostkn\}$.
\begin{assumption}\label{d2:app:ass:consistent.cutoff}
For each $k=1,\ldots, K$, there exist $\cpostkP$ such that for each $\e > 0$,
\begin{align*}
    \lim_{n \to \infty} \sup_{P \in \cPn} \P[P]{\norm{\tcpostkn - \cpostkP} > \e} = 0.
\end{align*}
Moreover, there exists finite $\blam_{c} > 0$ such that $\norm{\cpostkP} \leq \blam_{c}$ for all $P$ and $k$.
\end{assumption}

\subsection{Results}\label{d2:app:sec:results}
Given $\hballn \in \hBpostn$, the QUE $\hquantn$ uniquely solves
\begin{equation*}
     F_{TN}(l_{post}'\hballn;\hquantn,\hsigmapostn^{2},\htruncpostn) = 1-\alpha.
\end{equation*}
Our goal is to show that conditional inference based on $\hquantn$ is uniformly asymptotically valid, in the following sense. 
\begin{proposition}\label{d2:app:prop:valid}
Under Assumptions \ref{d2:app:ass:BL}-\ref{d2:app:ass:consistent.cutoff}, we have
\begin{align*}
    \lim_{n \to \infty} \sup_{P \in \cPn} \bigabs{\P[P]{\hquantn \geq l_{post}'\ballP \mid \hballn \in \hBpostn} - \alpha}\P[P]{\hballn \in \hBpostn} = 0.
\end{align*}
\end{proposition}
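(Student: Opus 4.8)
The plan is a standard drifting-parameter (subsequence) argument that reduces the uniform claim to an exact computation in a limiting Gaussian experiment, where the pivotal identity \eqref{d2:eq:pivotal.quantity} holds verbatim; the only genuinely new work is the simultaneous treatment of the $K$ polyhedra. \textbf{Step 1 (reduction to a bad subsequence).} Suppose the conclusion fails. Then there are $\delta>0$, a subsequence $n_s\to\infty$, and $P_s:=P_{n_s}\in\mathcal P_{n_s}$ along which
\[
\bigabs{\,\mathbb P_{P_s}\{\hat\mu^{*}_{\alpha,n_s}\ge l_{post}'\ball(P_s)\mid \hballn[n_s]\in\hBpostn[n_s]\}-\alpha\,}\cdot\mathbb P_{P_s}\{\hballn[n_s]\in\hBpostn[n_s]\}>\delta .
\]
Since the first factor is at most one, $\mathbb P_{P_s}\{\hballn[n_s]\in\hBpostn[n_s]\}>\delta$ along $n_s$, so it suffices to show the conditional overestimation probability converges to $\alpha$. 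Because $F_{TN}(z;\mu,\sigma^2,\trunc)$ is strictly decreasing in $\mu$ and scale invariant, $\{\hat\mu^{*}_{\alpha,n_s}\ge l_{post}'\ball(P_s)\}=\{F_{TN}(l_{post}'\tballn[n_s];\,\sqrt{n_s}\,l_{post}'\ball(P_s),\,\tsigmapostn[n_s]^{2},\,\ttruncpostn[n_s])\ge 1-\alpha\}$, so it is enough to prove that, conditional on $\{\tballn[n_s]\in\tBpostn[n_s]\}$, the statistic $F_{TN}(l_{post}'\tballn[n_s];\,\sqrt{n_s}\,l_{post}'\ball(P_s),\,\tsigmapostn[n_s]^{2},\,\ttruncpostn[n_s])$ converges in distribution to $U(0,1)$.

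\textbf{Step 2 (extracting a limit experiment).} Pass to a further subsequence along which (i) $\Sigmaall(P_s)\to\Sigmaall_{*}$ — the eigenvalue bounds of Assumption \ref{d2:app:ass:BL} make the admissible set of covariances compact, and $\Sigmaall_{*}$ is positive definite; (ii) $c_{post,k}(P_s)\to c_{post,k,*}$ for each $k$, which exists by the uniform bound in Assumption \ref{d2:app:ass:consistent.cutoff}; (iii) $\mu_{s}:=\sqrt{n_s}\,l_{post}'\ball(P_s)$ and each $\sqrt{n_s}(\Apostk)_{j}\ball(P_s)$ converge in $\eR$; and (iv) every sign $\operatorname{sgn}\big((\Apostk\coef_s)_{j}\big)$ stabilizes, where $\coef_s\to\coef_{*}:=\Sigmaall_{*}l_{post}/\sigma^{2}_{post,*}$ in probability (if $(\Apostk\coef_{*})_{j}=0$, extract once more on that coordinate's sign). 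By Assumptions \ref{d2:app:ass:BL}--\ref{d2:app:ass:consistent.variance} and the continuous mapping theorem, $\big(l_{post}'\tballn[n_s]-\mu_{s},\ \tilde r_{n_s}-\mathbb E_{P_s}[\tilde r_{n_s}]\big)$ then converges jointly to the corresponding (mutually independent) Gaussian coordinates of the residualized limit, while $\coef_s,\ \tsigmapostn[n_s]^{2},\ \tcpostkn[n_s]$ converge in probability to their limits.

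\textbf{Step 3 (convergence of the truncation set and the pivot).} By the representation of Proposition \ref{d2:prop:lee.representation}, each endpoint of each of the (at most $K$) intervals composing $\ttruncpostn[n_s]$ is a max or min of ratios $\big((\tcpostkn[n_s])_{j}-(\Apostk\tilde r_{n_s})_{j}\big)\big/(\Apostk\coef_s)_{j}$ (or of the differences indexed by $\tJzerokns$). With the sign patterns stabilized and numerators and denominators converging in $\eR$, these endpoints converge in $\eR$; a coordinate with $(\Apostk\coef_s)_{j}\to0$ sends its ratio to $\pm\infty$ in a direction consistent with its companion event $\{(\Apostk\tballn[n_s])_{j}\le(\tcpostkn[n_s])_{j}\}$ having limiting probability $1$ or $0$, so it leaves the limiting interval unaffected. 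Hence, conditionally on the limiting residual $r_{*}$, $\ttruncpostn[n_s]$ converges to a finite union of intervals $\trunc_{*}$ which is exactly the truncation region of the limiting Gaussian conditioning event $\bigcup_{k}\{\Apostk(\coef_{*}Z_{*}+r_{*})\le c_{post,k,*}\}$ with $Z_{*}\sim N(\mu_{*},\sigma^{2}_{post,*})$; joint convergence then gives that $F_{TN}(l_{post}'\tballn[n_s];\mu_{s},\tsigmapostn[n_s]^{2},\ttruncpostn[n_s])$, conditional on $\{\tballn[n_s]\in\tBpostn[n_s]\}$, converges in distribution to $F_{TN}(Z_{*};\mu_{*},\sigma^{2}_{post,*},\trunc_{*})$ conditional on $\{Z_{*}\in\trunc_{*}\}$. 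By the probability integral transform in the limit normal model — identity \eqref{d2:eq:pivotal.quantity}, valid for any measurable truncation region, in particular a finite union of intervals — this limit is $U(0,1)$, and since $\mathbb P\{U(0,1)=1-\alpha\}=0$ the threshold is a continuity point, so $\mathbb P_{P_s}\{F_{TN}\ge1-\alpha\mid\tballn[n_s]\in\tBpostn[n_s]\}\to\alpha$, contradicting Step 1.

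\textbf{Main obstacle.} The technical heart is Step 3: jointly controlling the union-of-polyhedra conditioning indicator $\1\{\tballn[n_s]\in\tBpostn[n_s]\}$ and the data-dependent truncation set $\ttruncpostn[n_s]$ while bookkeeping (a) mean components $\mu_{s}$ or $\sqrt{n_s}(\Apostk)_{j}\ball(P_s)$ that drift to $\pm\infty$, (b) the degeneracies $(\Apostk\coef_{*})_{j}=0$ that shuttle an index among $\tJminuskns$, $\tJpluskns$, $\tJzerokns$, and (c) intervals that collapse to $\varnothing$ or expand to $\R$ in the limit. The $K=1$ version of this analysis appears in \cite{andrews2024inference} and \cite{mccloskey2024hybrid}; the new content is handling the $K$ intervals at once and checking that their limiting union coincides with the truncation region of the limiting Gaussian conditioning event, so that the exact pivot transfers.
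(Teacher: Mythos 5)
Your proposal is correct and follows essentially the same route as the paper: a subsequence (drifting-parameter) reduction to a Gaussian limit experiment in which the truncated-normal pivot is exactly $U(0,1)$ conditional on the limiting union of polyhedra, followed by a continuous-mapping/continuity-point argument to transfer the result back (your Steps 1--2 correspond to the paper's Lemma \ref{d2:app:lemma:sufficient}, which extracts subsequences along which the deviation probability, $\Apostk\tballPns$, $\Sigma(\Pns)$, and $\cpostkPns$ all converge, and your Step 3 corresponds to the paper's limit-problem construction and CMT argument). The only cosmetic difference is that where you stabilize the signs of $(\Apostk\coef_s)_j$ along the subsequence, the paper instead rewrites membership in $\ttruncpostn$ as $\max_k \1\{\Apostk(\coef t + \hat r)\le \cpostk - \apostk\}$ and verifies that the resulting functional is continuous on a set of probability one under the limiting Gaussian, which handles the degenerate $J_k^0$ coordinates you flag as the main obstacle.
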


To prove Proposition \ref{d2:app:prop:valid}, we state and prove a series of lemmas. 

\paragraph{Sufficient Conditions Based on Subsequences.} The first lemma shows that, to verify the uniform asymptotic validity in Proposition \ref{d2:app:prop:valid}, it suffices to verify a type of pointwise asymptotic validity for every subsequence that satisfies a set of conditions. Let $\tquantn = \sqrt{n}\hquantn$.

\begin{lemma}\label{d2:app:lemma:sufficient}
Suppose there exists finite $\blam, \blam_{c} > 0$ such that $1/\blam \leq \lambda_{min}(\SigmaallP) \leq \lambda_{max}(\SigmaallP) \leq \blam$ and $\norm{\cpostkP} \leq \blam_{c}$, for all $P$ and $k = 1, \ldots, K$. Then, to show that
\begin{align*}
    \limsup_{n \to \infty} \sup_{P \in \cPn} \bigabs{\P[P]{\tquantn \geq l_{post}'\tballPn \mid \tballn \in \tBpostn} - \alpha}\P[P]{\tballn \in \tBpostn} = 0,
\end{align*}
it suffices to show that for all subsequences $\curly{n_{s}} \subseteq \curly{n}$ and $\curly{P_{n_{s}}} \in \times_{s=1}^{\infty}\cPn[n_{s}]$ with
\begin{enumerate}[label=(\roman*)]
    \item $\P[P_{n_{s}}]{\tilde{\ball}_{n_{s}} \in \widetilde{\mathcal{B}}_{post, n_{s}}} \to p^{*} \in (0,1]$; 
    \item $\Apostk\tballPns \to \apostk^{*} \in [-\infty, +\infty]^{\dim(c_{k})}$ for all $k$;
    \item $\Sigmaall(\Pns) \to \Sigmaall^{*} \in \mathcal{M} = \curly{\Sigmaall: 1/\blam \leq \lambda_{min}(\Sigmaall) \leq \lambda_{max}(\Sigmaall) \leq \blam}$; and
    \item $\cpostkPns \to \cpostk^{*} \in [-\blam_{c}, \blam_{c}]^{\dim(c_{k})}$ for all $k$;
\end{enumerate}
we have that
\begin{align*}
    \lim_{s \to \infty} \P[\Pns]{\tquantns \geq l_{post}'\tballPns \mid \tballns \in \tBpostns} = \alpha.
\end{align*}
\end{lemma}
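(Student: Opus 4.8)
The plan is to prove this reduction by a standard drifting‑subsequence argument. Write
$g_n(P) = \bigabs{\P[P]{\tquantn \geq l_{post}'\tballPn \mid \tballn \in \tBpostn} - \alpha}\,\P[P]{\tballn \in \tBpostn}$
for the quantity inside the supremum, adopting the usual convention that $g_n(P)=0$ whenever $\P[P]{\tballn \in \tBpostn}=0$ (so the conditional probability need not be defined). Since $\alpha\in(0,1)$ the first factor is at most $1$ and the second is at most $1$, so $g_n(P)\in[0,1]$ throughout, and in particular $g_n(P)\le \P[P]{\tballn \in \tBpostn}$. Because $g_n\ge 0$, it suffices to rule out $\limsup_{n\to\infty}\sup_{P\in\cPn} g_n(P)>0$. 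First I would suppose this limsup equals some $\delta>0$, extract a subsequence $\{m_{j}\}$ along which $\sup_{P\in\cPn[m_{j}]} g_{m_{j}}(P)\to\delta$, and for each $j$ choose $P_{m_{j}}\in\cPn[m_{j}]$ with $g_{m_{j}}(P_{m_{j}})\ge \sup_{P} g_{m_{j}}(P)-1/j$, so that $g_{m_{j}}(P_{m_{j}})\to\delta>0$.

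Next I would pass to a further subsequence, relabeled $\{n_{s}\}$ with $\{\Pns\}\in\times_{s=1}^{\infty}\cPn[n_{s}]$, along which all the limits in conditions (i)--(iv) exist simultaneously. This uses sequential compactness of the relevant spaces: $\P[\Pns]{\tballns \in \tBpostns}\in[0,1]$; each $\Apostk\tballPns$ lies in the product of one‑point compactifications $[-\infty,+\infty]^{\dim(c_{k})}$, which is sequentially compact (this is the step that accommodates drifting, possibly divergent, mean parameters); $\Sigmaall(\Pns)$ lies in $\mathcal{M}=\curly{\Sigmaall:\ 1/\blam\le\lambda_{min}(\Sigmaall)\le\lambda_{max}(\Sigmaall)\le\blam}$, which is closed and bounded in the symmetric matrices by the eigenvalue bounds assumed in the lemma, hence compact; and $\cpostkPns\in[-\blam_{c},\blam_{c}]^{\dim(c_{k})}$ by the norm bound assumed in the lemma. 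Since only finitely many such sequences are involved (one for (i) and one for (iii), and $K$ for each of (ii) and (iv)), I can successively extract to land on a single subsequence along which all converge; denote the limit in (i) by $p^{*}\in[0,1]$.

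Then I would split on $p^{*}$. If $p^{*}=0$, then $g_{n_{s}}(\Pns)\le \P[\Pns]{\tballns \in \tBpostns}\to 0$ directly, contradicting $g_{n_{s}}(\Pns)\to\delta>0$. If instead $p^{*}\in(0,1]$, then $\{n_{s}\},\{\Pns\}$ satisfies all four hypotheses (i)--(iv), so the assumed pointwise conclusion gives $\P[\Pns]{\tquantns \geq l_{post}'\tballPns \mid \tballns \in \tBpostns}\to\alpha$; hence the first factor of $g_{n_{s}}(\Pns)$ tends to $0$, and since the second factor is at most $1$, again $g_{n_{s}}(\Pns)\to 0$, contradicting $g_{n_{s}}(\Pns)\to\delta>0$. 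In either case we reach a contradiction, so $\limsup_{n}\sup_{P} g_n(P)=0$, which (by $g_n\ge 0$) is exactly the claim.

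The argument is essentially bookkeeping, so there is no deep obstacle; the points that need care are (a) isolating the degenerate case $p^{*}=0$ and being explicit about the convention $0\cdot(\text{undefined})=0$ in the definition of $g_n$, and (b) justifying that the quantities in (ii)--(iv) take values in sequentially compact sets — in particular handling $\Apostk\tballPns$ coordinatewise in $[-\infty,+\infty]$ and verifying compactness of $\mathcal{M}$. These are exactly the ingredients the subsequent lemmas exploit when they compute the pointwise limit under a fixed quadruple $(p^{*},\apostk^{*},\Sigmaall^{*},\cpostk^{*})$.
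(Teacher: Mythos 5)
Your proposal is correct and follows essentially the same drifting-subsequence contradiction argument as the paper: extract a near-maximizing sequence $\{P_{m_j}\}$, pass to a further subsequence along which (i)--(iv) hold by sequential compactness of $[0,1]$, $[-\infty,+\infty]^{\dim(c_k)}$, $\mathcal{M}$, and $[-\blam_c,\blam_c]^{\dim(c_k)}$, and derive a contradiction from the assumed pointwise limit. The only (immaterial) differences are organizational: the paper splits the absolute value as $\max\{-x,x\}$ into two one-sided conditions and rules out $p^*=0$ \emph{before} extracting, by showing the contradiction hypothesis forces $\P[P_{n_q}]{\tilde{\ball}_{n_q}\in\widetilde{\mathcal{B}}_{post,n_q}}\geq \e/(2\alpha)$ eventually, whereas you extract first and dispose of the degenerate case afterwards via the bound $g_n(P)\leq \P[P]{\tballn\in\tBpostn}$ (also note $[-\infty,+\infty]$ is the two-point, not one-point, compactification of $\R$).
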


\begin{proof}[Proof of Lemma \ref{d2:app:lemma:sufficient}]
$ $ \newline
Note $|x| = \max\{-x,x\}$. Thus, to show that
\begin{align*}
    \limsup_{n \to \infty} \sup_{P \in \cPn} \bigabs{\P[P]{\tquantn \geq l_{post}'\tballPn \mid \tballn \in \tBpostn} - \alpha}\P[P]{\tballn \in \tBpostn} = 0,
\end{align*}
it suffices to show both that
\begin{align}\label{d2:app:eq:liminf.condition} 
    \liminf_{n \to \infty} \inf_{P \in \cPn} \paren{\P[P]{\tquantn \geq l_{post}'\tballPn \mid \tballn \in \tBpostn} - \alpha}\P[P]{\tballn \in \tBpostn} \geq 0
\end{align}
and that
\begin{align}\label{d2:app:eq:limsup.condition}
    \limsup_{n \to \infty} \sup_{P \in \cPn} \paren{\P[P]{\tquantn \geq l_{post}'\tballPn \mid \tballn \in \tBpostn} - \alpha}\P[P]{\tballn \in \tBpostn} \leq 0.
\end{align}
We claim that \eqref{d2:app:eq:liminf.condition} holds if for all subsequences satisfying conditions (i)-(iv) of this lemma, 
\begin{align}\label{d2:app:eq:subliminf.condition}
    \liminf_{s \to \infty} \P[\Pns]{\tquantns \geq l_{post}'\tballPns \mid \tballns \in \tBpostns} \geq \alpha.
\end{align}
Likewise, we claim that \eqref{d2:app:eq:limsup.condition} holds if for all subsequences satisfying conditions (i)-(iv) of this lemma, 
\begin{align}\label{d2:app:eq:sublimsup.condition}
    \limsup_{s \to \infty} \P[\Pns]{\tquantns \geq l_{post}'\tballPns \mid \tballns \in \tBpostns} \leq \alpha.
\end{align}
Suppose these two claims are true. Then, to show that both \eqref{d2:app:eq:liminf.condition} and \eqref{d2:app:eq:limsup.condition} hold, it suffices to show that for all subsequences satisfying conditions (i)-(iv) of this lemma,
\begin{align*}
    \lim_{s \to \infty} \P[\Pns]{\tquantns \geq l_{post}'\tballPns \mid \tballns \in \tBpostns} = \alpha.
\end{align*}
In particular, the above equation implies \eqref{d2:app:eq:subliminf.condition} and \eqref{d2:app:eq:sublimsup.condition}. We provide the explicit argument for the first claim that \eqref{d2:app:eq:subliminf.condition} implies \eqref{d2:app:eq:liminf.condition}.

Towards a contradiction, suppose that \eqref{d2:app:eq:subliminf.condition} holds for all subsequences satisfying conditions (i)-(iv) of this lemma, but that \eqref{d2:app:eq:liminf.condition} fails. If \eqref{d2:app:eq:liminf.condition} fails, then there exists $\e>0$ such that
\begin{align*}
    \liminf_{n \to \infty} \inf_{P \in \cPn} \paren{\P[P]{\tquantn \geq l_{post}'\tballPn \mid \tballn \in \tBpostn} - \alpha}\P[P]{\tballn \in \tBpostn} = L < -3\e.
\end{align*}
By properties of the limit inferior, given $\e>0$, there exists subsequence $\{n_{q}\} \subseteq \{n\}$ such that
\begin{align*}
    \inf_{P \in \mathcal{P}_{n_{q}}} \paren{\P[P]{\tilde{\mu}_{\alpha, n_{q}}^{*} \geq l_{post}'\tilde{\ball}_{n_{q}}(P) \mid \tilde{\ball}_{n_{q}} \in \widetilde{\mathcal{B}}_{post, n_{q}}} - \alpha}\P[P]{\tilde{\ball}_{n_{q}} \in \widetilde{\mathcal{B}}_{post, n_{q}}} = L_{q} < L + \e, \quad \forall q.
\end{align*}
By properties of the infimum, given $\e>0$ and $q$, there exists $P_{n_{q}} \in \mathcal{P}_{n_{q}}$ such that
\begin{align*}
    \paren{\P[P_{n_{q}}]{\tilde{\mu}_{\alpha, n_{q}}^{*} \geq l_{post}'\tilde{\ball}_{n_{q}}(P_{n_{q}}) \mid \tilde{\ball}_{n_{q}} \in \widetilde{\mathcal{B}}_{post, n_{q}}} - \alpha}\P[P_{n_{q}}]{\tilde{\ball}_{n_{q}} \in \widetilde{\mathcal{B}}_{post, n_{q}}} < L_{q} + \e.
\end{align*}
Since $L < -3\e$ by hypothesis and $L_{q} < L + \e$, the above implies 
\begin{align}\label{d2:app:eq:condition3fodder}
    \limsup_{q \to \infty}\paren{\P[P_{n_{q}}]{\tilde{\mu}_{\alpha, n_{q}}^{*} \geq l_{post}'\tilde{\ball}_{n_{q}}(P_{n_{q}}) \mid \tilde{\ball}_{n_{q}} \in \widetilde{\mathcal{B}}_{post, n_{q}}} - \alpha}\P[P_{n_{q}}]{\tilde{\ball}_{n_{q}} \in \widetilde{\mathcal{B}}_{post, n_{q}}} < -\e.
\end{align}
We will show there must exist a subsequence $\curly{n_{s}} \subseteq \curly{n_{q}}$ that satisfies conditions (i)-(iv) of this lemma. Note that, if we can find such a subsequence, then \eqref{d2:app:eq:subliminf.condition} would yield
\begin{align*}
    0 \leq \liminf_{s \to \infty} \paren{\P[\Pns]{\tquantns \geq l_{post}'\tballPns \mid \tballns \in \tBpostns}-\alpha},
\end{align*}
which implies
\begin{align}\label{d2:app:eq:contradiction}
    0 \leq \liminf_{s \to \infty} \paren{\P[\Pns]{\tquantns \geq l_{post}'\tballPns \mid \tballns \in \tBpostns}-\alpha}\P[P_{n_{s}}]{\tilde{\ball}_{n_{s}} \in \widetilde{\mathcal{B}}_{post, n_{s}}},
\end{align}
which yields a contradiction, since equation \eqref{d2:app:eq:condition3fodder} implies that the above sequence is eventually bounded above by a negative constant. 

\paragraph{Condition (i).} Equation \eqref{d2:app:eq:condition3fodder} implies 
\begin{align*}
    \liminf_{q \to \infty}\paren{1 - \frac{\P[P_{n_{q}}]{\tilde{\mu}_{\alpha, n_{q}}^{*} \geq l_{post}'\tilde{\ball}_{n_{q}}(P_{n_{q}}) \mid \tilde{\ball}_{n_{q}} \in \widetilde{\mathcal{B}}_{post, n_{q}}}}{\alpha}}
    \P[P_{n_{q}}]{\tilde{\ball}_{n_{q}} \in \widetilde{\mathcal{B}}_{post, n_{q}}} = L' > \frac{\e}{\alpha}.
\end{align*}
By properties of the limit inferior, given $\e/(2\alpha) >0$, there exists $q_{0}$ such that
\begin{align*}
    \paren{1 - \frac{\P[P_{n_{q}}]{\tilde{\mu}_{\alpha, n_{q}}^{*} \geq l_{post}'\tilde{\ball}_{n_{q}}(P_{n_{q}}) \mid \tilde{\ball}_{n_{q}} \in \widetilde{\mathcal{B}}_{post, n_{q}}}}{\alpha}}
    \P[P_{n_{q}}]{\tilde{\ball}_{n_{q}} \in \widetilde{\mathcal{B}}_{post, n_{q}}} > L' - \frac{\e}{2\alpha}, \quad \forall q \geq q_{0}. 
\end{align*}
For such $q$, the term in parentheses on the LHS must be positive, and hence in $(0, 1]$. Thus,
\begin{align*}
    \P[P_{n_{q}}]{\tilde{\ball}_{n_{q}} \in \widetilde{\mathcal{B}}_{post, n_{q}}} > L' - \frac{\e}{2\alpha} > \frac{\e}{2\alpha}, \quad \forall q \geq q_{0}.
\end{align*}
The set $[\e/(2\alpha),1]$ is compact and contains $\P[P_{n_{q}}]{\tilde{\ball}_{n_{q}} \in \widetilde{\mathcal{B}}_{post, n_{q}}}$ for all $q \geq q_{0}$. Thus, there must exist a subsequence $\{n_{r}\} \subseteq \{n_{q}: q \geq q_{0}\}$ such that 
\begin{align*}
    \P[P_{n_{r}}]{\tilde{\ball}_{n_{r}} \in \widetilde{\mathcal{B}}_{post, n_{r}}} \to p^{*} \in [\e/(2\alpha),1].
\end{align*}

\paragraph{Condition (ii).} Consider $a = (a_{1}', \ldots, a_{K}')' \in [-\infty, +\infty]^{\dim(c)}$. Under the standard order topology on $[-\infty, +\infty]$, the induced product topology on $[-\infty, +\infty]^{\dim(c)}$ is generated by the metric 
\begin{align*}
    d(a^{1}, a^{2}) = \max_{j}|\Phi(a^{1}_{j}) - \Phi(a^{2}_{j})|, \quad a^{1}, a^{2} \in [-\infty, \infty]^{\dim(c)},
\end{align*}
where $z \mapsto \Phi(z)$ denotes the standard normal CDF with $\Phi(-\infty) = 0$ and $\Phi(+\infty) = 1$. The set $[-\infty, +\infty]^{\dim(c)}$ is compact under $d$ and contains $A_{post}\tballn[n_{r}](\Pn[n_{r}])$ for all $r$. Thus, there exists a further subsequence $\{n_{t}\} \subseteq \{n_{r}\}$ and limit $\apost^{*}$ such that
\begin{align*}
    d(A_{post}\tballn[n_{t}](\Pn[n_{t}]), a_{post}^{*}) \to 0, \quad a_{post}^{*} \in [-\infty, +\infty]^{\dim(c)}.
\end{align*}
In particular, $\Apostk\tballn[n_{t}](\Pn[n_{t}]) \to \apostk^{*} \in [-\infty, +\infty]^{\dim(c_{k})}$ for all $k$.

\paragraph{Condition (iii).} By assumption, $\SigmaallP \in \mathcal{M}$ for all $P$. Thus, $\Sigmaall(P_{n_{t}}) \in \mathcal{M}$ for all $t$. By compactness of $\mathcal{M}$ in the spectral norm, there then must exist a subsequence $\curly{n_{u}} \subseteq \curly{n_{t}}$ such that $\Sigmaall(P_{n_{u}}) \to \Sigmaall^{*} \in \mathcal{M}$.

\paragraph{Condition (iv).}
By assumption, we have $\norm{\cpostkP} \leq \blam_{c}$, for all $P$ and $k$. Thus, $\cpost(\Pn[n_{u}]) \in [-\blam_{c}, \blam_{c}]^{\dim(c)}$ for all $u$. The set $[-\blam_{c}, \blam_{c}]^{\dim(c)}$ is compact, so there must exist a subsequence $\{n_{s}\} \subseteq \{n_{u}\}$ and limit $\cpost^{*}$ such that
\begin{align*}
    \cpost(\Pn[n_{s}]) \to \cpost^{*} \in [-\blam_{c}, \blam_{c}]^{\dim(c)}.
\end{align*}
In particular, $\cpostk(\Pn[n_{s}]) \to \cpostk^{*} \in [-\blam_{c}, \blam_{c}]^{\dim(c_{k})}$ for all $k$.

\paragraph{Contradiction.} We have obtained a subsequence $\{n_{s}\} \subseteq \{n_{q}\}$ that satisfies conditions (i)-(iv) of this lemma. By hypothesis, equation \eqref{d2:app:eq:subliminf.condition} holds for this subsequence:
\begin{align*}
    \liminf_{s \to \infty} \P[\Pns]{\tquantns \geq l_{post}'\tballPns \mid \tballns \in \tBpostns} \geq \alpha.
\end{align*}
But this leads to a contradiction of equation \eqref{d2:app:eq:condition3fodder}, as we described under equation \eqref{d2:app:eq:contradiction}.
\end{proof}

\paragraph{Representing the Coverage Event.} To use Lemma \ref{d2:app:lemma:sufficient} to prove Proposition \ref{d2:app:prop:valid}, we first derive a representation of the coverage event $\{\tquantn \geq l_{post}'\tballPn\}$ in terms of appropriately normalized (centered and scaled) estimators. In what follows, recall that $\tballn^{*} = \tballn - \tballPn = \sqrt{n}(\hballn - \ballP)$. Moreover, define the normalized residual 
\begin{align*}
    \tilde{r}_{n}^{*} = \tilde{r}_{n} + \tcoefn l_{post}'\tballPn,
\end{align*}
the normalized truncation quantities 
\begin{align*}
    \stZminuspostkn &= \max_{j \in \tilde{J}_{k,n}^{-}} \frac{(\tcpostkn)_{j} - (\Apostk \tilde{r}_{n}^{*})_{j}}{(\Apostk\tcoefn)_{j}} = \tZminuspostkn - l_{post}'\tballPn, \\
    \stZpluspostkn &= \min_{j \in \tilde{J}_{k,n}^{+}} \frac{(\tcpostkn)_{j} - (\Apostk \tilde{r}_{n}^{*})_{j}}{(\Apostk \tcoefn)_{j}} = \tZpluspostkn - l_{post}'\tballPn, \\ 
    \stZzeropostkn &= \min_{j \in \tilde{J}_{k,n}^{0}} (\tcpostkn)_{j} - (\Apostk \tilde{r}_{n}^{*})_{j} = \tZzeropostkn,
\end{align*}
and the normalized truncation sets
\begin{equation*}
    \ttruncpostn^{*} = \bigcup_{k=1}^{K} \ttruncpostkn^{*}, \quad  
    \ttruncpostkn^{*} =
    \begin{cases}
        [\stZminuspostkn, \stZpluspostkn], & \stZzeropostkn \geq 0 \\
        \hfil \varnothing, & \stZzeropostkn < 0
    \end{cases}.
\end{equation*}
Note that $\ttruncpostn^{*} = \ttruncpostn - l_{post}'\tballPn$.

\begin{lemma}\label{d2:app:lemma:normalized.CDF}
For all $n \geq 1$, we have
\begin{align*}
    \tquantn \geq l_{post}'\tballPn  \iff 1-\alpha \leq F_{TN}(l_{post}'\tballn^{*};0,\tsigmapostn^{2},\ttruncpostn^{*}).
\end{align*}
\end{lemma}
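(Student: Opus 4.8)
The plan is to obtain the claimed equivalence from three ingredients already available in the setup: the strict monotonicity of $\mu \mapsto F_{TN}(z;\mu,\sigma^{2},\trunc)$, a location-shift change of variables in the integral defining $F_{TN}$ in \eqref{d2:eq:trunc.CDF.formula}, and the translation identity $\ttruncpostn^{*} = \ttruncpostn - l_{post}'\tballPn$ recorded above. Throughout I work on the event $\{\tballn \in \tBpostn\}$, equivalently $\{l_{post}'\tballn \in \ttruncpostn\}$, on which $\tquantn$ is well-defined as the unique solution of $F_{TN}(l_{post}'\tballn;\tquantn,\tsigmapostn^{2},\ttruncpostn) = 1-\alpha$.

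First I would use that the truncated normal with known variance has strict monotone likelihood ratio in $\mu$, so $\mu \mapsto F_{TN}(z;\mu,\sigma^{2},\trunc)$ is strictly decreasing for every fixed $(z,\sigma^{2},\trunc)$. Comparing the means $\tquantn$ and $l_{post}'\tballPn$, with $z = l_{post}'\tballn$, $\sigma^{2} = \tsigmapostn^{2}$, and $\trunc = \ttruncpostn$, strict decreasingness gives $\tquantn \geq l_{post}'\tballPn$ if and only if $F_{TN}(l_{post}'\tballn;\tquantn,\tsigmapostn^{2},\ttruncpostn) \leq F_{TN}(l_{post}'\tballn;l_{post}'\tballPn,\tsigmapostn^{2},\ttruncpostn)$; since the left-hand CDF equals $1-\alpha$ by definition of $\tquantn$, this is exactly $1-\alpha \leq F_{TN}(l_{post}'\tballn;l_{post}'\tballPn,\tsigmapostn^{2},\ttruncpostn)$.

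Next I would make the substitution $u = t - l_{post}'\tballPn$ in the numerator and denominator integrals of \eqref{d2:eq:trunc.CDF.formula} applied to $F_{TN}(l_{post}'\tballn;l_{post}'\tballPn,\tsigmapostn^{2},\ttruncpostn)$: the kernel $\phi((t-l_{post}'\tballPn)/\tsigmapostn)$ becomes $\phi(u/\tsigmapostn)$, the indicator $\1\curly{t \in \ttruncpostn}$ becomes $\1\curly{u \in \ttruncpostn - l_{post}'\tballPn}$, and the upper limit $t = l_{post}'\tballn$ becomes $u = l_{post}'\tballn - l_{post}'\tballPn = l_{post}'\tballn^{*}$. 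Hence $F_{TN}(l_{post}'\tballn;l_{post}'\tballPn,\tsigmapostn^{2},\ttruncpostn) = F_{TN}(l_{post}'\tballn^{*};0,\tsigmapostn^{2},\ttruncpostn - l_{post}'\tballPn)$. I would then invoke $\ttruncpostn - l_{post}'\tballPn = \ttruncpostn^{*}$, which follows in one line from the definitions: the index sets $\tilde{J}_{k,n}^{-},\tilde{J}_{k,n}^{+},\tilde{J}_{k,n}^{0}$ depend only on the signs of the entries of $\Apostk\tcoefn$, and since $\tilde{r}_{n}^{*} = \tilde{r}_{n} + \tcoefn l_{post}'\tballPn$, pulling the common constant $l_{post}'\tballPn$ out of each $\max$/$\min$ gives $\stZminuspostkn = \tZminuspostkn - l_{post}'\tballPn$ and $\stZpluspostkn = \tZpluspostkn - l_{post}'\tballPn$, while $\stZzeropostkn = \tZzeropostkn$ because $(\Apostk\tcoefn)_{j} = 0$ on $\tilde{J}_{k,n}^{0}$; hence each $\ttruncpostkn^{*}$ is empty exactly when $\ttruncpostkn$ is and otherwise equals the translate $\ttruncpostkn - l_{post}'\tballPn$, so taking unions over $k$ yields the identity. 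Chaining the three displays gives $\tquantn \geq l_{post}'\tballPn \iff 1-\alpha \leq F_{TN}(l_{post}'\tballn^{*};0,\tsigmapostn^{2},\ttruncpostn^{*})$, which is the claim of Lemma \ref{d2:app:lemma:normalized.CDF}.

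There is no genuinely hard step; the argument is bookkeeping. The two points requiring care are (i) the direction of the monotonicity-based equivalence --- strict decreasingness of $F_{TN}$ in $\mu$ flips the inequality --- and (ii) carrying the truncation region through the change of variables, so that the shift $\ttruncpostn \mapsto \ttruncpostn - l_{post}'\tballPn$ is not dropped. The one substantive verification is the translation identity for the normalized truncation set, i.e. that the set built from $\stZminuspostkn,\stZpluspostkn,\stZzeropostkn$ coincides with $\ttruncpostn - l_{post}'\tballPn$; this is handled by the coordinate-wise computation sketched above, which is also already asserted in the construction of $\ttruncpostn^{*}$.
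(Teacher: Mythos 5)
Your proposal is correct and follows essentially the same route as the paper's proof: strict decreasingness of $F_{TN}$ in $\mu$ to convert $\tquantn \geq l_{post}'\tballPn$ into an inequality on $F_{TN}(l_{post}'\tballn;l_{post}'\tballPn,\tsigmapostn^{2},\ttruncpostn)$, followed by the substitution $u = t - l_{post}'\tballPn$ and the translation identity $\ttruncpostn^{*} = \ttruncpostn - l_{post}'\tballPn$. The only step you take for granted that the paper verifies explicitly is that $\tquantn = \sqrt{n}\hquantn$ solves the \emph{scaled} equation $F_{TN}(l_{post}'\tballn;\tquantn,\tsigmapostn^{2},\ttruncpostn) = 1-\alpha$, which the paper obtains from the defining equation for $\hquantn$ via the substitution $u = t/\sqrt{n}$; this is a one-line scale-invariance check and does not affect correctness.
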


\begin{proof}[Proof of Lemma \ref{d2:app:lemma:normalized.CDF}]
$ $ \newline
Noting that $\tquantn = \sqrt{n}\hquantn$, we obtain
\begin{align*}
    F_{TN}(l_{post}'\tballn;\tquantn,\tsigmapostn^{2},\ttruncpostn) &= \frac{\displaystyle \int_{-\infty}^{l_{post}'\tballn} \phi\paren{\frac{t-\tquantn}{\tsigmapostn}}\1\curly{t \in \ttruncpostn} dt}{\displaystyle \int_{-\infty}^{\infty}\phi\paren{\frac{t-\tquantn}{\tsigmapostn}}\1\curly{t \in \ttruncpostn} dt} \\
    &= \frac{\displaystyle \int_{-\infty}^{l_{post}'\sqrt{n}\hballn} \phi\paren{\frac{t/\sqrt{n}-\hquantn}{\hsigmapostn}}\1\curly{t/\sqrt{n} \in \htruncpostn} dt}{\displaystyle \int_{-\infty}^{\infty}\phi\paren{\frac{t/\sqrt{n}-\hquantn}{\hsigmapostn}}\1\curly{t/\sqrt{n} \in \htruncpostn} dt} \\
    &= \frac{\displaystyle \int_{-\infty}^{l_{post}'\hballn} \phi\paren{\frac{u-\hquantn}{\hsigmapostn}}\1\curly{u \in \htruncpostn} du}{\displaystyle \int_{-\infty}^{\infty}\phi\paren{\frac{u-\hquantn}{\hsigmapostn}}\1\curly{u \in \htruncpostn} du} \\
    &= F_{TN}(l_{post}'\hballn;\hquantn,\hsigmapostn^{2},\htruncpostn),
\end{align*}
where the third equality follows from substituting $u = t/\sqrt{n}$ and $du = dt/\sqrt{n}$. Since $\hquantn$ uniquely solves
$F_{TN}(l_{post}'\hballn;\hquantn,\hsigmapostn^{2},\htruncpostn) = 1-\alpha$ given $\hballn \in \hBpostn$, then given $\tballn \in \tBpostn$ we have that $\tquantn$ uniquely solves
\begin{align*}
     F_{TN}(l_{post}'\tballn;\tquantn,\tsigmapostn^{2},\ttruncpostn) = 1-\alpha.
\end{align*}
Since the truncated normal CDF is strictly decreasing in $\mu$, the above implies
\begin{align*}
     \tquantn \geq l_{post}'\tballPn \iff 1-\alpha \leq F_{TN}(l_{post}'\tballn;l_{post}'\tballPn,\tsigmapostn^{2},\ttruncpostn).
\end{align*}
Observe that
\begin{align*}
    F_{TN}(l_{post}'\tballn;l_{post}'\tballPn,\tsigmapostn^{2},\ttruncpostn)
    &= \frac{\displaystyle \int_{-\infty}^{l_{post}'\tballn} \phi\paren{\frac{t-l_{post}'\tballPn}{\tsigmapostn}}\1\curly{t \in \ttruncpostn} dt}{\displaystyle \int_{-\infty}^{\infty}\phi\paren{\frac{t-l_{post}'\tballPn}{\tsigmapostn}}\1\curly{t \in \ttruncpostn} dt}  \\
    &= \frac{\displaystyle \int_{-\infty}^{l_{post}'\tballn^{*}} \phi\paren{\frac{u}{\tsigmapostn}}\1\curly{u \in \ttruncpostn - l_{post}'\tballPn} du}{\displaystyle \int_{-\infty}^{\infty}\phi\paren{\frac{u}{\tsigmapostn}}\1\curly{u \in \ttruncpostn - l_{post}'\tballPn} du} \\
    &= F_{TN}(l_{post}'\tballn^{*};0,\tsigmapostn^{2},\ttruncpostn - l_{post}'\tballPn),
\end{align*}
where the second equality follows from substituting $u = t-l_{post}'\tballPn$ and $du = dt$. Thus,
\begin{align*}
     \tquantn \geq l_{post}'\tballPn \iff 1-\alpha \leq F_{TN}(l_{post}'\tballn^{*};0,\tsigmapostn^{2},\ttruncpostn - l_{post}'\tballPn),
\end{align*}
as desired.
\end{proof}

Lemma \ref{d2:app:lemma:normalized.CDF} provides a representation of the coverage event $\{\tquantn \geq l_{post}'\tballPn\}$ in terms of the quantity $F_{TN}(l_{post}'\tballn[n]^{*};0,\tsigmapostn[n]^{2},\ttruncpostn[n]^{*})$, which depends on normalized variables 
\begin{align*}
    (\tballn^{*}, \tSigmaalln, \ttruncpostn[n]^{*}) = (\sqrt{n}(\hballn - \ball(P)), n\hSigmaalln, \sqrt{n}(\htruncpostn - l_{post}'\ball(P))).
\end{align*}
We now characterize the limiting/asymptotic behavior of the RHS as $n \to \infty$. This will allow us to leverage Lemma \ref{d2:app:lemma:sufficient} to prove Proposition \ref{d2:app:prop:valid}.

\paragraph{Limit Problem.} For subsequences $\curly{n_{s}} \subseteq \curly{n}$ and $\curly{P_{n_{s}}} \in \times_{s=1}^{\infty}\cPn[n_{s}]$ satisfying the conditions of Lemma \ref{d2:app:lemma:sufficient}, we show that the corresponding sequence of finite sample problems, where we observe samples of size $n_{s}$ from distributions $P_{n_{s}}$ and employ conditional inference procedures, is well approximated by a limit problem where we observe a single normal vector with known covariance matrix and employ the analogous conditional inference procedures. In considering this limit problem, we will be able to understand the asymptotic behavior of
\begin{align*}
    (\tballn[n_{s}]^{*}, \tSigmaalln[n_{s}], \ttruncpostn[n_{s}]^{*}) = (\sqrt{n_{s}}(\hballn[n_{s}] - \ball(\Pns)), n_{s}\hSigmaalln[n_{s}], \sqrt{n_{s}}(\htruncpostn[n_{s}] - l_{post}'\ball(\Pns))),
\end{align*}
and hence the asymptotic behavior of $F_{TN}(l_{post}'\tballn[n_{s}]^{*};0,\tsigmapostn[n_{s}]^{2},\ttruncpostn[n_{s}]^{*})$.

To derive the limit problem, note that under Assumptions \ref{d2:app:ass:BL}-\ref{d2:app:ass:consistent.cutoff}, and for subsequences that satisfy the conditions of Lemma \ref{d2:app:lemma:sufficient}, we obtain the following convergence results, which we will use in the proof of Proposition \ref{d2:app:prop:valid}:
\begin{align*}
    \tballns^{*} &\to[d] N(0, \Sigmaall^{*}), & \Apostk\tballPns &\to \apostk^{*}, \\
    \tSigmaalln[n_{s}] &\to[p] \Sigmaall^{*}, & \tcpostkns &\to[p] \cpostk^{*}, \\
    \Sigmaall(\Pns) &\to \Sigmaall^{*}, & \cpostkPns &\to \cpostk^{*},
\end{align*}
where $\Sigmaall^{*} \in \mathcal{M}$, $\apostk^{*} \in [-\infty, +\infty]^{\dim(c_{k})}$, $\cpostk^{*} \in [-\blam_{c}, \blam_{c}]^{\dim(c_{k})}$, and $k=1,\ldots, K$. Thus, stacking the vectors and matrices over $k$, we obtain
\begin{align}\label{d2:app:eq:stacked.CLT}
    \begin{pmatrix}
        \tballns^{*} \\
        \tSigmaalln[n_{s}] \\
        A_{post}\tballPns \\
        \tilde{c}_{post,n_{s}}
    \end{pmatrix} 
    \to[d]
    \begin{pmatrix}
        \xi^{*} \\
        \Sigmaall^{*} \\
        a_{post}^{*} \\
        c_{post}^{*}
    \end{pmatrix}, \quad
    \xi^{*} \sim N(0, \Sigmaall^{*}).
\end{align}
Letting $(\Apostk)_{j}$ denote rows of $\Apostk$, the continuous mapping theorem (CMT) implies
\begin{align}\label{d2:app:eq:CMT.base}
\begin{split}
    \max_{k}\1\curly{\Apostk \tballns \leq \tilde{c}_{post,k,n_{s}}} &= \max_{k}\1\curly{\Apostk \tballns^{*} \leq \tilde{c}_{post,k,n_{s}} - \Apostk \tballPns} \\
    &= \max_{k}\min_{j} \1\curly{(\Apostk)_{j} \tballns^{*} \leq (\tilde{c}_{post,k,n_{s}})_{j} - (\Apostk)_{j} \tballPns} \\
    &\to[d] \max_{k}\min_{j} \1\curly{(\Apostk)_{j} \xi^{*} \leq (\cpostk^{*})_{j} - (\apostk^{*})_{j}} \\
    &= \max_{k}\1\curly{\Apostk \xi^{*} \leq \cpostk^{*} - \apostk^{*}},
\end{split}
\end{align}
where we use the continuity of $(\Apostk)_{j}\xi^{*} \sim N(0, (\Apostk)_{j}\Sigmaall^{*}(\Apostk)_{j}')$, which we obtain from $(\Apostk)_{j} \neq 0$ for all $(k,j)$. Consider the union of polyhedra
\begin{align*}
    \mathcal{B}_{post}^{*} = \bigcup_{k=1}^{K} \{\xi^{*}: \Apostk \xi^{*} \leq \cpostk^{*} - \apostk^{*}\}.
\end{align*}
By equation \eqref{d2:app:eq:CMT.base} and the conditions of Lemma \ref{d2:app:lemma:sufficient}, we obtain
\begin{align}\label{d2:app:eq:probability.convergence}
\begin{split}
    \P[\Pns]{\tballns \in \tBpostns} &= \EP[\Pns]{\max_{k}\1\curly{\Apostk \tballns \leq \tilde{c}_{post,k,n_{s}}}} \\
    &\to \EP[]{\max_{k}\1\curly{\Apostk \xi^{*} \leq \cpostk^{*} - \apostk^{*}}}\\
    &= \P[]{\xi^{*} \in \mathcal{B}_{post}^{*}} \\
    &= p^{*} > 0.
\end{split}
\end{align}
For the limit problem residual step, define
\begin{align*}
    r^{*} = \xi^{*} - \gamma^{*}l_{post}'\xi^{*}, \quad \gamma^{*} = \frac{\Sigmaall^{*}l_{post}}{\sigma_{post}^{*2}}, \quad \sigma_{post}^{*2} = l_{post}'\Sigmaall^{*}l_{post}. 
\end{align*}
For conditioning, define truncation quantities
\begin{align*}
    \Zminuspostk &= \max_{j \in J_{k}^{-}} \frac{(\cpostk^{*} - \apostk^{*})_{j} - (\Apostk r^{*})_{j}}{(\Apostk \gamma^{*})_{j}}, & J_{k}^{-} &= \{j: (\Apostk \gamma^{*})_{j} < 0\}, \\
    \Zpluspostk &= \min_{j \in J_{k}^{+}} \frac{(\cpostk^{*} - \apostk^{*})_{j} - (\Apostk r^{*})_{j}}{(\Apostk \gamma^{*})_{j}}, & J_{k}^{+} &= \{j: (\Apostk \gamma^{*})_{j} > 0\}, \\
    \Zzeropostk &= \min_{j \in J_{k}^{0}} (\cpostk^{*} - \apostk^{*})_{j} - (\Apostk r^{*})_{j} & J_{k}^{0} &= \{j: (\Apostk \gamma^{*})_{j} = 0\},
\end{align*}
such that
\begin{align*}
    \truncpost^{*} = \bigcup_{k=1}^{K}\truncpostk^{*}, \quad \truncpostk^{*} = 
    \begin{cases}
        [\Zminuspostk, \Zpluspostk], & \Zzeropostk \geq 0, \\
        \hfil \varnothing, & \Zzeropostk < 0.
    \end{cases}
\end{align*}
In summary, the limit problem consists of a normal draw $\xi^{*} \sim N(0, \Sigmaall^{*})$ with (unknown) mean zero and (known) covariance matrix $\Sigmaall^{*} \in \mathcal{M}$, and considers inference for post estimates $l_{post}'\xi^{*}$ conditional on $\xi^{*}$ falling into a union of polyhedra:
\begin{align*}
    \xi^{*} \in \mathcal{B}_{post}^{*}, \quad \mathcal{B}_{post}^{*} = \bigcup_{k=1}^{K} \{\xi^{*}: \Apostk \xi^{*} \leq \cpostk^{*} - \apostk^{*}\},
\end{align*}
where the conditioning event $\{\xi^{*} \in \mathcal{B}_{post}^{*}\}$ is equivalent to event $\{l_{post}'\xi^{*} \in \truncpost^{*}\}$. This conditioning event has positive probability, as noted in \eqref{d2:app:eq:probability.convergence}. 

\paragraph{Representing Inclusion in Truncation Set.} As a last step before proving the main result, it is useful to represent membership $\1\{t \in \truncpost^{*}\}$ in the truncation set for a given $t \in \R$ in terms of the limiting variables $(\xi^{*}, \Sigmaall^{*}, a_{post}^{*}, c_{post}^{*})$. To do so, note that if $t \in \truncpostk^{*}$ for some $k$, then this is equivalent to (i) $\Zzeropostk \geq 0$ and (ii)
\begin{align*}
    \max_{j \in J_{k}^{-}} \frac{(\cpostk^{*} - \apostk^{*})_{j} - (\Apostk r^{*})_{j}}{(\Apostk\gamma^{*})_{j}} \leq t \leq \min_{j \in J_{k}^{+}} \frac{(\cpostk^{*} - \apostk^{*})_{j} - (\Apostk r^{*})_{j}}{(\Apostk \gamma^{*})_{j}}.
\end{align*}
Condition (i) is equivalent to
\begin{align*}
    (\Apostk\gamma^{*})_{j}t + (\Apostk r^{*})_{j} &\leq (\cpostk^{*} - \apostk^{*})_{j}, \quad \forall j \in J_{k}^{0}.
\end{align*}
Condition (ii) is equivalent to the combination of these two statements:
\begin{align*}
    \frac{(\cpostk^{*} - \apostk^{*})_{j} - (\Apostk r^{*})_{j}}{(\Apostk\gamma^{*})_{j}} &\leq t, \quad \forall j \in J_{k}^{-}, \\
    \frac{(\cpostk^{*} - \apostk^{*})_{j} - (\Apostk r^{*})_{j}}{(\Apostk\gamma^{*})_{j}} &\geq t, \quad \forall j \in J_{k}^{+}.
\end{align*}
Together, the equivalent expressions for conditions (i) and (ii) show that, if $t \in \truncpostk^{*}$ for some $k$, then we equivalently have $(\Apostk\gamma^{*})_{j} t + (\Apostk r^{*})_{j} \leq (\cpostk^{*} - \apostk^{*})_{j}$ for all $j$. In other words,
\begin{align*}
    \1\curly{t \in \truncpostk^{*}} = \1\curly{\Apostk(\gamma^{*} t + r^{*}) \leq \cpostk^{*} - \apostk^{*}}.
\end{align*}
Since $t \in \truncpost^{*}$ is equivalent to the existence of $k$ such that $t \in \truncpostk^{*}$, we obtain
\begin{align}\label{d2:app:eq:membership.Z.to.B}
    \1\curly{t \in \truncpost^{*}} = \max_{k}\1\curly{\Apostk(\gamma^{*} t + r^{*}) \leq \cpostk^{*} - \apostk^{*}}.
\end{align}

\paragraph{Proving the Main Result.} We are now prepared to prove Proposition \ref{d2:app:prop:valid}.

\begin{proof}[Proof of Proposition \ref{d2:app:prop:valid}]
$ $ \newline
Under Assumptions \ref{d2:app:ass:BL}-\ref{d2:app:ass:consistent.cutoff}, to prove that
\begin{align*}
    \lim_{n \to \infty} \sup_{P \in \cPn} \bigabs{\P[P]{\hquantn \geq l_{post}'\ballP \mid \hballn \in \hBpostn} - \alpha}\P[P]{\hballn \in \hBpostn} = 0,
\end{align*}
it suffices to show that for all subsequences $\curly{n_{s}} \subseteq \curly{n}$ and $\curly{P_{n_{s}}} \in \times_{s=1}^{\infty}\cPn[n_{s}]$ that satisfy the conditions of Lemma \ref{d2:app:lemma:sufficient}, we have
\begin{align*}
    \lim_{s \to \infty} \P[\Pns]{\tquantns \geq l_{post}'\tballPns \mid \tballns \in \tBpostns} = \alpha.
\end{align*}
By Lemma \ref{d2:app:lemma:normalized.CDF}, the above is equivalent to 
\begin{align*}
    \lim_{s \to \infty} \P[\Pns]{F_{TN}(l_{post}'\tballn[n_{s}]^{*};0,\tsigmapostn[n_{s}]^{2},\ttruncpostn[n_{s}]^{*}) \geq 1-\alpha \mid \tballns \in \tBpostns} = \alpha.
\end{align*}
To verify this convergence, note that (i) $F_{TN}(l_{post}'\xi^{*};0,\sigma_{post}^{*2},\truncpost^{*})|\xi^{*} \in \mathcal{B}_{post}^{*} \sim U(0,1)$, as noted in the main text and (ii) $\P[]{\xi^{*} \in \mathcal{B}_{post}^{*}} > 0$, as per equation \eqref{d2:app:eq:probability.convergence}. Thus, if we can show that
\begin{align*}
    &\frac{\EP[\Pns]{\1\curly{F_{TN}(l_{post}'\tballn[n_{s}]^{*};0,\tsigmapostn[n_{s}]^{2},\ttruncpostn[n_{s}]^{*}) \geq 1-\alpha}\1\curly{\tballns \in \tBpostns}}}{\P[\Pns]{\tballns \in \tBpostns}} \\
    &\to \frac{\EP[]{\1\curly{F_{TN}(l_{post}'\xi^{*};0,\sigma_{post}^{*2},\truncpost^{*}) \geq 1-\alpha}\1\curly{\xi^{*} \in \mathcal{B}_{post}^{*}}}}{\P[]{\xi^{*} \in \mathcal{B}_{post}^{*}}} \\
    &= \P[]{F_{TN}(l_{post}'\xi^{*};0,\sigma_{post}^{*2},\truncpost^{*}) \geq 1-\alpha \mid \xi^{*} \in \mathcal{B}_{post}^{*}} = \alpha,
\end{align*}
then we will have verified the desired convergence. 

Note that $\P[\Pns]{\tballns \in \tBpostns} \to \P[]{\xi^{*} \in \mathcal{B}_{post}^{*}} > 0$ by equation \eqref{d2:app:eq:probability.convergence}. Thus, if suffices to show that the LHS numerator converges to the RHS numerator.

To establish convergence of the numerator, it suffices to show
\begin{align}\label{d2:app:eq:ultimate.target.for.CMT}
\begin{split}
    &\1\curly{F_{TN}(l_{post}'\tballn[n_{s}]^{*};0,\tsigmapostn[n_{s}]^{2},\ttruncpostn[n_{s}]^{*})\1\curly{\tballns \in \tBpostns} \geq 1-\alpha}\1\curly{\tballns \in \tBpostns} \\
    &\to[d] \1\curly{F_{TN}(l_{post}'\xi^{*};0,\sigma_{post}^{*2},\truncpost^{*})\1\curly{\xi^{*} \in \mathcal{B}_{post}^{*}} \geq 1-\alpha}\1\curly{\xi^{*} \in \mathcal{B}_{post}^{*}}.
\end{split}
\end{align}
To this end, note that $\P{F_{TN}(l_{post}'\xi^{*};0,\sigma_{post}^{*2},\truncpost^{*})\1\curly{\xi^{*} \in \mathcal{B}_{post}^{*}} \neq 1-\alpha} = 1$, since
\begin{align*}
    F_{TN}(l_{post}'\xi^{*};0,\sigma_{post}^{*2},\truncpost^{*})|\xi^{*} \in \mathcal{B}_{post}^{*} \sim U(0,1).
\end{align*}
Thus, if we can show that
\begin{align}\label{d2:app:eq:base.target.for.CMT}
\begin{split}
&(F_{TN}(l_{post}'\tballn[n_{s}]^{*};0,\tsigmapostn[n_{s}]^{2},\ttruncpostn[n_{s}]^{*})\1\curly{\tballns \in \tBpostns}, \1\curly{\tballns \in \tBpostns}) \\
&\to[d] (F_{TN}(l_{post}'\xi^{*};0,\sigma_{post}^{*2},\truncpost^{*})\1\curly{\xi^{*} \in \mathcal{B}_{post}^{*}}, \1\curly{\xi^{*} \in \mathcal{B}_{post}^{*}}),
\end{split}
\end{align}
then CMT will yield the convergence in \eqref{d2:app:eq:ultimate.target.for.CMT}.

To this end, recall the convergence in \eqref{d2:app:eq:stacked.CLT} yields
\begin{align}\label{d2:app:convergent.variables}
    \begin{pmatrix}
        \tballns^{*} \\
        \tSigmaalln[n_{s}] \\
        A_{post}\tballPns \\
        \tilde{c}_{post,n_{s}}
    \end{pmatrix} 
    \to[d]
    \begin{pmatrix}
        \xi^{*} \\
        \Sigmaall^{*} \\
        a_{post}^{*} \\
        c_{post}^{*}
    \end{pmatrix}.
\end{align}
Let $\mathcal{V} = \R^{\dim(\ball)} \times \mathcal{M} \times [-\infty, +\infty]^{\dim(c)} \times [-\blam_{c}, \blam_{c}]^{\dim(c)}$. Given some function $g: \mathcal{V} \to \R^{\dim(g)}$, let $\mathcal{C}(g) \subseteq \mathcal{V}$ denote the set of points at which $g$ is continuous. If we can represent the RHS of \eqref{d2:app:eq:base.target.for.CMT} as a function $g(\xi^{*}, \Sigmaall^{*}, a_{post}^{*}, c_{post}^{*})$ for which the set of continuity points $\mathcal{C}(g)$ has probability one under $\xi^{*} \sim N(0, \Sigmaall^{*})$, $\Sigmaall^{*} \in \mathcal{M}$, $\apost^{*} \in [-\infty, +\infty]^{\dim(c)}$, and $\cpost^{*} \in [-\blam_{c}, \blam_{c}]^{\dim(c)}$ in that
\begin{align*}
    \P{(\xi^{*}, \Sigmaall^{*}, a_{post}^{*}, c_{post}^{*}) \in \mathcal{C}(g)} = 1,
\end{align*}
then the convergence in \eqref{d2:app:convergent.variables} combined with CMT yields the desired convergence in \eqref{d2:app:eq:base.target.for.CMT}.

To work towards a continuous representation, we first expand the truncated CDF term. By equation \eqref{d2:app:eq:membership.Z.to.B}, we have
\begin{align*}
    &F_{TN}(l_{post}'\xi^{*};0,\sigma_{post}^{*2},\truncpost^{*})\1\curly{\xi^{*} \in \mathcal{B}_{post}^{*}} = \frac{\displaystyle \int_{-\infty}^{l_{post}'\xi^{*}} \phi\paren{\frac{t}{\sigma_{post}^{*}}}\1\curly{t \in \truncpost^{*}} dt}{\displaystyle \int_{-\infty}^{\infty}\phi\paren{\frac{t}{\sigma_{post}^{*}}}\1\curly{t \in \truncpost^{*}} dt}\1\curly{\xi^{*} \in \mathcal{B}_{post}^{*}} \\
    &= \frac{\displaystyle \int_{-\infty}^{\infty} \1\curly{t \leq l_{post}'\xi^{*}}\phi\paren{\frac{t}{\sigma_{post}^{*}}}\max_{k}\1\curly{\Apostk(\gamma^{*} t + r^{*}) \leq \cpostk^{*} - \apostk^{*}} dt}{\displaystyle \int_{-\infty}^{\infty}\phi\paren{\frac{t}{\sigma_{post}^{*}}}\max_{k}\1\curly{\Apostk(\gamma^{*} t + r^{*}) \leq \cpostk^{*} - \apostk^{*}} dt}\1\curly{\xi^{*} \in \mathcal{B}_{post}^{*}}.
\end{align*}
We now derive continuous representations for the numerator and denominator. To this end, we consider sequences $(\xi^{m}, \Sigmaall^{m}, a_{post}^{m}, c_{post}^{m}) \in \mathcal{V}$ for which
\begin{align*}
    (\xi^{m}, \Sigmaall^{m}, a_{post}^{m}, c_{post}^{m}) \to (\xi^{0}, \Sigmaall^{0}, a_{post}^{0}, c_{post}^{0}),
\end{align*}
where $(\xi^{0}, \Sigmaall^{0}, a_{post}^{0}, c_{post}^{0})$ is such that the corresponding set $\R \setminus \mathcal{T}_{g_{1}}^{0}$ given below is finite.

For the numerator, consider the function
\begin{align*}
&g_{1}(\xi^{*}, \Sigmaall^{*}, a_{post}^{*}, c_{post}^{*}) \\
&= \displaystyle \int_{-\infty}^{\infty} \1\curly{t \leq l_{post}'\xi^{*}}\phi\paren{\frac{t}{\sigma_{post}^{*}}}\max_{k}\min_{j}\1\curly{(\Apostk)_{j}(\gamma^{*} t + r^{*}) \leq (\cpostk^{*})_{j} - (\apostk^{*})_{j}} dt.
\end{align*}
For $t \in \mathcal{T}_{g_{1}}^{0} = \{t \in \R: t \neq l_{post}'\xi^{0}, (\Apostk)_{j} \gamma^{0} t \neq (\cpostk^{0})_{j} - (\apostk^{0})_{j} - (\Apostk)_{j} r^{0}, \forall k,j\}$, we have
\begin{align*}
&\1\curly{t \leq l_{post}'\xi^{m}}\phi\paren{\frac{t}{\sigma_{post}^{m}}}\max_{k}\min_{j}\1\curly{(\Apostk)_{j}(\gamma^{m} t + r^{m}) \leq (\cpostk^{m})_{j} - (\apostk^{m})_{j}} \\
&\to \1\curly{t \leq l_{post}'\xi^{0}}\phi\paren{\frac{t}{\sigma_{post}^{0}}}\max_{k}\min_{j}\1\curly{(\Apostk)_{j}(\gamma^{0} t + r^{0}) \leq (\cpostk^{0})_{j} - (\apostk^{0})_{j}},
\end{align*}
since the latter is continuous at $(\xi^{0}, \Sigmaall^{0}, a_{post}^{0}, c_{post}^{0})$. Since $\R \setminus \mathcal{T}_{g_{1}}^{0}$ is finite, its Lebesgue measure is zero; moreover, under the eigenvalue bounds for $\mathcal{M}$, the terms $\phi(t/\sigma_{post}^{m})$ are bounded above by a finite constant, uniformly over $(t,m)$. Therefore, dominated convergence theorem implies
\begin{align*}
    &g_{1}(\xi^{m}, \Sigmaall^{m}, a_{post}^{m}, c_{post}^{m}) \\
    &= \displaystyle \int_{-\infty}^{\infty} \1\curly{t \leq l_{post}'\xi^{m}}\phi\paren{\frac{t}{\sigma_{post}^{m}}}\max_{k}\min_{j}\1\curly{(\Apostk)_{j}(\gamma^{m} t + r^{m}) \leq (\cpostk^{m})_{j} - (\apostk^{m})_{j}} dt \\
    &\to \displaystyle \int_{-\infty}^{\infty} \1\curly{t \leq l_{post}'\xi^{0}}\phi\paren{\frac{t}{\sigma_{post}^{0}}}\max_{k}\min_{j}\1\curly{(\Apostk)_{j}(\gamma^{0} t + r^{0}) \leq (\cpostk^{0})_{j} - (\apostk^{0})_{j}} dt \\
    &= \displaystyle \int_{-\infty}^{\infty} \1\curly{t \leq l_{post}'\xi^{0}}\phi\paren{\frac{t}{\sigma_{post}^{0}}}\max_{k}\1\curly{\Apostk(\gamma^{0} t + r^{0}) \leq \cpostk^{0} - \apostk^{0}} dt \\
    &= g_{1}(\xi^{0}, \Sigmaall^{0}, a_{post}^{0}, c_{post}^{0}).
\end{align*}
The above argument holds for any $(\xi^{0}, \Sigmaall^{0}, a_{post}^{0}, c_{post}^{0})$ where $\R \setminus \mathcal{T}_{g_{1}}^{0}$ is finite. This finiteness holds whenever
\begin{align*}
    (\cpostk^{0}-\apostk^{0})_{j} - (\Apostk)_{j} r^{0} \neq 0, \quad \forall j:(\Apostk\gamma^{0})_{j} = 0, \quad \forall k. 
\end{align*}
Thus the set of continuity points for $g_{1}(\xi^{*}, \Sigmaall^{*}, a_{post}^{*}, c_{post}^{*})$ contains
\begin{align*}
    &\curly{(\xi^{*}, \Sigmaall^{*}, a_{post}^{*}, c_{post}^{*}): (\Apostk)_{j} r^{*} \neq (\cpostk^{*})_{j}-(\apostk^{*})_{j}, \forall j \in J_{k}^{0}, \forall k} \\
    &= \curly{(\xi^{*}, \Sigmaall^{*}, a_{post}^{*}, c_{post}^{*}): (\Apostk)_{j} \xi^{*} \neq (\cpostk^{*})_{j}-(\apostk^{*})_{j}, \forall j \in J_{k}^{0}, \forall k} \\
    &\subseteq
    \mathcal{C}(g_{1}). 
\end{align*}
For the denominator, analogous arguments show that the set of continuity points for
\begin{align*}
&g_{2}(\xi^{*}, \Sigmaall^{*}, a_{post}^{*}, c_{post}^{*}) \\
&= \displaystyle \int_{-\infty}^{\infty} \phi\paren{\frac{t}{\sigma_{post}^{*}}}\max_{k}\min_{j}\1\curly{(\Apostk)_{j}(\gamma^{*} t + r^{*}) \leq (\cpostk^{*})_{j} - (\apostk^{*})_{j}} dt
\end{align*}
also contains 
\begin{align*}
    \curly{(\xi^{*}, \Sigmaall^{*}, a_{post}^{*}, c_{post}^{*}): (\Apostk)_{j} \xi^{*} \neq (\cpostk^{*})_{j}-(\apostk^{*})_{j}, \forall j \in J_{k}^{0}, \forall k}
    \subseteq
    \mathcal{C}(g_{2}).
\end{align*}
To use these results, notice $\1\curly{\xi^{*} \in \mathcal{B}_{post}^{*}} = \max_{k}\min_{j}\1\curly{(\Apostk)_{j} \xi^{*} \leq (\cpostk^{*})_{j} - (\apostk^{*})_{j}}$, and consider that
\begin{align*}
    g_{3}(\xi^{*}, \Sigmaall^{*}, a_{post}^{*}, c_{post}^{*}) = \max_{k}\min_{j}\1\curly{(\Apostk)_{j} \xi^{*} \leq (\cpostk^{*})_{j} - (\apostk^{*})_{j}}
\end{align*}
is continuous over the set
\begin{align*}
\{(\xi^{*}, \Sigmaall^{*}, \apost^{*}, \cpost^{*}): (\Apostk)_{j} \xi^{*} \neq (\cpostk^{*})_{j} - (\apostk^{*})_{j}, \forall k,j\} \subseteq \mathcal{C}(g_{3}).
\end{align*}
Under $\xi^{*} \in \mathcal{B}_{post}^{*}$, we have $0 < g_{1} \leq g_{2}$ so that
\begin{align*}
&(F_{TN}(l_{post}'\xi^{*};0,\sigma_{post}^{*2},\truncpost^{*})\1\curly{\xi^{*} \in \mathcal{B}_{post}^{*}}, \1\curly{\xi^{*} \in \mathcal{B}_{post}^{*}}) \\
&= \paren{\frac{g_{1}(\xi^{*}, \Sigmaall^{*}, a_{post}^{*}, c_{post}^{*})}{g_{2}(\xi^{*}, \Sigmaall^{*}, a_{post}^{*}, c_{post}^{*})}\1\curly{\xi^{*} \in \mathcal{B}_{post}^{*}}, \1\curly{\xi^{*} \in \mathcal{B}_{post}^{*}}} \\
&= \paren{\frac{g_{1}(\xi^{*}, \Sigmaall^{*}, a_{post}^{*}, c_{post}^{*})}{g_{2}(\xi^{*}, \Sigmaall^{*}, a_{post}^{*}, c_{post}^{*})}g_{3}(\xi^{*}, \Sigmaall^{*}, a_{post}^{*}, c_{post}^{*}), g_{3}(\xi^{*}, \Sigmaall^{*}, a_{post}^{*}, c_{post}^{*})} \\
&= g(\xi^{*}, \Sigmaall^{*}, a_{post}^{*}, c_{post}^{*}).
\end{align*}
By the above analysis, the set of continuity points for $g$ contains
\begin{align*}
    \mathcal{C}(g_{1}) \cap \mathcal{C}(g_{2}) \cap \mathcal{C}(g_{3}) =\{(\xi^{*}, \Sigmaall^{*}, a_{post}^{*}, c_{post}^{*}): (\Apostk)_{j} \xi^{*} \neq (\cpostk^{*})_{j}-(\apostk^{*})_{j}, \forall k, j\} \subseteq \mathcal{C}(g).
\end{align*}
Thus, 
\begin{align*}
    &\P{(\xi^{*}, \Sigmaall^{*}, a_{post}^{*}, c_{post}^{*}) \in \mathcal{C}(g)} \\
    &\geq \P{(\Apostk)_{j} \xi^{*} \neq (\cpostk^{*})_{j}-(\apostk^{*})_{j}, \forall k, j} \\
    &= 1-\P{\exists (k,j): (\Apostk)_{j} \xi^{*} = (\cpostk^{*})_{j}-(\apostk^{*})_{j}} \\
    &\geq 1 - \sum_{(k,j)}\P{(\Apostk)_{j} \xi^{*} = (\cpostk^{*})_{j}-(\apostk^{*})_{j}} \\
    &= 1,
\end{align*}
where the last equality follows from continuity of $(\Apostk)_{j}\xi^{*} \sim N(0, (\Apostk)_{j}\Sigmaall^{*}(\Apostk)_{j}')$, since $(\Apostk)_{j} \neq 0$ for all $(k,j)$. To conclude, CMT combined with \eqref{d2:app:convergent.variables} yields the desired convergence in \eqref{d2:app:eq:base.target.for.CMT}: 
\begin{align*}
\begin{split}
&(F_{TN}(l_{post}'\tballn[n_{s}]^{*};0,\tsigmapostn[n_{s}]^{2},\ttruncpostn[n_{s}]^{*})\1\curly{\tballns \in \tBpostns}, \1\curly{\tballns \in \tBpostns}) \\
&= g(\tballns^{*},  \tSigmaalln[n_{s}],  A_{post}\tballPns, \tilde{c}_{post,n_{s}}) \\
&\to[d] g(\xi^{*}, \Sigmaall^{*}, a_{post}^{*}, c_{post}^{*}) \\
&= (F_{TN}(l_{post}'\xi^{*};0,\sigma_{post}^{*2},\truncpost^{*})\1\curly{\xi^{*} \in \mathcal{B}_{post}^{*}}, \1\curly{\xi^{*} \in \mathcal{B}_{post}^{*}}).
\end{split}
\end{align*}
\end{proof}

\end{document}